\newcommand{\email}[1]{\href{mailto:#1}{#1}}
\numberwithin{equation}{section}
\newtheorem{theorem}{Theorem}
\newtheorem{proposition}[theorem]{Proposition}
\theoremstyle{remark}
\newtheorem{remark}[theorem]{Remark}
\theoremstyle{definition}
\newcommand{\Real}{\mathbb{R}}
\DeclareMathOperator{\sign}{sign}
\DeclareRobustCommand{\bvec}[1]{\boldsymbol{#1}}
  \renewcommand{\bvec}[1]{#1}%
\DeclareMathOperator{\CURL}{\bf curl}
\DeclareMathOperator{\DIV}{div}
\newcommand{\kform}[1][k]{\Lambda^{#1}}
\newcommand{\tkform}[1][k]{\kform[#1]_{\perp}}
\newcommand{\ed}{{\rm d}}        
\newcommand{\sed}{\hat{\rm d}}   
\newcommand{\ip}[1]{i_{#1}}  
\newcommand{\ipn}{\ip{\bvec{n}}}
\newcommand{\ipm}{\ip{\bvec{m}}}
\newcommand{\ep}[1]{j_{#1}}      
\newcommand{\epn}{\ep{\bvec{n}}}
\newcommand{\hs}{\star}
\newcommand{\shs}{\hat{\hs}}     
\newcommand{\chs}{{\hs}_c}     
\newcommand{\ld}[1]{{\mathscr{L}}_{#1}}    
\newcommand{\ldn}{\ld{\bvec{n}}}
\newcommand{\ldm}{\ld{\bvec{m}}}
\newcommand{\inc}{\iota}
\newcommand{\Poly}[2][]{\mathcal{P}_{#2}^{#1}}
\newcommand{\ded}[2]{\ed^{#1}_{#2}}
\newcommand{\ued}[2]{\underline{\ed}^{#1}_{#2}}
\newcommand{\Xec}[2]{\underline{X}^{#1}_{#2}}
\newcommand{\Lpdf}[3]{L^{#1}\Lambda^{#2}(#3)}
\newcommand{\Pkdf}[3]{\Poly{#1}\Lambda^{#2}(#3)}
\newcommand{\tPkdf}[3]{\Poly[-]{#1}\Lambda^{#2}(#3)}
\newcommand{\Pec}[2]{P^{#1}_{#2}}
\newcommand{\Iec}[2]{\underline{I}^{#1}_{#2}}
\newcommand{\tr}{\mathrm{tr}}
\newcommand{\ddt}{\frac{\partial}{\partial t}}
\newcommand{\Mh}[1][h]{\mathcal{M}_{#1}}
\newcommand{\Th}[1][h]{\mathcal{T}_{#1}}
\newcommand{\Fh}[1][h]{\mathcal{F}_{#1}}
\newcommand{\Eh}[1][h]{\mathcal{E}_{#1}}
\newcommand{\Vh}{\mathcal{V}_h}
\newcommand{\Hcurl}[1]{\bvec{H}(\CURL;#1)}
\newcommand{\Hdiv}[1]{\bvec{H}(\DIV;#1)}
\newcommand{\ul}[2][]{\underline{#2}_{#1}}
\newcommand{\ulf}[1]{\ul[f]{#1}}
\newcommand{\ulh}[1]{\ul[h]{#1}}
\renewcommand{\exp}{{\rm e}}
\begin{document}

\title{A polytopal discrete de Rham scheme for the exterior calculus Einstein's equations}

\author[1]{Todd A. Oliynyk}
\author[1]{Jia Jia Qian}

\affil[1]{School of Mathematics, Monash University, Melbourne, Australia, \email{todd.oliynyk@monash.edu}, \email{jia.qian@monash.edu}}

\maketitle

\abstract

In this work, based on the $3+1$ decomposition in \cite{Fecko:97, Olivares.Peshkov.ea:22}, we present a fully exterior calculus breakdown of spacetime and Einstein's equations. Links to the orthonormal frame approach \cite{Van-Elst.Uggla:97} are drawn to help understand the variables in this context. Two formulations are derived, discretised and tested using the exterior calculus discrete de Rham complex \cite{Bonaldi.Di-Pietro.ea:25}, and some discrete quantities are shown to be conserved in one of the cases.

\section{Introduction}

Numerical relativity is the field of solving Einstein's field equations using numerical techniques, enabling the modelling of general relativistic phenomena such as binary black holes, that is then followed by real-life observations such as the first detection of gravitational waves \cite{Abbott.Abbott.ea:16} by the Laser Interferometer Gravitational-Wave Observatory (LIGO). Since the proposal of the Arnowitt–Deser–Misner (ADM) formulation \cite{Arnowitt.Deser.ea:08, York:79} to the breakthrough of the first stable, long term evolution of black holes using generalised harmonic coordinates \cite{Pretorius:05}, the development and refinement of methods continues as we increase our understanding of these equations.

The design of numerical methods for partial differential equations (PDEs) generally begins by first recasting the equations as an initial value problem. The most common approach for Einstein's equations is to consider the $4$-dimensional spacetime as a family of $3$-dimensional spacelike hypersurfaces paramatrised by time, called the $3+1$ formalism of general relativity (GR) \cite{Gourgoulhon:12}. This decomposes Einstein's equations into a set of evolution and constraint equations. In the continuous setting, the constraints propagate; the evolution makes it so that they stay true for all time if they are true initially. In practice, this property often fails to hold, in particular for schemes based on the famous strongly hyperbolic Baumgarte-Shapiro-Shibata-Nakamura-Oohara-Kojima (BSSNOK) \cite{Nakamura.Oohara.ea:87, Shibata.Nakamura:95, Baumgarte.Shapiro:98, Beyer.Sarbach:04} system. It has been shown that in certain cases, the growing constraint violations can be catastrophic, impacting the stability of the simulation, e.g.\cite{Brodbeck.Frittelli.ea:99, Alic.Bona-Casas.ea:12}. To deal with this, there are procedures that employ constraint dampening, such as those based on the Z4/CCZ4/Z4c formalism \cite{Bona.Ledvinka.ea:03, Alic.Bona-Casas.ea:12, Alic.Kastaun.ea:13, Bernuzzi.Hilditch:10}, that use Lagrange multipliers in order to control the deviation of the constraints \cite{Gundlach.Calabrese.ea:05}. These methods do not attempt at all to achieve exact constraint preservation, that is nevertheless an innate quality of the equations.

The philosophy of compatible discretisations is to reproduce properties of the continuous equations by using numerical methods that mimic the underlying geometric structures. In computational electromagnetism, it is widely known that the divergence free condition of the magnetic field $\DIV B=0$ can be naturally obtained by respecting the property $\DIV\CURL=0$. This is done by using discrete fields and operators that form a discrete complex. Examples include the standard finite element methods (FEM), or its generalisation the finite element exterior calculus (FEEC) \cite{Arnold.Falk.ea:10}, discrete exterior calculus \cite{Desbrun.Hirani.ea:05}, compatible discrete operators \cite{Arnold.Bochev.ea:07}, and finally the discrete de Rham (DDR) method \cite{Di-Pietro.Droniou.ea:20, Di-Pietro.Droniou:21}. Some work has been done in \cite{Quenneville-Belair:15, Hu.Liang.ea:22} that applies certain discrete complexes to the Hodge wave formulation of the linearised Einstein--Bianchi system \cite{Friedrich:96, Anderson.Choquet-Bruhat.ea:97}, a formalism based on a decomposition of the Weyl tensor that resembles the Maxwell equations, to get naturally the linearised Bianchi identity. The application of these techniques to Einstein is still underexplored, partially due to the need to reformulate GR into suitable systems that can properly take advantage of these approaches. A recent development in this direction is presented in \cite{Olivares.Peshkov.ea:22}, where a $3+1$ decomposition of the exterior calculus Einstein's equations (see also \cite{Frauendiener:06}) leads to a set of equations with a Maxwell-like structure. In this article, we follow a similar approach as \cite{Olivares.Peshkov.ea:22}, but insist on preserving the coordinate free aspect where possible in the equations by introducing spatial differential form operators using the $3+1$ theory from \cite{Fecko:97}. The interest in such a presentation is to make it more accessible for the broader numerical community that may not be familiar with the component form of these operators. Furthermore, for certain discretisations methods, where there is no direct discrete analogy of a component and the exterior derivative is handled through a single operator, it is preferable to take the coordinate free viewpoint where possible.

The exterior calculus discrete de Rham (ECDDR) complex \cite{Bonaldi.Di-Pietro.ea:25} is a fully discrete complex, replacing both the differential forms and exterior derivative by discrete constructions, while reproducing key homological properties. The main features include an arbitrary-order of accuracy and the handling of general polytopal meshes, that enables to better capture the behaviour of the solution. For complex, higher-order systems, the workload is easily parallelisable, and cost reduction techniques such as static condensation and serendipity processes can also be applied. Such polytopal methods have seen use for the Yang--Mills equations \cite{Droniou.Oliynyk.ea:23}, the Navier--Stokes equations \cite{Di-Pietro.Droniou.ea:24}, where in some cases, certain discrete versions of the constraints is shown to propagate exactly. To our knowledge we present here the first application of such an approach to the full set of GR equations; the vast majority of codes use finite differencing in space (e.g. Einstein toolkit \cite{Loffler.Faber.ea:12}, BAM \cite{Brugmann.Gonzalez.ea:08}, LazEv \cite{Campanelli.Lousto.ea:06}) with more rarely psudospectral \cite{Boyle.Brown.ea:07} or discontinuous Galerkin \cite{Teukolsky:16, Dumbser.Zanotti.ea:24}. 

As a first exploratory work that applies polytopal methods to numerical relativity, we attempt to provide as much detail as possible on the numerical relativity setting, including elements that may be considered standard knowledge, so that the construction is accessible to those that are not familiar with Einstein's equations. This paper is ordered as follows. Section \ref{sec:setting} fixes some definitions and conventions that we use for the $3+1$ decomposition of spacetime. Section \ref{sec:ec.decomp} presents the $3+1$ decomposition of exterior calculus following the separation of spacetime, including the split of the four dimensional exterior derivative and Hodge star operator. Section \ref{sec:einstein} introduces the spacetime Einstein equations in terms of differential forms, before using the theory of the preceding sections to write it as a initial value problem. The detailed calculations are collected in the appendix \ref{app:calc}. A brief overview of ECDDR  is given in Section \ref{sec:disc}, then the discrete schemes are shown, and we discuss how the nonlinearity and exact constraint preservation is handled (see also appendix \ref{sec:disc.formulas}). Finally, Section \ref{sec:tests} contains the numerical results.

\section{Setting}\label{sec:setting}

In this work we use the following index conventions: greek letters denote the spacetime indices ranging from 0 to 3, while lowercase latin letters the spatial indices (i, j, k, etc.) of the orthonormal frame, and capital latin letters the spatial indices of the canonical basis. We also use the Einstein summation: in any term, indices that appear once in superscript and once in subscript are implicitly summed over. For example,
\begin{equation*}
  T\indices{^\mu_{jk}}v^jw^k=\sum_{j=1}^3\sum_{k=1}^3T\indices{^\mu_{jk}}v^jw^k,\qquad \mu=0,1,2,3.
\end{equation*}

We follow the setting in \cite[Section 4]{Gourgoulhon:12}. Let $M$ be a $4$-dimensional Lorentzian manifold with metric $g$ with signature $(-,+,+,+)$. Let $(M,g)$ be a also a globally hyperbolic spacetime; i.e. it admits a spacelike 3-dimensional hypersurface $\Sigma$ that intersects each timelike or null curve exactly once ($M\cong \Sigma\times\Real$). Then we take a foliation of $(M,g)$; a smooth, regular (non-vanishing gradient) scalar field $t$ on $M$ such that the level sets
\begin{equation*}
  \Sigma_{\hat{t}}\coloneq\{p\in M; t(p)=\hat{t}\},\qquad\forall \hat{t}\in\Real,
\end{equation*}
are spacelike and $(\Sigma_{\hat{t}})_{\hat{t}\in\Real}$ is a partition of $M$.

Denote by $\flat:TM\to T^*M$ and $\sharp:T^*M\to TM$ the isomorphisms between vector fields and $1$-forms induced by the metric. The future directed unit normal vector (field) is
\begin{equation}\label{def:n}
\bvec{n}\coloneq-N(\ed t)^\sharp,
\end{equation}
where $N\coloneq(-g((\ed t)^\sharp,(\ed t)^\sharp))^{-\frac12}$ is called the lapse function. This is future directed in the sense $g(\bvec{n},\bvec{n})=-1$. 
The normal evolution vector is the scaled normal vector 
\begin{equation}\label{def:m}
  \bvec{m}\coloneq N\bvec{n}.
\end{equation}

Given a vector field $X$ on $M$ satsifying $\ed t(X)=1$, we define the shift vector field $\beta$ relative to $X$ as
\begin{equation*}
\beta \coloneq -m + X.
\end{equation*}
By definition, $\beta$ is the purely spatial part of $X$, since the calculation $\ed t(\beta) = -\ed t(m) + \ed t(X)=-1+1=0$ is always true. So in fact the choice of $X$ is completely parametrised by the spatial vector field $\beta$, with the trivial choice of $\beta=0$ being equivalent to setting $X=\bvec{m}$.

Given spatial coordinates $(\tilde{x}^i)$ on a single (spatial) slice $\Sigma_{\hat{t}_0}$ of the hypersurface, we can extend to a global consistent coordinate system $(t, x^i)$ on $M$ by solving the initial value problem
\begin{equation*}
\ld{X} x^i = 0, \quad x^i|_{\Sigma_{\hat{t}_0}} = \tilde{x}^i.
\end{equation*}
In these coordinates, we have that $X=\frac{\partial}{\partial t}$, and the expression $\frac{\partial}{\partial t}=\bvec{m}+\beta$.

The induced $3$-dimensional metric on $\Sigma_{\hat{t}}$ is 
\begin{equation}\label{eq:h}
h\coloneq g+\bvec{n}^\flat\otimes\bvec{n}^\flat.
\end{equation}

\section{$3+1$ exterior calculus decomposition}\label{sec:ec.decomp}

In this section, we split the $4$-dimensional exterior calculus structure on $M$ into functions and operators on the $3$-dimensional slices, based on the work in \cite{Fecko:97} that uses a more generalised congruence method to split the spacetime.

\subsection{Vector fields}

For every $t\in\Real$, $p\in \Sigma_t$, we denote the inclusion map by $\inc_t:\Sigma_t\to M$, and identify $T_p\Sigma_t$ with a subspace of $T_pM$ using the push-forward $(\inc_t)_*:T_p\Sigma_t\to T_pM$. The normal vector then splits the tangent space into
\begin{equation}\label{eq:TpM}
  T_pM=T_p\Sigma_t\oplus \textrm{span}(\bvec{n}),
\end{equation}
and we define the orthogonal projector 
\begin{equation*}
  \pi_t:T_pM\to T_p\Sigma_t, \qquad \pi_tX\coloneq X+g(\bvec{n},X)\bvec{n},
\end{equation*}
which naturally induces a map
\begin{equation*}
  \pi^*_t:T^*_p\Sigma_t\to T^*_pM, \qquad (\pi_t^*\omega)X\coloneq \omega(\pi_t X)\quad \forall X\in T_pM.
\end{equation*}

\subsection{Differential forms}

Let $X$ be a vector field on $M$. Define the interior product $\ip{X}:\kform(M)\to\kform[k-1](M)$ by $\ip{X}\omega\coloneq\omega(X,\cdots)$, and exterior product $\ep{X}:\kform(M)\to\kform[k+1](M)$ by $\ep{X}\omega\coloneq X^\flat\wedge\omega$. Then for any $k$-form $\omega$,
\[
(\ep{X}\ip{X}+\ip{X}\ep{X})\omega=\cancel{\ep{X}\ip{X}\omega}+(\ip{X}X^\flat)\wedge\omega-\cancel{X^\flat\wedge(\ip{X}\omega)}=g(X,X)\omega
\] 
and so with $X=\bvec{n}$, we get the decomposition 
\begin{equation}\label{eq:decomp}
\omega=-(\epn\ipn+\ipn\epn)\omega=-\bvec{n}^\flat\wedge\ipn\omega-\ipn\epn\omega\eqcolon-\bvec{n}^\flat\wedge \bar{\omega}+\hat{\omega},
\end{equation}
where $\bar{\omega}\coloneq\ipn\omega$ and $\hat{\omega}\coloneq-\ipn\epn\omega$. The operations $\epn\ipn$ and $\ipn\epn$ are orthogonal projections satisfying $(\epn\ipn)(\ipn\epn)=0=(\ipn\epn)(\epn\ipn)$ and $-(\epn\ipn+\ipn\epn)=1$. So this is indeed a decomposition of $\omega$ that splits the form into a part containing $\bvec{n}^\flat$ and a purely spatial part.

This leads to the definition of spatial $k$-forms as the subspace
\begin{equation}\label{def:sp.kform}
  \tkform(M)\coloneq\{\omega\in\kform(M);\omega=-\ipn\epn\omega\Leftrightarrow\ipn\omega=0\}.
\end{equation} 

These spaces with the wedge operator form a graded algebra $(\bigoplus_{k=0}^4\tkform(M),\wedge)$, which is easily seen using the second characterisation in \eqref{def:sp.kform}: For any $\omega\in\tkform(M)$ and $\mu\in\tkform[l](M)$, 
\begin{align*}
  \ipn(\omega\wedge\mu)=(\ipn\omega)\wedge\mu+(-1)^k\omega\wedge(\ipn\mu)=0,
\end{align*}
thus $\omega\wedge\mu\in\tkform[l](M)$.

\subsection{Exterior derivative}

With the same idea as for differential forms, we can examine the action of the spacetime $\ed$ on \eqref{eq:decomp} and look for a decomposition of the $\ed$ operator. A necessary formula when dealing with exterior derivatives and interior products is Cartan's magic formula: For any vector field $X$, $\ld{X}$ acts on differential forms as
\begin{equation}\label{eq:magic}
  \ld{X}=\ed\ip{X}+\ip{X}\ed.
\end{equation}
It is important to note that $\ldn,\ldm:\tkform(M)\to\tkform(M)$; i.e. they map spatial forms to spatial forms. This is easily seen using \eqref{eq:magic} and calculating $\ipn\ldn$ (resp. $\bvec{m}$),
\begin{equation*}
  \ipn\ldn=\cancel{\ipn\ed\ipn}+\cancel{\ipn\ipn\ed}=0,
\end{equation*}
since $\ipn$ is $0$ on spatial forms, and $\ipn\ipn=0$.

Define the spatial exterior derivative on spatial forms as $\sed:\tkform(M)\to\tkform[k+1](M), \sed\coloneq-\ipn\epn\ed$. This is equivalent to the exterior derivative on each $\Sigma_t$, which we can see by
\begin{equation*}
  -\inc^*_t\ipn\epn\ed\omega=-\inc^*_t(\ipn\bvec{n}^\flat\wedge\ed\omega-\bvec{n}^\flat\wedge\ipn\omega)=\inc^*_t\ed\omega=\ed\inc^*_t\omega,
\end{equation*}
using that $\inc^*_t\bvec{n}^\flat=0$ since $\bvec{n}^\flat$ disappears on $T\Sigma_t$, and that the pull-back commutes with the exterior derivative. This operator naturally satisfies the spatial identity $\sed\sed=0$.

We first deal with the action of $\ed$ on $\bvec{n}^\flat$ using the definition \eqref{def:n}; the decomposition of $\ed\bvec{n}^\flat$ can be calculated as
\begin{equation}\label{eq:dn}
\ed\bvec{n}^\flat=-\ed N\wedge\ed t=\ed N\wedge\frac1N\bvec{n}^\flat=-\bvec{n}^\flat\wedge(\frac1N\sed N),
\end{equation}
where the $\sed$ in the last term comes from \eqref{eq:decomp}, replacing $\omega$ by $\ed N$. More generally for a spatial form $\mu$, we can use \eqref{eq:decomp} to write
\begin{align}
  \ed\mu&=-\bvec{n}^\flat\wedge\ipn\ed\mu+\sed\mu \nonumber\\
        &=-\bvec{n}^\flat\wedge(\ipn\ed+\ed\ipn)\mu+\sed\mu \nonumber\\
        &=-\bvec{n}^\flat\wedge\ldn\mu+\sed\mu, \label{eq:spat.d}
\end{align}
where $\ed\ipn$ can be inserted in the second line because it vanishes on spatial forms, and \eqref{eq:magic} for the last line. 
Applying $\ed$ to \eqref{eq:spat.d} yields
\begin{align*}
  0&=-\ed\bvec{n}^\flat\wedge\ldn\mu+\bvec{n}^\flat\wedge\ed\ldn\mu+\ed\sed\mu \\
   &=-\bvec{n}^\flat\wedge(-\frac{1}{N}\sed N\wedge\ldn\mu-\sed\ldn\mu+\ldn\sed\mu)+\sed\sed\mu, \\
   &=-\bvec{n}^\flat\wedge(-\sed\ldm\mu+\ldm\sed\mu)+\sed\sed\mu,
\end{align*}
which shows, in particular, that $\sed\sed=0$ and $[\sed,\ldm]=0$.

Then taking $\ed$ of \eqref{eq:decomp}, using \eqref{eq:dn} and \eqref{eq:spat.d} since $\bar{\omega}$ and $\hat{\omega}$ are spatial, we get
\begin{align}
\ed\omega&{}=-\ed\bvec{n}^\flat\wedge \bar{\omega}+\bvec{n}^\flat\wedge\ed \bar{\omega}+\ed \hat{\omega} \nonumber\\
&=\bvec{n}^\flat\wedge(\frac1N\sed N\wedge\bar{\omega}+\sed \bar{\omega}-\ld{\bvec{n}}\hat{\omega})+\sed \hat{\omega} \nonumber \\
&=-\bvec{n}^\flat\wedge \frac1N(\ld{\bvec{m}}\hat{\omega}-\sed(N\bar{\omega}))+\sed \hat{\omega}, \label{eq:full.d}
\end{align}
where in order to get the final line, we use $\sed N\wedge\bar{\omega}+N\wedge\sed\bar{\omega}=\sed(N\bar{\omega})$, and that $N\ld{\bvec{n}}=\ld{\bvec{m}}$ on spatial forms. This last property is easy to see using \eqref{eq:magic}, since both $\ipn$ and $\ipm$ vanish on spatial forms
\begin{equation*}
  N\ld{\bvec{n}}=\underbrace{N\ed\ipn}_{=0}+N\ipn\ed=\underbrace{\ed\ipm}_{=0}+\ipm\ed=\ldm.
\end{equation*}

\subsection{Hodge star}\label{sec:3+1hs}
In the following, we let $(e_\alpha)_{\alpha\in[0,3]}$ be a right-handed $g$-orthonormal basis of $TM$ such that $e_0=\bvec{n}$, and $(\theta^\alpha)_{\alpha\in[0,3]}$ the dual basis. In this basis, the frame components of the metric and its inverse are given by
\begin{equation*}
g_{\alpha\beta}:=g(e_\alpha,e_\beta)= -\delta_{\alpha}^0\delta_{\beta}^0 +\delta_\alpha^i \delta_\beta^j \delta_{ij}
\quad \text{and} \quad 
g^{\alpha\beta}:=g(\theta^\alpha,\theta^\beta)= -\delta^{\alpha}_0\delta^{\beta}_0 +\delta^\alpha_i \delta^\beta_j \delta^{ij},
\end{equation*}
respectively, and the metric can be expresses as
\begin{equation*}
g=g_{\alpha\beta}\theta^\alpha \otimes \theta^\beta.
\end{equation*}
Throughout this article, we will use without comment the frame metric components $g_{\alpha\beta}$ and its inverse to lower and raise spacetime frame indices, e.g.~$\alpha,\beta,\gamma$. Similarly, we will raise and lower spatial frame indices, e.g.~$i,j,k$, with the spatial frame metric components $h_{ij}=\delta_{ij}$.

From this choice of frame, we have $(e_i)_{i\in[1,3]}$ is a spatial basis of $\Sigma_t$, and $\theta^0=-\bvec{n}^\flat$. 
The induced volume form on $M$ has the presentation $\varepsilon_g=\theta^0\wedge\theta^1\wedge\theta^2\wedge\theta^3$, and from the definition of $h$ \eqref{eq:h}, the spatial volume form is $\varepsilon_h=\theta^1\wedge\theta^2\wedge\theta^3$. It is easy to see in this basis that $\ipn(\varepsilon_g)=\ip{e_0}(\varepsilon_g)=\varepsilon_h$ because $\ip{e_0}\theta^i=\delta^i_0$, and the calculation
\begin{equation}\label{eq:vol.form}
  \ip{e_0}(\varepsilon_g)=\ip{e_0}(\theta^0)\wedge\theta^1\wedge\theta^2\wedge\theta^3=\varepsilon_h.
\end{equation}
In orthonormal basis, this is equivalent to $\varepsilon_{0ijk}=\varepsilon_{ijk}$, and note that in index notation we omit the subscript $g$ and $h$ as it should be clear from context which is which.

Let the Hodge star operator on $(M,g)$ be $\star$, defined as in \ref{def:hodge.star}. In the same way, we let the $3$-dimensional Hodge star on the spatial slices $(\Sigma_t,h)$ be $\shs $. Then the action of the general Hodge star on any basis $k$-form can be split into two cases using $\shs $:
\begin{enumerate}
\item Basis $k$-form contains $\theta^0$, $i_1\cdots i_{k-1}$ spatial indices:
      \begin{equation*}
        \star(\theta^0\wedge\theta^{i_1}\wedge\cdots\wedge\theta^{i_{k-1}})=\frac{1}{(4-k)!}\varepsilon\indices{^{0i_1\cdots i_{k-1}}_{i_{k}\cdots i_3}}\theta^{i_k}\wedge\cdots\wedge\theta^{i_3},
      \end{equation*}
      and by \eqref{eq:vol.form}, we replace $\varepsilon\indices{^{0i_1\cdots i_{k-1}}_{i_{k}\cdots i_3}}=-\varepsilon\indices{_0^{i_1\cdots i_{k-1}}_{i_{k}\cdots i_3}}$ and $4-k=3-(k-1)$ to see the appearance of $\shs $ ($i_1\cdots i_3$ spatial indices), and thus
      \begin{align*}
        \star(\theta^0\wedge\theta^{i_1}\wedge\cdots\wedge\theta^{i_{k-1}})&=-\frac{1}{(3-(k-1))!}\varepsilon\indices{_0^{i_1\cdots i_{k-1}}_{i_{k}\cdots i_3}}\theta^{i_k}\wedge\cdots\wedge\theta^{i_3}\\
        &=-\shs (\theta^{i_1}\wedge\cdots\wedge\theta^{i_{k-1}}).
      \end{align*}
      In the special case that $k=1$, i.e. $\star\theta^0$, we see the above calculation leads to
      \begin{align*}
        \star\theta^0&=-\frac{1}{3!}\varepsilon\indices{_{0i_{1}\cdots i_3}}\theta^{i_1}\wedge\cdots\wedge\theta^{i_3}\\
        &=-\theta^1\wedge\theta^2\wedge\theta^3=-\shs (1).
      \end{align*}
\item Basis $k$-form is spatial:
      \begin{align*}
        \star(\theta^{i_1}\wedge\cdots\wedge\theta^{i_{k}})&=\frac{1}{(4-k)!}\varepsilon\indices{^{i_1\cdots i_{k}}_{i_{k+1}\cdots i_4}}\theta^{i_{k+1}}\wedge\cdots\wedge\theta^{i_4}\\
        &=\frac{4-k}{(4-k)!}\varepsilon\indices{^{i_1\cdots i_{k}}_{0i_{k+1}\cdots i_3}}\theta^{0}\wedge\theta^{i_{k+1}}\wedge\cdots\wedge\theta^{i_3}\\
        &=\frac{(-1)^k}{(3-k)!}\varepsilon\indices{_0^{i_1\cdots i_{k}}_{i_{k+1}\cdots i_3}}\theta^{0}\wedge\theta^{i_{k+1}}\wedge\cdots\wedge\theta^{i_3}\\
        &=(-1)^k\theta^0\wedge\shs (\theta^{i_{1}}\wedge\cdots\wedge\theta^{i_k}),
      \end{align*}
      where in the second line, we can fix one of $i_{k+1}\cdots i_{4}$ as $0$, since $i_{1}\cdots i_{k}$ are all spatial, and collect the sums by antisymmetry of $\varepsilon$ and $\wedge$, in the third we use only the antisymmetry of $\varepsilon$ to pick up $(-1)^k$, and finally recombine in the last line to see $\shs $.
\end{enumerate}

Applying these formulas to the decomposition of a $k$-form $\omega$ as in \eqref{eq:decomp}, recalling that $\bar{\omega}$ and $\hat{\omega}$ are spatial and $-\bvec{n}^\flat=\theta^0$ to get
\begin{equation}\label{eq:star.decomp}
  \star\omega=\star(-\bvec{n}^\flat\wedge \bar{\omega})+\star \hat{\omega}=-(-1)^k\bvec{n}^\flat\wedge\shs \hat{\omega}-\shs \bar{\omega}.
\end{equation}

\section{Einstein's equations}\label{sec:einstein}

Letting $\nabla$ denote the Levi-Civita connection of $g$, the connection coefficients $\omega^\gamma{}_{\beta\alpha}{}$
associated to the orthonormal frame $(e_\alpha)_{\alpha\in[0,3]}$ are defined by
\begin{equation*}
\nabla_{e_\alpha} e_{\beta}=\omega\indices{^\gamma_{\beta\alpha}}e_\gamma \quad \Longleftrightarrow \quad
\theta^\gamma(\nabla_{e_\alpha}e_\beta)=\omega\indices{^\gamma_{\beta\alpha}},
\end{equation*}
which in turn, we use to define the connection 1-forms $\omega\indices{^\gamma_\beta}$ via
\begin{equation*}
\omega\indices{^\gamma_\beta} =\omega\indices{^\gamma_{\beta\alpha}}\theta^\alpha.
\end{equation*} 
Since the connection $\nabla$ is metric, the torsion vanishes and the following Cartan structure equations and metric compatibility conditions hold (see \cite[Section 7.8]{Nakahara:03} for details on the Cartan frame formalism):
\begin{equation}
\ed\theta^\alpha+\omega\indices{^\alpha_\beta}\wedge\theta^\beta=0\label{eq:torsion}
\end{equation}
and
\begin{equation}\label{eq:metric.comp}
 \omega_{\alpha\beta}+\omega_{\beta\alpha}=0.
\end{equation}
The curvature $2$-form $\Omega\indices{^\alpha_\beta}$ is then defined by
\begin{equation}
  \Omega\indices{^\alpha_\beta}=\ed\omega\indices{^\alpha_\beta}+\omega\indices{^\alpha_\gamma}\wedge\omega\indices{^\gamma_\beta}.\label{eq:curvature}
\end{equation}
The standard Riemann tensor of the metric in this basis are the components of the curvature $2$-forms
\begin{align}\label{eq:Riem}
  \Omega\indices{^\alpha_\beta}
  ={}& R\indices{^\alpha_{\beta\mu\nu}}\theta^\mu\wedge\theta^\nu \\
  ={}&\Bigl(e_\mu \bigl(\omega\indices{^\alpha_{\beta\nu}}\bigr)-e_\nu\bigl(\omega\indices{^\alpha_{\beta\mu}}\bigr)+\omega\indices{^\alpha_{\gamma\mu}}\omega\indices{^\gamma_{\beta\nu}}-\omega\indices{^\alpha_{\gamma\nu}}\omega\indices{^\gamma_{\beta\mu}}
-\omega\indices{^\alpha_{\beta\gamma}}\omega\indices{^\gamma_{\nu\mu}}+\omega\indices{^\alpha_{\beta\gamma}}\omega\indices{^\gamma_{\mu\nu}}\Bigr)\theta^\mu\wedge\theta^\nu. \nonumber
\end{align}

The collections of forms in these equations can be interperated together as sections of a principal bundle that acts on the tangent bundle, in a similar way to the Yang--Mills equations. The principal difference however in Einstein's equations is the interaction of these Lie algebra indices with the indices of the differential forms.

\subsection{Differential form formulation}

For convenience we introduce the Hodge star of the wedge of 1-forms as
\begin{equation}\label{def:hsBasis}
  \Sigma^{\alpha_1\cdots\alpha_k}\coloneq \star\bigl(\theta^{\alpha_{1}}\wedge \cdots \wedge \theta^{\alpha_{k}}\bigr).
\end{equation}
Note that this is a $(n-k)$-form (although there are $k$ indices, they are not the components of some tensor). Lowering the indices using the metric, we get what are called ``hypersurface forms'' in \cite{Frauendiener:06, Olivares.Peshkov.ea:22}, i.e. the forms
\begin{equation}\label{eq:hs.forms}
\Sigma_{\alpha_1\cdots\alpha_k}:= g_{\alpha_1\beta_1}\cdots g_{\alpha_k\beta_k} \Sigma^{\beta_1\cdots\beta_k}=\frac{1}{(4-k)!}\varepsilon_{\alpha_1\cdots\alpha_k\alpha_{k+1}\cdots\alpha_4}\theta^{\alpha_{k+1}}\wedge\cdots\wedge\theta^{\alpha_4}.
\end{equation}
By \eqref{eq:double.hodge}, we have the relation
\begin{equation*}
  \star\Sigma^{\alpha_1\cdots\alpha_k}=(-1)^{k(n-k)+1}\bigl(\theta^{\alpha_{1}}\wedge \cdots \wedge \theta^{\alpha_{k}}\bigr),
\end{equation*}
or equally,
\begin{equation}\label{eq:wedge.hs}
  \theta^{\alpha_{1}}\wedge\cdots\wedge\theta^{\alpha_{k}}
  =-\frac{1}{(4-k)!}\varepsilon^{\alpha_{k+1}\cdots\alpha_4\alpha_1\cdots\alpha_k}\Sigma_{\alpha_{k+1}\cdots\alpha_4}.
\end{equation}

The standard tensor formulation of the vacuum Einstein equations is
\begin{equation}\label{eq:Einstein.tensor}
  G_{\mu\nu}\coloneq R_{\mu\nu}-\frac{1}{2}Rg_{\mu\nu}=0,
\end{equation}
where $G_{\mu\nu}$ is the Einstein tensor, $R_{\mu\nu}$ the Ricci tensor \eqref{def:ricci} and $R$ the scalar curvature \eqref{def:sc}. The equivalent differential form version of Einstein equations is obtained as follows. Define the $2$-forms $\mathcal{L}_\alpha$ as
\begin{equation}\label{def:L}
  \mathcal{L}_\alpha=-\frac{1}{2}\varepsilon_{\alpha\beta\mu\nu}\omega^{\beta\mu}\wedge\theta^\nu=-\frac{1}{2}\omega^{\beta\mu}\wedge\Sigma_{\alpha\beta\mu}.
\end{equation}
Then applying the exterior derivative, using the fact that the components of $\varepsilon$ and $g$ are constant in this basis and Cartan's structure equations \eqref{eq:torsion} and \eqref{eq:curvature}, we get the equation
\begin{align}\nonumber
  \ed \mathcal{L}_\alpha
  &=-\frac{1}{2}(\ed\omega^{\beta\mu}\wedge\Sigma_{\alpha\beta\mu}-\varepsilon_{\alpha\beta\mu\nu}\omega^{\beta\mu}\wedge\ed\theta^\nu)\\
  &=\underbrace{-\frac{1}{2}\Omega^{\beta\mu}\wedge\Sigma_{\alpha\beta\mu}}_{\mathcal{E}_\alpha}
   +\underbrace{\frac{1}{2}\omega\indices{^\beta_\gamma}\wedge\omega^{\gamma\mu}\wedge\Sigma_{\alpha\beta\mu}-\frac{1}{2}\varepsilon_{\alpha\beta\mu\nu}\omega^{\beta\mu}\wedge\omega\indices{^\nu_\gamma}\wedge\theta^\gamma}_{\mathcal{S}_\alpha}, \label{eq:sparling}
\end{align}
where we split the result into the Einstein form $\mathcal{E}_\alpha$ and the Sparling form $\mathcal{S}_\alpha$. Expanding the Einstein form by \eqref{eq:Riem} and \eqref{eq:hs.forms}, then the formulas \eqref{eq:wedge.hs} for the second line, \eqref{eq:kron.delta} for the third and the symmetries of the Riemann curvature tensor \eqref{eq:reim.sym.1} in the fourth, 
\begin{align}\nonumber
  \mathcal{E}_\alpha=-\frac{1}{2}\Omega^{\beta\mu}\wedge\Sigma_{\alpha\beta\mu}&=-\frac{1}{2}R\indices{^{\beta\mu}_{\gamma\nu}}\varepsilon_{\alpha\beta\mu\rho}\theta^\gamma\wedge\theta^\nu\wedge\theta^\rho \\\nonumber
  &=\frac{1}{2}R\indices{^{\beta\mu}_{\gamma\nu}}\varepsilon_{\alpha\beta\mu\rho}\varepsilon^{\sigma\gamma\nu\rho}\Sigma_{\sigma} \\\nonumber
  &=-3R\indices{^{\beta\mu}_{\gamma\nu}}\delta_{[\alpha}^{\sigma}\delta_{\beta}^{\gamma}\delta_{\mu]}^{\nu}\Sigma_{\sigma} \\\nonumber
  &=-(R\indices{^{\beta\mu}_{\beta\mu}}\delta^\sigma_\alpha+R\indices{^{\sigma\mu}_{\mu\alpha}}+R\indices{^{\beta\sigma}_{\alpha\beta}})\Sigma_{\sigma} \\\nonumber
  &=2G\indices{^\sigma_\alpha}\Sigma_{\sigma},\\
  \label{eq:einstein}
  &= 2 G_{\sigma\alpha}\star \theta^\sigma,
\end{align}
and we see that it contains exactly the Einstein tensor. The vacuum Einstein equations \eqref{eq:Einstein.tensor} are therefore equivalent to setting $\mathcal{E}_\alpha=0$ and imposing the differential form equation
\begin{equation}\label{eq:dL.S}
  \ed \mathcal{L}_\alpha=\mathcal{S}_\alpha.
\end{equation}
Equation \eqref{eq:dL.S} naturally implies the equation
\begin{equation}\label{eq:dS.0}
  \ed \mathcal{S}_\alpha = 0,
\end{equation}
through $\ed\ed=0$. Thus equation \eqref{eq:dS.0} is in fact equivalent to the twice contracted Bianchi identity.

\subsection{Closing equations}

What is missing from \eqref{eq:dL.S} and \eqref{eq:dS.0} are the relations of the basis $\theta^\alpha$ with the $1$-forms $\omega\indices{^{\alpha}_{\beta}}$. They are incorporated in \cite{Olivares.Peshkov.ea:22} by introducing the collection of $2$-forms $\mathcal{C}^\alpha$ so that
\begin{equation}\label{eq:dtheta.C}
  \ed\theta^\alpha=\mathcal{C}^\alpha.
\end{equation}
The $\mathcal{C}^\alpha$ contain the commutators $[e_\mu,e_\nu]=\mathcal{C}\indices{^\alpha_\nu_\mu}e_\alpha$ which determine the connection $\omega\indices{^\alpha_\beta}$; we see by \eqref{eq:torsion} that $C\indices{^\alpha_{\beta\gamma}}=2\omega\indices{^\alpha_{[\beta\gamma]}}$ because
\begin{equation}\label{eq:comm.coeff}
  \mathcal{C}^\alpha=\frac{1}{2}\mathcal{C}\indices{^\alpha_{\beta\gamma}}\theta^\beta\wedge\theta^\gamma=-\omega\indices{^\alpha_\beta}\wedge\theta^\beta=\frac{1}{2}\bigl( \omega\indices{^\alpha_{\beta\gamma}}-\omega\indices{^\alpha_{\gamma\beta}}\bigr)\theta^\beta\wedge\theta^\gamma=\omega\indices{^\alpha_{[\beta\gamma]}}\theta^\beta\wedge\theta^\gamma,
\end{equation}
then using the standard trick in proving the metric formula for Christoffel symbols, we use metric compatibility \eqref{eq:metric.comp} to write and sum the following equations
\begin{align*}
  \omega\indices{_{\alpha\beta\mu}}+\omega\indices{_{\beta\alpha\mu}}&=0, \\
  \omega\indices{_{\alpha\mu\beta}}+\omega\indices{_{\mu\alpha\beta}}&=0, \\
  -\omega_{\mu\beta\alpha}-\omega\indices{_{\beta\mu\alpha}}&=0,
\end{align*}
where again we use the metric to lower the first (non form) index of $\omega$.
Grouping terms so that the antisymmetry \eqref{eq:comm.coeff} applies, we get
\begin{equation*}
  2\omega\indices{_{\alpha\beta\mu}}
  +\mathcal{C}\indices{_{\beta\alpha\mu}}
  -\mathcal{C}\indices{_{\alpha\beta\mu}}
  +\mathcal{C}\indices{_{\mu\alpha\beta}}
  =0
\end{equation*}
so
\begin{equation}\label{eq:omega.C}
  \omega\indices{_{\alpha\beta\mu}}
  =\frac{1}{2}
  (-\mathcal{C}\indices{_{\beta\alpha\mu}}
  +\mathcal{C}\indices{_{\alpha\beta\mu}}
  -\mathcal{C}\indices{_{\mu\alpha\beta}}).
\end{equation}
Hence the equation \eqref{eq:dtheta.C} and its exterior derivative
\begin{equation}\label{eq:dC.0}
  \ed \mathcal{C}^\alpha=0
\end{equation}
are another pair of differential form equations, this time capturing information on the metric $g$, which $\theta^\alpha$ and $\omega\indices{^\alpha_\beta}$ depend on.

\subsection{$3+1$ Einstein's equations}\label{sec:equations}
The spacetime equations
\begin{align*}
  \ed \mathcal{L}_\alpha&=\mathcal{S}_\alpha,       &  \ed\theta^\alpha&=\mathcal{C}^\alpha, \\
  \ed \mathcal{S}_\alpha&=0,              &  \ed \mathcal{C}^\alpha&=0,
\end{align*}
can be decomposed, using the theory of Section \ref{sec:ec.decomp}, by first decomposing $\mathcal{L}_\alpha$, $\mathcal{S}_\alpha$, $\theta^\alpha$ and $\mathcal{C}^\alpha$ into their normal and tangential components, as follows\\
\begin{subequations}\label{eq:LSTC}
\begin{minipage}{0.40\textwidth}
\begin{align}
  \mathcal{L}_\alpha&=-\bvec{n}^\flat\wedge H_\alpha+D_\alpha,   \label{eq:L}\\
  \mathcal{S}_\alpha&=-\bvec{n}^\flat\wedge U_\alpha+V_\alpha,  \label{eq:S}
\end{align}
\end{minipage}
\begin{minipage}{0.45\textwidth}
\begin{align}
  \theta^0&=-\bvec{n}^\flat\wedge 1,\quad \theta^i=-\bvec{n}^\flat\wedge 0+\theta^i, \\
  \mathcal{C}^\alpha&=-\bvec{n}^\flat\wedge E^\alpha+B^\alpha, \label{eq:C}
\end{align}
\end{minipage}
\end{subequations}\\[0.3cm]
recalling that $\theta^\alpha$ already splits nicely since $\theta^0=-\bvec{n}^\flat$ in the orthonormal basis.
Then the $3+1$ evolution and constraint equations are
\begin{subequations}\label{eq:3+1}
\begin{align}\label{eq:3+1.1}
  \ldm& D_\alpha-\sed(NH_\alpha)=NU_\alpha,      & \sed D_\alpha&=V_\alpha, \\
  \label{eq:3+1.2}
  \ldm &V_\alpha-\sed(NU_\alpha)=0,              & \sed V_\alpha&= 0,  \\
  \ldm &\theta^i=NE^i,                           & \sed \theta^i&= B^i,\quad \sed N=-NE^0,\quad B^0=0, \label{eq:3+1.3}\\
  \ldm &B^\alpha-\sed(NE^\alpha)=0,              & \sed B^\alpha&=0.
  \label{eq:3+1.4}
\end{align}
\end{subequations}

\begin{remark}\label{rm:orth.frame}
The geometric meaning of these variables in \eqref{eq:LSTC} can be understood by relating them to those from the $1+3$ orthonormal frame approach \cite{Van-Elst.Uggla:97}. In particular, the decomposition of the commutator gives a straightforward comparison. We have from \cite{Van-Elst.Uggla:97} the equations (note that \cite{Van-Elst.Uggla:97} uses Greek letters for spatial frame indices, we have replaced them here with Latin letters to match our convention)
\begin{align*}
	[e_0,e_i]&=\dot{u}_{i}e_0 - \big[\frac{1}{3}\Theta\delta\indices{^j_i}+\sigma\indices{^j_i}+\epsilon\indices{^j_i_k}(\omega^k-\Omega^k)\big]e_j, \\
	[e_i,e_j]&=-2\epsilon\indices{_i_j_k}\omega^ke_0 - \big[2a_{[i}\delta\indices{^k_j_]} + \epsilon\indices{_i_j_l}n\indices{^l^k}\big]e_k,
\end{align*}
where $\dot{u}^{i}$ is the acceleration vector, $\Theta$ is the expansion scalar, $\sigma_{ij}$ is the shear tensor, $\omega^i$ is the vorticity vector, $\Omega^i$ is the Fermi-rotation of the frame, and the pair $\{n^{lk},a_i\}$ parametrize the commutator coefficients $\theta^{k}([e_i,e_j])$. Comparing with our decomposition of the commutator coefficients in $C^\alpha$ \eqref{eq:C}, it is clear that $E^0$ is the $4$-acceleration, $B^0$ captures the vorticity of $\bvec{n}$ (which vanishes since $e_0=\bvec{n}$ is hypersurface orthogonal \cite{Wald:84}), and $H^0$ is related to the Fermi-rotation. The remaining non-gauge variables come from the decompositions of $E\indices{^i}$ and $B\indices{^i}$ into the trace/trace-free and symmetric/antisymmetric components (this symmetry involves the non-differential form index), which is remarkably similar to the relations $\eqref{eq:rel.Bij}$ and $\eqref{eq:rel.Ek}$. One relation of particular interest is the extrinsic curvature $K_{ij}$, which turns out to be the symmetric part of $E$; indeed starting from the usual formula $\ldm h = -2NK$ ($K$ is the extrinsic curvature tensor in the following),
\begin{align*}
	K&=-\frac{1}{2N}\ldm h=-\frac{1}{2N}\ldm(\delta_{ij}\theta^i\otimes\theta^j)=-\frac{1}{2N}\delta_{ij}(\ldm\theta^i\otimes\theta^j+\theta^i\otimes\ldm\theta^j)\\
	&=-\frac{1}{2}\delta_{ij}(E^i\otimes\theta^j+\theta^i\otimes E^j)=-\frac{1}{2}(E_i\otimes\theta^i+\theta^j\otimes E_j)=-E_{(ij)}\theta^i\otimes\theta^j,
\end{align*}
where the link to $E$ comes from \eqref{eq:3+1.3}.
\end{remark}

\subsection{Non differential relations}\label{sec:non.diff.rel}
The final piece of the puzzle is the relations between the spatial field in \eqref{eq:LSTC}, since equations \eqref{eq:3+1} only tell the evolution of the tangential fields 
\begin{equation*}
D_{\alpha}=\frac12 D_{\alpha ij}\theta^i\wedge\theta^j,\quad
B_\alpha = \frac12 B_\alpha{}_{ij}\theta^i\wedge\theta^j, \quad \theta^i
\end{equation*} 
as well as relations for the pieces 
\begin{equation*}
V_{\alpha}=\frac{1}{3!} V_{\alpha ijk} \theta^i\wedge \theta^j\wedge\theta^k, \quad
U_{\alpha}=\frac12 U_{\alpha ij}\theta^i \wedge \theta^j,
\end{equation*}
of the Sparling form \eqref{eq:S}, which are nonlinear in the tangential variables. The detailed derivation can be found in Appendix \ref{sec:nondiff.relations} and \ref{sec:nonlinear.relations}, so we list here the concise forms that are obtained with the help of $\shs$:
\begin{subequations}\label{eq:rel}
\begin{align}
	D_{0ij}&=2\shs B_{[ij]}, \label{eq:rel.D0}\\
	B_{0ij}&=-2\shs D_{[ij]},\label{eq:rel.B0}\\
	\shs D_{kl} &=-E_{(kl)}+\delta_{kl}E\indices{_i^i}-\frac{1}{2}B_{0kl}, \label{eq:rel.Di}\\
	\shs B_{kl}&=-\big(-H_{(kl)}+\delta_{kl}H\indices{_i^i}-\frac{1}{2}D_{0kl}\big), \label{eq:rel.Bij} \\
	\shs  H_{0ij} &=E_{[ij]}+\frac12 B_{0ij}, \label{eq:rel.H0}\\
	\shs E_{0ij}&= -\big(H_{[ij]} + \frac12 D_{0ij}\big), \label{eq:rel.E0}\\
	H_{kl} &=\shs B_{lk}-\frac{1}{2}\delta_{kl}\shs B\indices{_i^i}-\shs E_{0kl}, \label{eq:rel.Hij}\\
	E_{kl} &=-\big(\shs D_{lk}-\frac{1}{2}\delta_{kl}\shs D\indices{_i^i}-\shs H_{0kl} \big), \label{eq:rel.Ek}
\end{align}
\end{subequations}
\begin{gather*}
H_\alpha = H_{\alpha i}\theta^i, \quad
\shs H_{\alpha}=\frac12\shs H_{\alpha ij}\theta^i\wedge \theta^j,\quad
\shs D_{\alpha}=\shs D_{\alpha i}\theta^i,\\
\shs B_{\alpha}= \shs B_{\alpha i}\theta^i,\quad
E_{\alpha}=E_{\alpha i}\theta^i, \quad
\shs E_{\alpha}=\frac12\shs E_{\alpha ij}\theta^i\wedge \theta^j.
\end{gather*}
The nonlinear relations are:
\begin{subequations}
\begin{align}
    \shs  V_0
    ={}&\frac12(E\indices{^\beta_i}\shs D\indices{_\beta^i}-H\indices{^\beta_i}\shs B\indices{_\beta^i}), \\
  \shs  V_k
    ={}&-B\indices{^\beta_i_k}\shs D\indices{_\beta^i}, \\
  \shs U\indices{_0^j}
    ={}&E\indices{^\beta_i}\shs H\indices{_\beta^i^j},\\
  \shs  U\indices{_i^j}
    ={}&\delta^j_i\shs V_0 -E\indices{^\beta_i}\shs D\indices{_\beta^j}+H\indices{^\beta_i}\shs B\indices{_\beta^j}, \label{eq:rel.Ui}
\end{align}
\end{subequations}
where
\begin{gather*}
\shs U_{\alpha}=\shs U_{\alpha i}\theta^i.
\end{gather*}
\subsection{Two formulations}\label{sec:formulation}

For simplicity we choose $H^0=0$, corresponding to a Fermi-Walker transported frame (see Remark \ref{rm:orth.frame}), and lapse and shift functions $N\in\kform[0](M)$ and $\bvec{\beta}\in\mathfrak{X}(M)$. Recall that $\theta^0=-\bvec{n}^\flat$ by our choice of tetrad, which fixes $E^0$, $B^0$ to be
\begin{equation}\label{eq:E0B0}
E^0=-\frac{1}{N}\sed N,\qquad B^0=0. 
\end{equation}

A two-field formulation of Einstein's equations that is first-order in time and second-order in space is: given initial data $(D^i(0), \theta^i(0))_{i=1}^3\in\tkform[2](M)\times\tkform[1](M)$, find $(D^i, \theta^i)_{i=1}^3\in\tkform[2](M)\times\tkform[1](M)$ such that
\begin{subequations}\label{eq:form.ev1}
\begin{align}
  \ldm D^i-{}&\sed(NH^i)=NU^i,   \label{eq:form.ev1.1}\\
  \ldm &\theta^i=NE^i,        \label{eq:form.ev1.2}
\end{align}
\end{subequations}
where $H^i$, $E^i$, $U^i$ are defined as functions of $D^i$, $\theta^i$ and $B^i = \ed\theta^i$, by the relations
\begin{align*}
	H_{kl} &=\shs B_{lk}-\frac{1}{2}\delta_{kl}\shs B\indices{_i^i}-\shs E_{0kl}, \\
	E_{kl} &=-\big(\shs D_{lk}-\frac{1}{2}\delta_{kl}\shs D\indices{_i^i}-\shs H_{0kl} \big), \\
	\shs  U\indices{_i^j}&=\frac12\delta^j_i(E\indices{^\beta_i}\shs D\indices{_\beta^i}-H\indices{^\beta_i}\shs B\indices{_\beta^i}) -E\indices{^\beta_i}\shs D\indices{_\beta^j}+H\indices{^\beta_i}\shs B\indices{_\beta^j},
\end{align*}
from Section \ref{sec:equations}. Another possibility is to include the evolution of $B$ in the system to get a three-field formulation that is first-order in space and time: given initial data $(D^i(0), \theta^i(0), B^i(0))_{i=1}^3\in\tkform[2](M)\times\tkform[1](M)\times\tkform[2](M)$, find $(D^i, \theta^i, B^i)_{i=1}^3\in\tkform[2](M)\times\tkform[1](M)\times\tkform[2](M)$ such that
\begin{subequations}\label{eq:form.ev2}
\begin{align}
  \ldm D^i-{}&\sed(NH^i)=NU^i,   \label{eq:form.ev2.1}\\
  \ldm &\theta^i=NE^i,        \label{eq:form.ev2.2} \\
  \ldm B^i-{}&\sed(NE^i)=0.        \label{eq:form.ev2.3}
\end{align}
\end{subequations}
Although \eqref{eq:form.ev2.3} is just a consequence of taking the exterior derivative of \eqref{eq:form.ev2.2}, these redundant equations are often necessary to prove some property of the problem, and it is therefore worth investigating the impact on the discretisation. The analysis of the hyperbolicity of these formulations and the proof of their well-posedness will be addressed in an upcoming work.

In both cases, the Einstein constraints
\begin{subequations}\label{eq:form.cst}
\begin{align}
\sed D_\alpha&=V_\alpha, \label{eq:form.cst.1}\\
\sed \theta^i&= B^i,     \label{eq:form.cst.2}\\
\sed B^\alpha&=0,        \label{eq:form.cst.3}
\end{align}
\end{subequations}
are propagated if they are satisfied initially.
\begin{theorem}\label{thm:einstein1}
If $(D^i, \theta^i)\in C^1\tkform[2](M)\times C^2\tkform[1](M)$ solves \eqref{eq:form.ev1} where the initial conditions $D^i(0)$, $\theta^i(0)$ satisfy the Einstein constraints \eqref{eq:form.cst}, then they generate a consistent solution to Einstein's equations.
\end{theorem}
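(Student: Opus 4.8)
The plan is to lift the two evolved fields to a genuine spacetime geometry and thereby reduce the statement to the propagation of the single vacuum condition $\mathcal{E}_\alpha=0$. First I would reconstruct, from $(\theta^i,D^i)$ together with the gauge \eqref{eq:E0B0} and the definition $B^i\coloneq\sed\theta^i$, the full coframe $\theta^\alpha$ (recall $\theta^0=-\bvec{n}^\flat$) and, through the algebraic closing relations \eqref{eq:omega.C} and \eqref{eq:rel}, the connection $1$-forms $\omega\indices{^\alpha_\beta}$. Since \eqref{eq:omega.C} is exactly the torsion-free, metric-compatible connection determined by the coframe, the reconstructed $\omega\indices{^\alpha_\beta}$ is the Levi-Civita connection of $g=g_{\alpha\beta}\theta^\alpha\otimes\theta^\beta$ (assuming the evolved coframe stays non-degenerate, which the regularity hypothesis guarantees at least locally). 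Consequently $\mathcal{L}_\alpha$, $\mathcal{S}_\alpha$, $\mathcal{E}_\alpha$ and $\mathcal{C}^\alpha$ are honest geometric objects, and every \emph{off-shell} identity of the Cartan formalism is at our disposal: in particular the Sparling identity \eqref{eq:sparling}, $\ed\mathcal{L}_\alpha=\mathcal{E}_\alpha+\mathcal{S}_\alpha$, and the off-shell twice-contracted Bianchi identity of which \eqref{eq:dS.0} is the on-shell shadow.

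Next I would dispose of the $\theta$/$\mathcal{C}$ sector. Writing the evolution \eqref{eq:form.ev1.2} as $\ldn\theta^i=E^i$ (using $N\ldn=\ldm$ on spatial forms) and combining it with $B^i=\sed\theta^i$ and \eqref{eq:spat.d} gives $\ed\theta^i=-\bvec{n}^\flat\wedge E^i+\sed\theta^i=\mathcal{C}^i$, while the $\alpha=0$ case holds by \eqref{eq:E0B0}; hence $\ed\theta^\alpha=\mathcal{C}^\alpha$ holds identically on a solution. Applying $\ed$ and using $\ed\ed=0$ then yields $\ed\mathcal{C}^\alpha=0$ for free, whose $3+1$ split \eqref{eq:full.d} returns both the $B$-evolution $\ldm B^\alpha-\sed(NE^\alpha)=0$ and the constraint \eqref{eq:form.cst.3}, $\sed B^\alpha=0$ (the latter also being immediate from $\sed\sed=0$ and $B^0=0$); likewise \eqref{eq:form.cst.2} is automatic by the definition of $B^i$. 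Thus the only genuine constraint left to propagate is \eqref{eq:form.cst.1}.

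The heart of the argument is the $\mathcal{L}$/$\mathcal{S}$ sector. Decomposing the Sparling identity with \eqref{eq:full.d} and \eqref{eq:LSTC}, its $\bvec{n}^\flat$-part reads $\tfrac1N(\ldm D_\alpha-\sed(NH_\alpha))=\ipn\mathcal{E}_\alpha+U_\alpha$ and its spatial part $\sed D_\alpha=-\ipn\epn\mathcal{E}_\alpha+V_\alpha$. Hence the imposed evolution \eqref{eq:form.ev1.1} is precisely $\ipn\mathcal{E}_i=0$ (the $G_{ij}=0$ equations, by \eqref{eq:einstein}), and the violation of \eqref{eq:form.cst.1} is exactly the spatial projection $\hat{\mathcal{E}}_\alpha\coloneq-\ipn\epn\mathcal{E}_\alpha=\sed D_\alpha-V_\alpha$. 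It remains to show $\mathcal{E}_\alpha\equiv0$. Here I would feed the off-shell Bianchi identity into the $3+1$ machinery: as $\mathcal{E}_\alpha$ is a $3$-form, \eqref{eq:full.d} gives $\ed\mathcal{E}_\alpha=-\bvec{n}^\flat\wedge\tfrac1N(\ldm\hat{\mathcal{E}}_\alpha-\sed(N\ipn\mathcal{E}_\alpha))$, which is purely normal. Matching this against the normal projection of the covariant Bianchi relation, inserting the established $\ipn\mathcal{E}_i=0$, and using the symmetry $G_{\mu\nu}=G_{\nu\mu}$ (which identifies the surviving normal component $\ipn\mathcal{E}_0$ with a repackaging of the momentum piece of $\hat{\mathcal{E}}_i$), collapses everything to a closed, homogeneous, first-order linear evolution system for $(\hat{\mathcal{E}}_0,\hat{\mathcal{E}}_i)$ — the Hamiltonian and momentum constraints — with no inhomogeneous source. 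Since these vanish on $\Sigma_0$ by hypothesis, uniqueness for this transport system forces $\hat{\mathcal{E}}_\alpha\equiv0$, hence $\mathcal{E}_\alpha\equiv0$ and $\ed\mathcal{L}_\alpha=\mathcal{S}_\alpha$ everywhere; a final application of $\ed\ed=0$ gives $\ed\mathcal{S}_\alpha=0$, so all four spacetime equations hold and the solution is consistent.

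The main obstacle is verifying that the constraint evolution genuinely closes \emph{homogeneously}: one must pin down the precise form of the contracted Bianchi identity for the reconstructed connection, carry out its normal/tangential decomposition, and confirm that imposing the evolution equations removes every constraint-independent source term (this is the $3+1$ shadow of $\ed\mathcal{S}_\alpha=0$, equivalently that $\ldm V_\alpha-\sed(NU_\alpha)$ and $\sed V_\alpha$ vanish modulo the constraints). A more computational but self-contained alternative bypasses the abstract identity: differentiate the nonlinear relations \eqref{eq:rel} for $U_\alpha,V_\alpha$ directly, use $[\sed,\ldm]=0$ together with the evolution equations to evaluate $\ldm(\sed D_\alpha-V_\alpha)$, and check by hand that the outcome is linear and homogeneous in the constraint quantities. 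Either way, once homogeneity is secured, the vanishing of the constraints for all time follows by the standard uniqueness conclusion for a first-order system with zero initial data, without invoking the full well-posedness of the main system.
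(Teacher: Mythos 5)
Your proposal follows essentially the same route as the paper: reconstruct the coframe and Levi--Civita connection from $(D^i,\theta^i)$ and the closing relations, observe that the evolution equation \eqref{eq:form.ev1.1} is exactly $G_{ij}=0$ via the splitting of $\mathcal{E}_\alpha$, and then propagate the remaining components $G_{0\alpha}$ through the twice-contracted Bianchi identity as a homogeneous transport system with vanishing initial data. The only difference is one of completeness rather than method: the "main obstacle" you flag (that the constraint evolution closes homogeneously) is precisely what the paper resolves by writing out the component form \eqref{eq:exp.bianchi} of $\nabla^\alpha G_{\alpha\beta}=0$ and checking that, once $G_{ij}=0$ is inserted, the resulting equations \eqref{eq:bianchi.spat}--\eqref{eq:bianchi.0} are linear and homogeneous in $G_{0\alpha}$.
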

\begin{proof}
Let $D^i$, $\theta^i$ follow the conditions of the theorem where $B^i\coloneq\sed\theta^i$. Then the auxiliary equation \eqref{eq:dtheta.C} is satsified since the normal and tangential parts \eqref{eq:form.ev1.2}, \eqref{eq:form.cst.2} hold, where the $\alpha=0$ equation follows from the gauge \eqref{eq:E0B0}.

Define $D^0$, $B^0$, $H^k$  and $E^k$ from $D^i$, $B^i$ by the relations \eqref{eq:rel.D0}, \eqref{eq:rel.B0}, \eqref{eq:rel.Hij}, \eqref{eq:rel.Ek} and gauge conditions \eqref{eq:E0B0}. Since the set of relations \eqref{eq:rel} are self consistent, we can construct the connection 1-forms $\omega\indices{^\alpha_\beta}$ through either \eqref{eq:omegaEB} or \eqref{eq:omegaDH}. This constructed $\omega\indices{^\alpha_\beta}$ is indeed the unique connection $1$-form that is torsion-free \eqref{eq:torsion} and metric compatible \eqref{eq:metric.comp}. The torsion freeness is true due to \eqref{eq:dtheta.C}, and that \eqref{eq:comm.coeff} is true due to definition \eqref{eq:omegaEB} of $\omega\indices{^\alpha_\beta}$ being true. The metric compatibility is easy to see straight from either definition; both \eqref{eq:omegaEB} and \eqref{eq:omegaDH} are antisymmetric in $\mu$ and $\nu$.

  We now prove the propogation of the Hamiltonian constraint \eqref{eq:form.cst.1}. From the identity \eqref{eq:sparling}, the fulfillment of the temporal (evolution) part of Einstein's equation \eqref{eq:form.ev1.1} is equivalent to the vanishing of the temporal part of the Einstein form $\bar{\mathcal{E}_i}=0$. By \eqref{eq:einstein} and the decompositions of Section \ref{sec:3+1hs}, the Einstein form splits as
  \begin{align*}
    \mathcal{E}_\alpha
    &= 2G_{0\alpha}\star \theta^0 + 2G_{i\alpha}\star \theta^i \\
    &=-2G_{0\alpha}\theta^1\wedge\theta^2\wedge\theta^3 - 2G_{i\alpha}\theta^0\wedge\shs \theta^i \\
    &=-\bvec{n}^\flat\wedge(\underbrace{-2G_{i\alpha}\shs \theta^i}_{\bar{\mathcal{E}}_\alpha})\underbrace{-2G_{0\alpha}\theta^1\wedge\theta^2\wedge\theta^3}_{\hat{\mathcal{E}}_\alpha}
  \end{align*}
  where we use $\theta^0=-\bvec{n}^\flat$ for the last line. Thus \eqref{eq:form.ev1.1} is the same as
  \begin{equation*}
    \bar{\mathcal{E}_i}=-2G_{ji}\shs \theta^j=0,
  \end{equation*}
  which by the injectivity of $\shs $ and symmetry of $G_{\alpha\beta}$, implies that $G_{ij}=G_{ji}=0$.

  The propogation of the remaining constraint is a result of the twice contracted Bianchi identity
  \begin{equation*}
    \nabla^{\alpha}G_{\alpha\beta}=0,
  \end{equation*}
  that expands in this basis as
  \begin{equation}\label{eq:exp.bianchi}
    g^{\alpha\gamma}(e_\gamma(G_{\alpha\beta})-\omega\indices{^\lambda_{\alpha\gamma}}G_{\lambda\beta}-\omega\indices{^\lambda_{\beta\gamma}}G_{\alpha\lambda}).
  \end{equation}
  With $\beta = i$ a spatial index, we can simplify \eqref{eq:exp.bianchi} to
  \begin{equation}\label{eq:bianchi.spat}
    -e_0(G_{0 i})- \omega\indices{^{0\gamma}_{\gamma}}G_{0 i}+\omega\indices{^\lambda_{i 0}}G_{0\lambda}-g^{jk}\omega\indices{^0_{i}^j}G_{j0},
  \end{equation}
  then $\beta = 0$,
  \begin{equation}\label{eq:bianchi.0}
    g^{\alpha\gamma}e_\gamma(G_{\alpha 0})-\omega\indices{^{\lambda\gamma}_{\gamma}}G_{\lambda 0}-\omega\indices{^\lambda_{0}^0}G_{0\lambda}.
  \end{equation}
  Setting $G_{0 0}=0$ and $G_{i 0}=G_{0i}=0$ on the initial slice, \eqref{eq:bianchi.spat} implies that
  \begin{equation}\label{eq:ev.G0i}
    e_0(G_{0 i})=0,
  \end{equation}
  so $G_{i 0}=0$ is preserved;  and \eqref{eq:bianchi.0} implies that 
  \begin{equation}\label{eq:ev.G00}
    e_0(G_{00})-g^{ij}e_j(G_{i 0})=0,
  \end{equation}
  which is a transport equation with unique solution $G_{00}=0$. 
\end{proof}
\begin{theorem}\label{thm:einstein2}
If $(D^i, \theta^i, B^i)\in C^1\tkform[2](M)\times C^2\tkform[1](M) \times C^1\tkform[1](M)$ solves \eqref{eq:form.ev2} where the initial conditions $D^i(0)$, $\theta^i(0)$, $B^i(0)$ satisfy the Einstein constraints \eqref{eq:form.cst}, then they generate a consistent solution to Einstein's equations.
\end{theorem}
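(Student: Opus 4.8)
The plan is to reduce Theorem \ref{thm:einstein2} to Theorem \ref{thm:einstein1}. The three-field system \eqref{eq:form.ev2} differs from the two-field system \eqref{eq:form.ev1} only in that $B^i$ is now an independent field carrying its own evolution equation \eqref{eq:form.ev2.3}, rather than being \emph{defined} as $\sed\theta^i$. So the whole argument hinges on showing that $B^i=\sed\theta^i$ is forced to hold for all time provided it holds initially; once this is done, the triple $(D^i,\theta^i,B^i)$ is literally a solution of \eqref{eq:form.ev1}, and the desired conclusion is exactly the content of Theorem \ref{thm:einstein1}.

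First I would propagate the constraint \eqref{eq:form.cst.2}. Setting $\phi^i\coloneq\sed\theta^i-B^i$, I apply $\sed$ to the evolution equation \eqref{eq:form.ev2.2} and use the commutation relation $[\sed,\ldm]=0$ from Section \ref{sec:ec.decomp} to obtain $\ldm\sed\theta^i=\sed(NE^i)$; equation \eqref{eq:form.ev2.3} reads $\ldm B^i=\sed(NE^i)$. Subtracting gives the homogeneous transport equation $\ldm\phi^i=0$. Since $\ldm$ is, up to the shift, the evolution operator transverse to the slices $\Sigma_t$, and $\phi^i(0)=0$ by the initial constraint \eqref{eq:form.cst.2}, uniqueness for this transport equation forces $\phi^i\equiv0$, i.e.\ $B^i=\sed\theta^i$ for all time.

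With $B^i=\sed\theta^i$ in hand, the remaining constraints come for free. The constraint \eqref{eq:form.cst.3} is immediate, since $\sed B^i=\sed\sed\theta^i=0$ by $\sed\sed=0$, and $\sed B^0=0$ because $B^0=0$ by the gauge \eqref{eq:E0B0}. Moreover, because $B^i=\sed\theta^i$, the algebraically defined quantities $H^i$, $E^i$, $U^i$ built from $(D^i,\theta^i,B^i)$ in \eqref{eq:form.ev2} coincide with those built from $(D^i,\theta^i)$ in \eqref{eq:form.ev1}; hence $(D^i,\theta^i)$ solves \eqref{eq:form.ev1} with initial data satisfying \eqref{eq:form.cst}, and Theorem \ref{thm:einstein1} yields the propagation of \eqref{eq:form.cst.1} and the consistency of the full solution to Einstein's equations.

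The main, and essentially only, obstacle is the cancellation in the first step: $E^i$ is a nonlinear algebraic function of the evolving fields, so one must check that the term $\sed(NE^i)$ really appears identically on both sides and drops out. The point is that this cancellation is purely structural---it uses neither the relation $B^i=\sed\theta^i$ nor any property of $E^i$ beyond its appearance in \eqref{eq:form.ev2.2} and \eqref{eq:form.ev2.3}---so the transport identity for $\phi^i$ is available \emph{before} we know the constraint holds, and no circularity arises when we subsequently invoke Theorem \ref{thm:einstein1}.
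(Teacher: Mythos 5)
Your proposal is correct and follows essentially the same route as the paper: subtract \eqref{eq:form.ev2.3} from the spatial exterior derivative of \eqref{eq:form.ev2.2} (using $[\sed,\ldm]=0$) to get the homogeneous transport equation $\ldm(\sed\theta^i-B^i)=0$, conclude $B^i=\sed\theta^i$ by uniqueness for the transport equation, deduce \eqref{eq:form.cst.3} from $\sed\sed=0$, and reduce to Theorem \ref{thm:einstein1}. Your explicit remark that the cancellation of $\sed(NE^i)$ is structural and involves no circularity is a nice touch not spelled out in the paper, but the argument is the same.
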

\begin{proof}
	The proof is identical to that of Theorem \ref{thm:einstein1} if we can show that \eqref{eq:form.cst.2} (and as consequence \eqref{eq:form.cst.3}) is preserved by these equations. 
	
	Let $D^i, \theta^i, B^i$ be as in the theorem. Taking the exterior derivative of \eqref{eq:form.ev2.2} and subtracting from that \eqref{eq:form.ev2.3}, we get $\ldm(\sed\theta^i-B^i)=0$. The existence and uniquess of a solution to this equation comes from the fact that it is a transport equation on $\sed\theta^i-B^i$. In adapted coordinates, for a generic $3$-form $T$, $\ldm T$ has components, 
\begin{equation*}
  \ldm T_{\alpha_1\alpha_2\alpha_3}=\frac{\partial}{\partial t} T_{\alpha_1\alpha_2\alpha_3}-m^\beta\frac{\partial}{\partial x^\beta}T_{\alpha_1\alpha_2\alpha_3}-T_{\beta\alpha_2\alpha_3}\frac{\partial m^\beta}{\partial x^{\alpha_1}}-T_{\alpha_1\beta\alpha_3}\frac{\partial m^\beta}{\partial x^{\alpha_2}}-T_{\alpha_1\alpha_2\beta}\frac{\partial m^\beta}{\partial x^{\alpha_3}},
\end{equation*}
which, given initial conditions, can be solved using e.g. the method of characteristics. Since \eqref{eq:form.cst.2} is true initially, it is propogated in time. The remainder follows the proof of Theorem \ref{thm:einstein1}. 
\end{proof}

\section{The exterior calculus discrete de Rham complex}\label{sec:ecddr}

The exterior calculus discrete de Rham (ECDDR) complex is a discretisation of the chain complex of differential forms
\begin{equation*}
  \begin{tikzcd}
    \{0\}\arrow{r}{}
    & \kform[0](\Omega)\arrow{r}{\ed^0}
    & \kform[1](\Omega) \arrow{r}{\ed^1}
    & \kform[2](\Omega) \arrow{r}{\ed^2}
    & \kform[3](\Omega) \arrow{r}{} & \{0\}
  \end{tikzcd}
\end{equation*}
reproducing certain important geometric properties at the discrete level. The foundation on which the ECDDR complex is constructed is the Stokes' formula, which, on a smooth $n$-dimensional manifold $\Omega$, reads
\begin{equation*}
  \int_\Omega \ed\omega\wedge\mu = (-1)^{(k+1)}\int_{\Omega}\omega\wedge\ed\mu + \int_{\partial\Omega}\tr_{\partial\Omega}\;\omega\wedge\tr_{\partial\Omega}\;\mu \qquad\forall(\omega,\mu)\in\kform(\overline{\Omega})\times\kform[n-k-1](\overline{\Omega}),
\end{equation*}
where $\tr_{\partial\Omega}$ is the trace map (the pullback of the inclusion of the boundary). From this formula, and proper choices of ``trimmed'' polynomial spaces, one can construct a completely discrete sequence of spaces and operators
\begin{equation*}
  \begin{tikzcd}
    \{0\}\arrow{r}{}
    & \Xec{0}{r,h}\arrow{r}{\ued{0}{r,h}}
    & \Xec{1}{r,h}\arrow{r}{\ued{1}{r,h}}
    & \Xec{2}{r,h}\arrow{r}{\ued{2}{r,h}}
    & \Xec{3}{r,h}\arrow{r}{} & \{0\}
  \end{tikzcd}
\end{equation*}
that replicates the geometric identity $\ed\ed=0$, and the exactness when $\Omega$ is contractible. We only introduce here the important notions, and the motivation behind certain constructions. For full details of the ECDDR complex and its properties, see \cite{Bonaldi.Di-Pietro.ea:25}.

\subsection{Notations}

The mesh $\Mh$ of a polytopal domain $\Omega\in\Real^3$ is a partition of $\Omega$ into polytopes of dimension $d\in[0,3]$. The set of polytopes of dimension $d$ is denoted $\Delta_d(\Mh)$; intuitively in $\Real^3$, these are the collections of vertices $\Vh\coloneq\Delta_0(\Mh)$, edges $\Eh\coloneq\Delta_1(\Mh)$, faces $\Fh\coloneq\Delta_2(\Mh)$, and elements $\Th\coloneq\Delta_3(\Mh)$. The $h$ denotes the diameter of the mesh, defined by the diameter of the largest polytope in the mesh, and decreases as the mesh is refined.

The definition of full/trimmed polynomial spaces is identic to that of FEEC \cite{Arnold.Falk.ea:10}. We denote the space of $r$-th order polynomial $k$-forms on $\Real^3$ by $\Pkdf{r}{k}{\Real^3}$. These spaces contain the $k$-forms with components (with respect to the canonical basis on $\Real^3$) that are polynomials of degree at most $r$. For any mesh entity $f\in\Delta_d(\Mh)$, we have the local polynomial space $\Pkdf{r}{k}{f} = \tr_f\Pkdf{r}{k}{\Real^3}$. The trimmed polynomial subspaces $\tPkdf{r}{k}{f}\subseteq\Pkdf{r}{k}{f}$ are given by
\begin{align*}
	\tPkdf{r}{0}{f} &= \Pkdf{r}{0}{f} & k=0 \\
	\tPkdf{r}{k}{f} &= \ed\Pkdf{r}{k-1}{f}\oplus\ip{(\bvec{x}-\bvec{x}_f)}\Pkdf{r-1}{k+1}{f} & k \geq 1
\end{align*}
where $\bvec{x}_f$ is a fixed point inside $f$. The motive behind using trimmed subspaces is to increase the efficiency of the method while ensuring that discrete operators remain well-defined.

\subsection{Discrete spaces and operators}

The discrete space of $k$-forms is a Cartesian product of trimmed polynomials on mesh entities of various dimensions. In $3$ dimensions, there are the following discrete spaces replacing $\kform(\Omega)$, for $k\in[0,3]$:
\begin{align*}
	\Xec{0}{r,h} &= \Big(\bigtimes_{V\in\Vh}\Pkdf{r}{0}{V}\Big)\times\Big(\bigtimes_{E\in\Eh}\tPkdf{r}{1}{E}\Big)\times\Big(\bigtimes_{F\in\Fh}\tPkdf{r}{2}{F}\Big)\times\Big(\bigtimes_{T\in\Th}\tPkdf{r}{3}{T}\Big) \\
	\Xec{1}{r,h} &= \Big(\bigtimes_{E\in\Eh}\Pkdf{r}{0}{E}\Big)\times\Big(\bigtimes_{F\in\Fh}\tPkdf{r}{1}{F}\Big)\times\Big(\bigtimes_{T\in\Th}\tPkdf{r}{2}{T}\Big) \\
	\Xec{2}{r,h} &= \Big(\bigtimes_{F\in\Fh}\Pkdf{r}{0}{F}\Big)\times\Big(\bigtimes_{T\in\Th}\tPkdf{r}{1}{T}\Big) \\
	\Xec{3}{r,h} &= \bigtimes_{T\in\Th}\Pkdf{r}{0}{T}
\end{align*} 
which can be conveniently summarised as
\begin{equation}\label{def:Xec}
	\Xec{k}{r,h}\coloneq\bigtimes_{d=k}^3\bigtimes_{f\in\Delta_d(\Mh)}\tPkdf{r}{d-k}{f}.
\end{equation}
A discrete element is written as $\ulh{\omega}\in\Xec{k}{r,h}$, and its restriction to a particular cell $f\in\Delta_d(\Mh)$ is denoted by $\ulf{\omega}\in\Xec{k}{r,f}$, containing only the subset of polynomials associated to $f$ and its boundary $\partial f$.

These spaces are built so that we can define on any mesh element, when the degree of the discrete form is consistent with the dimension, two full polynomial constructions called the (local) potential reconstruction $\Pec{k}{r,f}:\Xec{k}{r,f}\to\Pkdf{r}{k}{f}$, and the (local) discrete exterior derivative $\ded{k}{r,f}:\Xec{k}{r,f}\to\Pkdf{r}{k+1}{f}$. The role of the potential reconstruction is to reconstruct, from the trimmed components in the discrete space, a full polynomial that plays the role of an approximation of the continuous field of interest. The definitions of these maps are connected, and form a hierarchical structure: for any discrete $k$-form $\ulh{\omega}\in\Xec{k}{r,h}$, and $d$-dimensional polytope $f\in\Delta_d(\Mh)$,
\begin{itemize}
	\item if $d=k$, then $\Pec{k}{r,f}\ulf{\omega}=\omega_f$; the reconstruction  is just the polynomial component in $\ulf{\omega}$ attached to $f$ (see that when $d=k$ in \eqref{def:Xec}, we have the space $\tPkdf{r}{0}{f}=\Pkdf{r}{0}{f}$, and thus $\omega_f$ is a full polynomial). The discrete exterior derivative of a discrete $k$-form does not exist, since $d<k+1$.
	\item if $d>k$, then we first define the discrete exterior derivative by mimicking the Stokes' formula: by Riesz representation theorem, there exists $\ded{k}{r,f}\ulf{\omega}\in\Pkdf{r}{k+1}{f}$ such that
\begin{equation}\label{eq:ded.stokes}
	\int_{f}\ded{k}{r,f}\ulf{\omega}\wedge\mu = (-1)^{k+1}\int_f\hs^{-1}_f\omega_f\wedge\ed\mu + \int_{\partial f} \Pec{k}{r,\partial f}\ul[\partial f]{\omega}\wedge\tr_{\partial f}\mu\qquad \forall\mu\in\Pkdf{r}{d-k-1}{f},
\end{equation}
where $\omega_f\in\tPkdf{r}{d-k}{f}$ is the trimmed polynomial component of $\ulf{\omega}$ that is attached to $f$, and $\Pec{k}{r,\partial f}$ is the potential reconstruction on the boundary of $f$ (e.g. on the edges of a face). In particular, for the smallest case $d=k+1$, the $\Pec{k}{r,\partial f}$ exists due to the first definition.
\item if $d>k$ and $\ded{k}{r,f}$ is defined, then the potential $\Pec{k}{r,f}$ is defined by again mimicking the Stokes' theorem as in \eqref{eq:ded.stokes}, except with the roles of $\int_f\ed\omega\wedge\mu$ and $\int_f\omega\wedge\ed\mu$ swapped (since now we want a linear form on $\omega$ instead of $\ed\omega$). The $\ed\omega$, now on the right-hand side, is approximated by the discrete exterior derivative $\ded{k}{r, h}\ulf{\omega}$ on $f$.
\end{itemize}
We see that starting from the base case, the existence of the potential reconstruction on all $k$-polytopes means that the discrete exterior derivative can be defined for all $(k+1)$-polytopes, which in turn, provides the information to define the potential on all $(k+1)$-polytopes, that can then be used to define the discrete exterior derivative on all $(k+2)$-polytopes and so on, until the highest dimension is reached. The global discrete exterior derivative $\ued{k}{r,f}:\Xec{k}{r,h}\to\Xec{k+1}{r,h}$ of the ECDDR complex is defined by collecting the local discrete exterior derivatives and projecting the result (in the full polynomial space) back onto the trimmed polynomial spaces.

The other application of the potential reconstruction is to construct discrete $L^2$-products $(\cdot,\cdot)_{k,h}$ on each ECDDR space. They are given by
\begin{equation}\label{def:L2p}
	(\underline{\omega}_h, \underline{\mu}_h)_{k,h} \coloneq \sum_{f\in\Delta_n(\Mh)}\int_f\Pec{k}{r,h}\underline{\omega}_h\wedge\chs\Pec{k}{r,h}\underline{\mu}_h + \rho s_{k,h}(\underline{\omega}_h, \underline{\mu}_h)\qquad \forall\underline{\omega}_h,\underline{\mu}_h\in\Xec{k}{r,h},
\end{equation}
where $\chs$ is the canonical Hodge star in $\Real^3$ (see \eqref{def:chs}), $\rho>0$ is the stabilisation parameter, and $s_{k,h}: \Xec{k}{r,h}\times\Xec{k}{r,h}\to\Real$ are suitable stabilisation bilinear forms. For fully discrete methods such as ECDDR, these stabilisation forms are necessary to guarantee the positive definiteness of the $L^2$-product. This is linked to the fact that the discrete spaces $\Xec{k}{r,h}$ have no compatibility condition between the polynomial form on $f$, and the forms associated to its boundary.

\section{Discretisation} \label{sec:disc}

\subsection{Weak formulation}
The weak formulation of \eqref{eq:form.ev1} and \eqref{eq:form.ev2} is obtained by the standard method of wedging both sides by a test form and applying integration by parts where needed. To be able to discretise using the ECDDR complex, we perform the integration with respect to the induced metric of a chart basis $(\ed x^I)_{I=0}^3$. That is, we define the inner product as in \ref{def:hodge.star} with the constant spatial Hodge star $\chs$
\begin{equation}\label{def:chs}
  (v,w)_\Omega \coloneq\int_{\Omega} v\wedge \chs  w=\int_{\Omega} v_{I_1\cdots I_k}w^{I^1\cdots I^k}\ed x^1\wedge\cdots\wedge \ed x^k \qquad \forall v,w\in\kform(\Omega).
\end{equation}
The $\star_c$ does not commute with $\ldm$ due to the spatial component of the derivative. To simplify, we set the shift to $\bvec{\beta}=0$ so that we can write the weak form of the two-field system \eqref{eq:form.ev1} as
\begin{subequations}\label{eq:weak.ev1}
\begin{alignat}{2}
  \Big(\ddt \chs D^i, v\Big)_{\Omega}-\big(N\chs H^i,\sed v\big)_{\Omega}&=\big(N\chs U^i, v\big)_{\Omega}   &\quad\forall v\in\kform[1](\Omega), \label{eq:weak.ev1.1}\\
  \Big(\ddt \theta^i, v\Big)_{\Omega}&=\big(N E^i, v\big)_{\Omega} &\quad \forall v\in\kform[1](\Omega).      \label{eq:weak.ev1.2}
\end{alignat}
\end{subequations}
and the three-field system \eqref{eq:form.ev2} as
\begin{subequations}\label{eq:weak.ev2}
\begin{alignat}{2}
  \Big(\ddt \chs D^i, v\Big)_{\Omega}-\big(N\chs H^i,\sed v\big)_{\Omega}&=\big(N\chs U^i, v\big)_{\Omega}   &\quad\forall v\in\kform[1](\Omega), \label{eq:weak.ev2.1}\\
  \Big(\ddt \chs\theta^i, w\Big)_{\Omega}&=\big(N \chs E^i, w\big)_{\Omega} &\quad \forall w\in\kform[2](\Omega),      \label{eq:weak.ev2.2}\\
  \Big(\ddt \chs B^i, v\Big)_{\Omega}-\big(N\chs E^i,\sed v\big)_{\Omega}&=0   &\quad\forall v\in\kform[1](\Omega). \label{eq:weak.ev2.3}
\end{alignat}
\end{subequations}

\subsection{Numerical schemes}
Discretise the temporal domain $[0,T]$ by a sequence of strictly increasing values $(t^n)_{n=0}^{N_T}\subset[0,T]$ where $N\in\mathbb{N}$, $t^0=0$ and $t^{N_T}=T$. Define the timestep as $\delta t^{n+\frac{1}{2}}\coloneq t^{n+1}-t^{n}$, and the discrete time derivative $\delta_t^{n+1}$ applied to a sequence of values $(\bvec{v}(n))_n$ as
\begin{equation*}
  \delta_t^{n+1} \bvec{v}\coloneq \frac{\bvec{v}(n+1)-\bvec{v}(n)}{\delta t^{n+\frac12}}.
\end{equation*}

The discretisations of the weak formulations \eqref{eq:weak.ev1} and \eqref{eq:weak.ev2} do not necessarily use the forms $D^i$, $\theta^i$, $B^i$ directly as unknowns, but instead the (constant) Hodge-star versions $\chs D^i$, $\chs\theta^i$, $\chs B^i$ where appropriate. This choice is delibrate (see Remark \ref{rm:choice.DtB} for details), and of course, once an approximation is found, we can recover easily the approximations of the original forms by taking the inverse Hodge star. We denote in the following the discrete forms using the notation $\underline{Z}_h$ for $Z=\chs D^i,\theta^i,\ldots$ the continuous counterparts, so that it is clear exactly which form we are dealing with.

The discretisation of the two-field equations \eqref{eq:weak.ev1} using semi-implicit time stepping is based on the primary unknowns $(\chs D^i,\theta^i)$ and reads: With lapse $N$ and initial conditions given by $(\underline{\chs D}_h^i(0), \underline{\theta}_h^i(0))_{i\in[1,3]}\in(\Xec{1}{r,h}\times\Xec{1}{r,h})^3$, find for every $n$ a collection of forms $(\underline{\chs D}_h^i(n), \underline{\theta}_h^i(n))_{i\in[1,3]}\in(\Xec{1}{r,h}\times\Xec{1}{r,h})^3$ such that, for all $i\in [1,3]$,
\begin{subequations}\label{eq:scheme1}
\begin{align}\nonumber
  \Big(\delta_t^{n+1}  \underline{\chs D}_h^i,\underline{v}_h\Big)_{1, h}-\Big(N(n)&\underline{\chs H}_h^i\big( \ued{1}{r,h}\underline{\theta}_h(n+1), \underline{\theta}_h(n)\big), \ued{1}{r,h}\underline{v}_h\Big)_{2,h}\\
  &=\Big(N(n)\underline{\chs U}_h^i\big(N(n),\underline{\chs D}_h(n), \underline{\theta}_h(n)\big),\underline{v}_h\Big)_{1,h} & \forall \underline{v}_h\in\Xec{1}{r,h}, \label{eq:scheme1.1}\\
  \Big(\delta_t^{n+1} \underline{\theta}_h^i,\underline{v}_h\Big)_{1,h}&=\Big(N(n)\underline{E}_h^i\big(\underline{\chs D}_h(n+1), \underline{\theta}_h(n)\big),\underline{v}_h\Big)_{1,h} & \forall\underline{v}_h\in\Xec{1}{r,h}, \label{eq:scheme1.2}
\end{align}
\end{subequations}
where, for simplicity, we denote $\underline{\chs D}_h\coloneq(\underline{\chs D}_h^i)_{i\in[1,3]}$ (resp.\ $\underline{\theta}_h$) the collection of forms.
The terms $(\underline{\chs H}_h^i, \underline{\chs U}_h^i, \underline{E}_h^i)\in(\Xec{2}{r,h}\times\Xec{1}{r,h}\times\Xec{1}{r,h})^3$ are functions of the unknowns, defined through the discrete $L^2$-products by: for all $\underline{v}_h\in\Xec{1}{r,h}$ and $\underline{w}_h\in\Xec{2}{r,h}$,
\begin{equation}\label{eq:disc.nl.rel}
\begin{aligned}
\Big(N(n)\underline{\chs H}_h^i\big(\ued{1}{r,h}\underline{\theta}_h(n+1), \underline{\theta}_h(n)\big), &\underline{w}_h\Big)_{2,h}\\
&=\int_\Omega N(n)\chs H^i(\ded{1}{r,h}\underline{\theta}_h(n+1), \Pec{1}{r,h}\underline{\theta}_h(n))\wedge\chs\Pec{2}{r,h}\underline{w}_h,\\
\Big(N(n)\underline{\chs U}_h^i\big(\underline{\chs D}_h(n), \underline{\theta}_h(n)\big),{}&\underline{v}_h\Big)_{1,h}\\
=\int_\Omega N(n{}&)\chs U^i(\Pec{1}{r,h}\underline{\chs D}_h(n), \ded{1}{r,h}\underline{\theta}_h(n), \Pec{1}{r,h}\underline{\theta}_h(n))\wedge\chs \Pec{1}{r,h}\underline{v}_h,\\
\Big(N(n)\underline{E}_h^i\big(\underline{\chs D}_h(n+1), \underline{\theta}_h(n)\big),\underline{v}_h\Big)_{1,h}&=\int_\Omega N(n) E^i(\Pec{1}{r,h}\underline{\chs D}_h(n+1), \Pec{1}{r,h}\underline{\theta}_h(n))\wedge \chs\Pec{1}{r,h}\underline{v}_h,
\end{aligned}
\end{equation}
where, on the right-hand side, $H^i$, $U^i$, $E^i$ are the relations \eqref{eq:rel.Hij}, \eqref{eq:rel.Ui}, and \eqref{eq:rel.Ek}, calculated replacing $(D^i)_{i\in[1,3]}$ by $\chs \Pec{1}{r,h}\underline{\chs D}_h$, $(\theta^i)_{i\in[1,3]}$ by $\Pec{1}{r,h}\underline{\theta}_h$, and $(B^i)_{i\in[1,3]}$ by $\ded{1}{r,h}\underline{\theta}_h$. The $D^0$ that appears in the calculation of $U^i$ is dealt with the same way using relation \eqref{eq:rel.D0}.

For the second three-field system \eqref{eq:weak.ev2}, the primary unknowns are $(\chs D^i,\chs \theta^i,\chs B^i)$ and the scheme reads: With lapse $N$ and initial conditions given by $(\underline{\chs D}_h^i(0), \underline{\chs\theta}_h^i(0), \underline{\chs B}_h^i(0))_{i\in[1,3]}\in(\Xec{1}{r,h}\times\Xec{2}{r,h}\times\Xec{1}{r,h})^3$, find for every $n$ a collection of forms $(\underline{\chs D}_h^i(n), \underline{\chs\theta}_h^i(n), \underline{\chs B}_h^i(n))_{i\in[1,3]}\in(\Xec{1}{r,h}\times\Xec{2}{r,h}\times\Xec{1}{r,h})^3$ such that, for all $i\in [1,3]$,
\begin{subequations}\label{eq:scheme2}
\begin{alignat}{2}\nonumber
  \Big(\delta_t^{n+1}  \underline{\chs D}_h^i,\underline{v}_h\Big)_{1, h}-\Big(N(n)\underline{\chs H}_h^i&\big(\underline{\chs B}_h(n+1), \underline{\chs\theta}_h(n)\big), \ued{1}{r,h}\underline{v}_h\Big)_{2,h}\\
  =\Big(N(n)\underline{\chs U}_h^i&\big(\underline{\chs D}_h(n), \underline{\chs\theta}_h(n), \underline{\chs B}_h(n)\big),\underline{v}_h\Big)_{1,h} & \quad\forall \underline{v}_h\in\Xec{1}{r,h}, \label{eq:scheme2.1}\\
  \Big(\delta_t^{n+1} \underline{\chs\theta}_h^i,\underline{w}_h\Big)_{2,h}=\Big(N(n)\underline{\chs E}_h^i&\big(\underline{\chs D}_h(n+1), \underline{\chs\theta}_h(n)\big),\underline{w}_h\Big)_{2,h} & \quad\forall\underline{w}_h\in\Xec{2}{r,h}, \label{eq:scheme2.2}\\
  \Big(\delta_t^{n+1}  \underline{\chs B}_h^i,\underline{v}_h\Big)_{1, h}-\Big(N(n)\underline{\chs E}_h^i\big(&\underline{\chs D}_h(n+1), \underline{\chs\theta}_h(n)\big), \ued{1}{r,h}\underline{v}_h\Big)_{2,h}=0 & \quad\forall \underline{v}_h\in\Xec{1}{r,h}, \label{eq:scheme2.3}
\end{alignat}
\end{subequations}
The $\underline{\chs H}_h$, $\underline{\chs U}_h$ and $\underline{\chs E}_h$ terms are calculated as for the first scheme \eqref{eq:scheme1}, with the replacements $(D^i)_{i\in[1,3]}$ by $\chs \Pec{1}{r,h}\underline{\chs D}_h$, $(\theta^i)_{i\in[1,3]}$ by $\chs\Pec{2}{r,h}\underline{\chs\theta}_h$, and $(B^i)_{i\in[1,3]}$ by $\chs \Pec{1}{r,h}\underline{\chs B}_h$ instead. See appendix \ref{sec:disc.formulas} for details on the calculation procedure.

\begin{remark}\label{rm:choice.DtB}
In schemes \eqref{eq:scheme1} and \eqref{eq:scheme2}, certain unknowns are taken to be the constant Hodge star of their respective fields. For $\underline{\shs D}_h$, $\underline{\shs B}_h$, this choice is more or less enforced by the position of the exterior derivative in \eqref{eq:scheme1.1}, \eqref{eq:scheme2.1}, and \eqref{eq:scheme2.3}, and we can see it play out in the simple case of Maxwell's equations. Take, for example, the vector proxy evolution equation $\partial_t \bvec{E} - \CURL \bvec{B} =0$ for the electric $\bvec{E}$ and magnetic field $\bvec{B}$, and look at the two possible weak forms
\begin{subequations}\label{eq:mx}
\begin{align}
	(\partial_t \bvec{E},\bvec{v})_{\Omega} - (\CURL \bvec{B},\bvec{v})_{\Omega}=0, \label{eq:mx.1}\\
	(\partial_t \bvec{E},\bvec{v})_{\Omega} - (\bvec{B},\CURL\bvec{v})_{\Omega}=0. \label{eq:mx.2}
\end{align}
\end{subequations}
In \eqref{eq:mx.1}, we have that $\bvec{B}\in\Hcurl{\Omega}$, $\bvec{v}\in\Hdiv{\Omega}$, forcing $\bvec{E}\in\Hdiv{\Omega}$, while in \eqref{eq:mx.2}, $\bvec{B}\in\Hdiv{\Omega}$, $\bvec{v}\in\Hcurl{\Omega}$, resulting in $\bvec{E}\in\Hcurl{\Omega}$. In terms of the canonical $1$-form $E$ and $2$-form $B$, \eqref{eq:mx.1} amounts to having the primary unknowns $(\chs E, \chs B)$, and \eqref{eq:mx.2} to $(E, B)$, that are hidden by the vector proxies. A similar reasoning fixes the discretisation of $\theta^i$: In \eqref{eq:scheme1}, it is natural to prefer $\underline{\theta}^i_h\in\Xec{1}{r,h}$ so that a discrete $\underline{B}^i_h=\ued{1}{r,h}\underline{\theta}^i_h$ can be defined directly (recall that this $\underline{B}^i_h$ is necessary to calculate the relations in \eqref{eq:disc.nl.rel}), while for \eqref{eq:scheme2}, it depends on the form of \eqref{eq:scheme2.3} so that certain cancellations can be made to prove the conservation of a discrete constraint (see the proof of Proposition \ref{prop:const.pres}). Of course, these restrictions are just symptoms of the underlying problem that there is no straightforward discrete analogue of the Hodge star that gives an identification between discrete $k$-forms and $(n-k)$-forms.
\end{remark}

\begin{remark}
Explicit schemes are often subject to CFL conditions, where the timestep $\delta t^{n+\frac12}$ needs to be small enough in proportion to $h$ to ensure stability of the numerical scheme, but not so small that simulations times blow up. In practice, it is often trial and error to approach the optimal value, and we noticed that the explicit variations of \eqref{eq:scheme1} and \eqref{eq:scheme2} diverges in certain situations under the timestep imposed by $\left\lceil 3/h^{r+1}\right\rceil$ number of iterations used in the tests of Sec.~\ref{sec:tests}. The choice of semi-implicit time stepping lets us take larger timesteps, and is in general more stable, at the cost of solving a linear system at each time. The exact choice of time for each function in \eqref{eq:scheme1} and \eqref{eq:scheme2} are made so that the system is never fully nonlinear (see the relations \eqref{eq:rel} and how they are dealt with in Appendix \ref{sec:disc.formulas}), which removes the need for expensive nonlinear methods such as Newton. The fact that $\theta^i$ is always discretised at $n$ also means that it can be decoupled and calculated afterwards, making the system much leaner to solve.
\end{remark}

In the two-field scheme \eqref{eq:scheme1}, the discrete versions of the constraints \eqref{eq:form.cst.2} and \eqref{eq:form.cst.3} are preserved in strong form by taking the definition $\underline{B}^i_h\coloneq\ued{1}{r, h}\underline{\theta}_h$ and using the discrete complex property to see that $\ued{2}{r, h}\underline{B}^i_h=\ued{2}{r, h}\ued{1}{r, h}\underline{\theta}_h=0$. For the three-field scheme \eqref{eq:scheme2}, while the same preservation is possible, it would require to solve the explicit form of $\underline{E}^i_h$ at each step, adding to the cost of the method. With the choice of unknowns in the weak formulation \eqref{eq:weak.ev2}, we can prove the conservation of a weak form of these two constraints, while avoiding the extra complexity.

\begin{proposition}[Preservation of discrete constraints for the three-field scheme \eqref{eq:scheme2}]\label{prop:const.pres}~\\
Let $(\underline{\chs D}_h^i(n), \underline{\chs \theta}_h^i(n), \underline{\chs  B}_h^i(n))_{i\in[1,3]}\in(\Xec{1}{r,h}\times\Xec{2}{r,h}\times\Xec{1}{r,h})^3$ be solutions to \eqref{eq:scheme2}. Then the weak constraints $\mathfrak{C}^i_1(n,\underline{u}_h)$, $\mathfrak{C}^i_2(n,\underline{p}_h)$ defined as
\begin{alignat*}{2}
\mathfrak{C}^i_1(n,\underline{u}_h)\coloneq{}&\big(\underline{\chs \theta}_h^i(n),\ued{1}{r,h}\underline{u}_h\big)_{2,h} - (\underline{\chs B}_h^i(n),\underline{u}_h)_{1,h} &\qquad  \forall \underline{u}_h\in\Xec{1}{r,h},\\
\mathfrak{C}^i_2(n,\underline{p}_h)\coloneq{}&\big(\underline{\chs B}_h^i(n),\ued{0}{r,h}\underline{p}_h\big)_{1,h} &\qquad\forall \underline{p}_h\in\Xec{0}{r,h}.
\end{alignat*}
remains stationary for all $n\in[0,N_T]$.
\end{proposition}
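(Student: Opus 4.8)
The plan is to prove stationarity by showing that the forward differences $\mathfrak{C}^i_1(n+1,\underline{u}_h)-\mathfrak{C}^i_1(n,\underline{u}_h)$ and $\mathfrak{C}^i_2(n+1,\underline{p}_h)-\mathfrak{C}^i_2(n,\underline{p}_h)$ vanish identically for every admissible test function and every $n$. Since the discrete $L^2$-products $(\cdot,\cdot)_{k,h}$ and the operators $\ued{k}{r,h}$ are time-independent and bilinear, each increment can be rewritten by pulling the time difference into the first slot and factoring out the timestep, e.g.\ $\underline{\chs\theta}_h^i(n+1)-\underline{\chs\theta}_h^i(n)=\delta t^{n+\frac12}\,\delta_t^{n+1}\underline{\chs\theta}_h^i$. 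The task thus reduces to showing that the relevant combinations of discrete time derivatives, tested against $\ued{1}{r,h}\underline{u}_h$ and $\underline{u}_h$ (respectively $\ued{0}{r,h}\underline{p}_h$), vanish.

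For $\mathfrak{C}^i_1$, I would test the $\theta$-equation \eqref{eq:scheme2.2} with $\underline{w}_h=\ued{1}{r,h}\underline{u}_h\in\Xec{2}{r,h}$ (admissible since $\ued{1}{r,h}$ maps $\Xec{1}{r,h}$ into $\Xec{2}{r,h}$), and the $B$-equation \eqref{eq:scheme2.3} with $\underline{v}_h=\underline{u}_h$. The key observation is that the right-hand sides of the two resulting identities are literally the same quantity, namely $\big(N(n)\underline{\chs E}_h^i(\underline{\chs D}_h(n+1),\underline{\chs\theta}_h(n)),\ued{1}{r,h}\underline{u}_h\big)_{2,h}$, because the scheme evaluates the $\underline{\chs E}_h^i$ flux at exactly the same time-level arguments in \eqref{eq:scheme2.2} and \eqref{eq:scheme2.3}. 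Subtracting the two identities yields $\big(\delta_t^{n+1}\underline{\chs\theta}_h^i,\ued{1}{r,h}\underline{u}_h\big)_{2,h}-\big(\delta_t^{n+1}\underline{\chs B}_h^i,\underline{u}_h\big)_{1,h}=0$, which after multiplication by $\delta t^{n+\frac12}$ is precisely $\mathfrak{C}^i_1(n+1,\underline{u}_h)-\mathfrak{C}^i_1(n,\underline{u}_h)$.

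For $\mathfrak{C}^i_2$, I would test \eqref{eq:scheme2.3} with $\underline{v}_h=\ued{0}{r,h}\underline{p}_h\in\Xec{1}{r,h}$. The flux term then becomes $\big(N(n)\underline{\chs E}_h^i,\ued{1}{r,h}\ued{0}{r,h}\underline{p}_h\big)_{2,h}$, which vanishes by the discrete complex property $\ued{1}{r,h}\ued{0}{r,h}=0$ of the ECDDR sequence. Hence $\big(\delta_t^{n+1}\underline{\chs B}_h^i,\ued{0}{r,h}\underline{p}_h\big)_{1,h}=0$, giving stationarity of $\mathfrak{C}^i_2$ after multiplying by the timestep.

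The main point here is not computational difficulty but recognising that the scheme has been engineered to make these cancellations exact rather than merely approximate. The cancellation for $\mathfrak{C}^i_1$ relies crucially on the $\underline{\chs E}_h^i$ flux appearing with identical arguments in both \eqref{eq:scheme2.2} and \eqref{eq:scheme2.3} (the design choice flagged in Remark \ref{rm:choice.DtB}), while the cancellation for $\mathfrak{C}^i_2$ hinges on the discrete $\ed\ed=0$ identity. The only remaining thing to check is the admissibility of the substitutions $\ued{1}{r,h}\underline{u}_h$ and $\ued{0}{r,h}\underline{p}_h$, which follows directly from the mapping properties of the global discrete exterior derivatives; stationarity for all $n\in[0,N_T]$ is then immediate since each increment is zero.
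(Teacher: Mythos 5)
Your argument is correct and coincides with the paper's own proof: both constraints are handled by taking the time increment, testing \eqref{eq:scheme2.2} with $\ued{1}{r,h}\underline{u}_h$ and \eqref{eq:scheme2.3} with $\underline{u}_h$ (resp.\ $\ued{0}{r,h}\underline{p}_h$), and exploiting that the $\underline{\chs E}_h^i$ flux appears with identical arguments in both equations together with $\ued{1}{r,h}\ued{0}{r,h}=0$. The only cosmetic difference is that you work with forward differences scaled by $\delta t^{n+\frac12}$ while the paper phrases the same computation directly in terms of $\delta_t^{n+1}$.
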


\begin{proof}
We prove that $\mathfrak{C}^i_1(n,\underline{u}_h)$ stays constant by fixing $\underline{u}_h\in\Xec{1}{r,h}$ in time, and taking the discrete time derivative to get 
\begin{align*}
  \delta_t^{n+1}\mathfrak{C}^i_1(\cdot,\underline{u}_h)
  &=\big(\delta_t^{n+1} \underline{\chs \theta}_h^i, \ued{1}{r,h}\underline{u}_h\big)_{2,h} - \big(\delta_t^{n+1}\underline{\chs B}_h^i,\underline{u}_h\big)_{1,h}\\
  &=\Big(N(n)\underline{\chs E}_h^i\big(\underline{\chs D}_h(n+1), \underline{\chs \theta}_h(n)\big),\ued{1}{r,h}\underline{u}_h\Big)_{2,h} - \big(\delta_t^{n+1}\underline{\chs B}_h^i,\underline{u}_h\big)_{1,h} \\
  \overset{\eqref{eq:scheme2.3}}&=\big(\delta_t^{n+1} \underline{\chs B}_h^i,\underline{u}_h\big)_{1,h}- \big(\delta_t^{n+1}\underline{\chs B}_h^i,\underline{u}_h\big)_{1,h}=0,
\end{align*}
where we use \eqref{eq:scheme2.2} with $\underline{u}_h=\ued{1}{r,h}\underline{v}_h$ to get the second line. Since this is true for all $n$, $\mathfrak{C}^i_1(\cdot,\underline{u}_h)$ is stationary in time.

For $\mathfrak{C}^i_2(n,\underline{p}_h)$, we fix $\underline{p}_h\in\Xec{0}{r,h}$ constant in time and take the discrete time derivative
\begin{align*}
  \delta_t^{n+1}\mathfrak{C}^i_2(\cdot,\underline{p}_h)
  &=\big(\delta_t^{n+1}\underline{\chs B}_h^i,\ued{0}{r,h}\underline{p}_h\big)_{1,h}\\
  &=\Big(N(n)\underline{\chs E}_h^i\big(\underline{\chs D}_h(n+1), \underline{\chs \theta}_h(n)\big),\ued{1}{r,h}\ued{0}{r,h}\underline{p}_h\Big)_{2,h}=0,
\end{align*}
where we use the evolution \eqref{eq:scheme2.3} with $\underline{u}_h=\ued{0}{r,h}\underline{p}_h$ to get the second line, and the property $\ued{1}{r,h}\ued{0}{r,h}=0$ to cancel the term. The conclusion follows.
\end{proof}

\begin{remark}
 Proposition \ref{prop:const.pres} applies equally to the explicit variation of the scheme \eqref{eq:scheme2}, since the proof of preservation for the first constraint $\mathfrak{C}^i_1(n,\underline{u}_h)$ depends on a consistent discretisation of the $(N\chs E^i, w)_{\Omega}$ in \eqref{eq:weak.ev2.2} and $(N\chs E^i, \sed v)_{\Omega}$ in \eqref{eq:weak.ev2.3} so that they cancel, which would be true if they were taken at time $n$. The second constraint $\mathfrak{C}^i_2(n,\underline{p}_h)$ simply relies on $\ued{1}{r,h}\ued{0}{r,h}=0$, which is always satisfied.
\end{remark}

\begin{remark}\label{rem:C3}
It is currently unknown if there is any preservation of a discrete version of the Einstein constraint \eqref{eq:form.cst.1} for either scheme. We see in Theorem \ref{thm:einstein1} that for the continuous case, it is a direct result of the contracted Bianchi identity, which is claimed in \cite{Olivares.Peshkov.ea:22} to be equivalent to the evolution of \eqref{eq:3+1.2}. The implication of a discrete version of this equation is unlikely due to the presence of nonlinear terms in the derivatives of \eqref{eq:3+1.2}, which do not expand nicely once the terms are discretised; similar issues have been encountered in the design of DDR schemes for the Yang--Mills equations \cite{Droniou.Oliynyk.ea:23}. Nevertheless, we can define a discrete quantity $\mathfrak{C}^\alpha_3$ reflecting the constraint \eqref{eq:form.cst.1} in the same way as for Proposition \ref{prop:const.pres}. For the two field system \eqref{eq:scheme1}, the nonlinear term is constructed the same way as in the scheme
\begin{equation*}
	(\underline{\chs V}_h^\alpha(n),\underline{p}_h)_{0,h} = \int_\Omega\chs V^\alpha(\Pec{1}{r,h}\underline{\chs D}_h(n), \ded{1}{r,h}\underline{\theta}_h(n), \Pec{1}{r,h}\underline{\theta}_h(n))\wedge\chs\Pec{0}{r,h}\underline{p}_h,
\end{equation*}
and for $\alpha=0$, we also reconstruct the $D^0$ to get
\begin{align*}
	\mathfrak{C}^0_3(n,\underline{p}_h)\coloneq{}&-\int_\Omega D^0(\ded{1}{r,h}\underline{\theta}_h(n), \Pec{1}{r,h}\underline{\theta}_h(n))\wedge\ded{0}{r,h}\underline{p}_h - (\underline{\chs V}_h^0(n),\underline{p}_h)_{0,h} &  \forall \underline{p}_h\in\Xec{0}{r,h},
\end{align*}
while for $\alpha=i$,
\begin{align*}
	\mathfrak{C}^i_3(n,\underline{p}_h)\coloneq{}&-\int_\Omega\Pec{1}{r,h}\underline{\chs D}^i_h(n)\wedge\chs\ded{0}{r,h}\underline{p}_h - (\underline{\chs V}_h^i(n),\underline{p}_h)_{0,h} &  \forall \underline{p}_h\in\Xec{0}{r,h}.
\end{align*}
A similar quantity can be defined for the three-field system \eqref{eq:scheme2}.
\end{remark}

\section{Numerical tests}\label{sec:tests}

\begin{figure}\centering
  \begin{minipage}{0.275\textwidth}
    \includegraphics[width=0.90\textwidth]{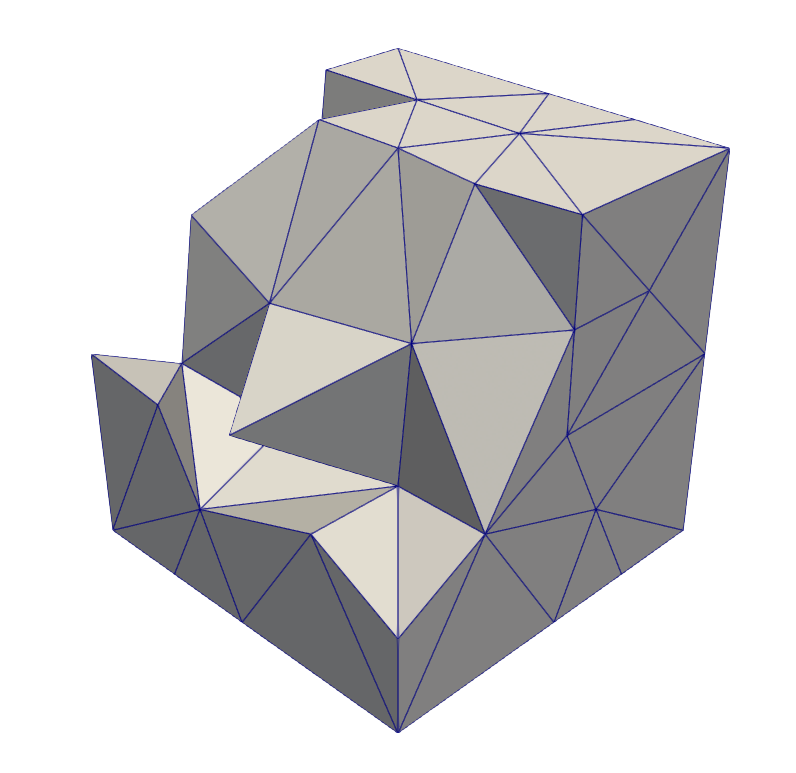}
    \subcaption{Mesh 1}
  \end{minipage}
  \hspace{0.25cm}
  \begin{minipage}{0.275\textwidth}
    \includegraphics[width=0.90\textwidth]{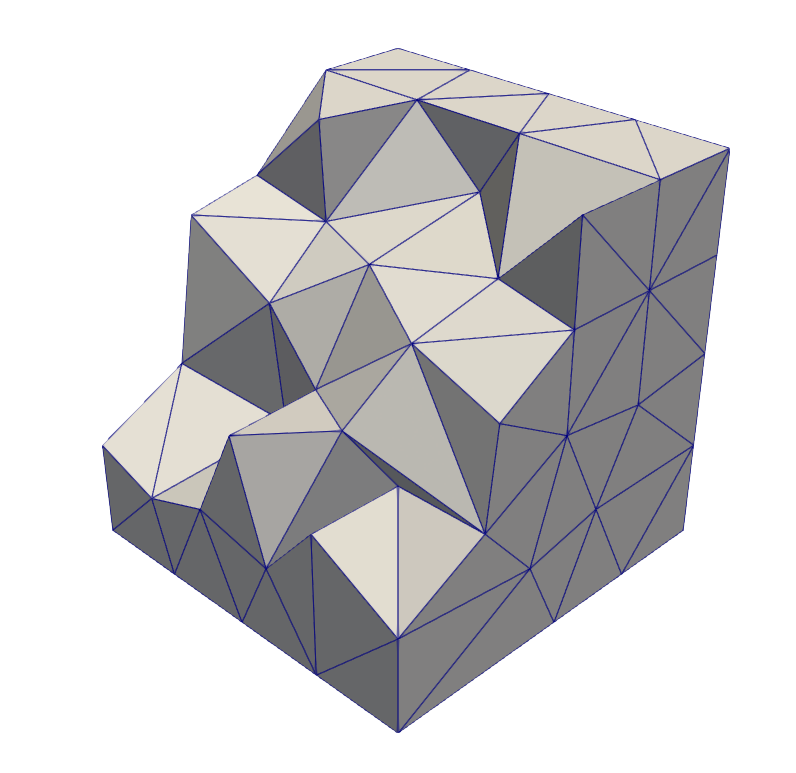}
    \subcaption{Mesh 2}
  \end{minipage}
  \hspace{0.25cm}
  \begin{minipage}{0.275\textwidth}
    \includegraphics[width=0.90\textwidth]{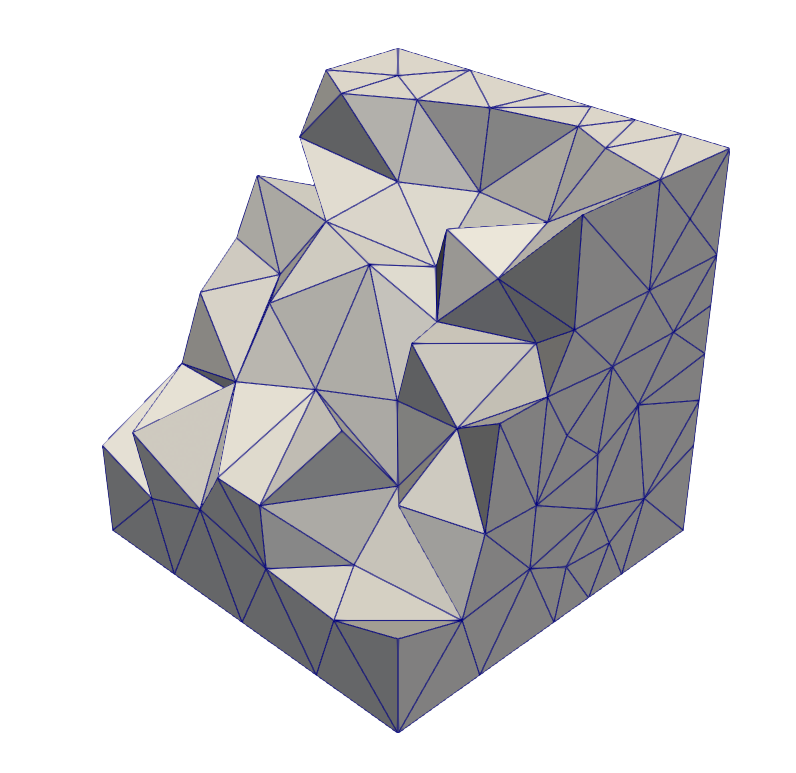}
    \subcaption{Mesh 3}
  \end{minipage} \\
  \begin{minipage}{0.275\textwidth}
    \includegraphics[width=0.90\textwidth]{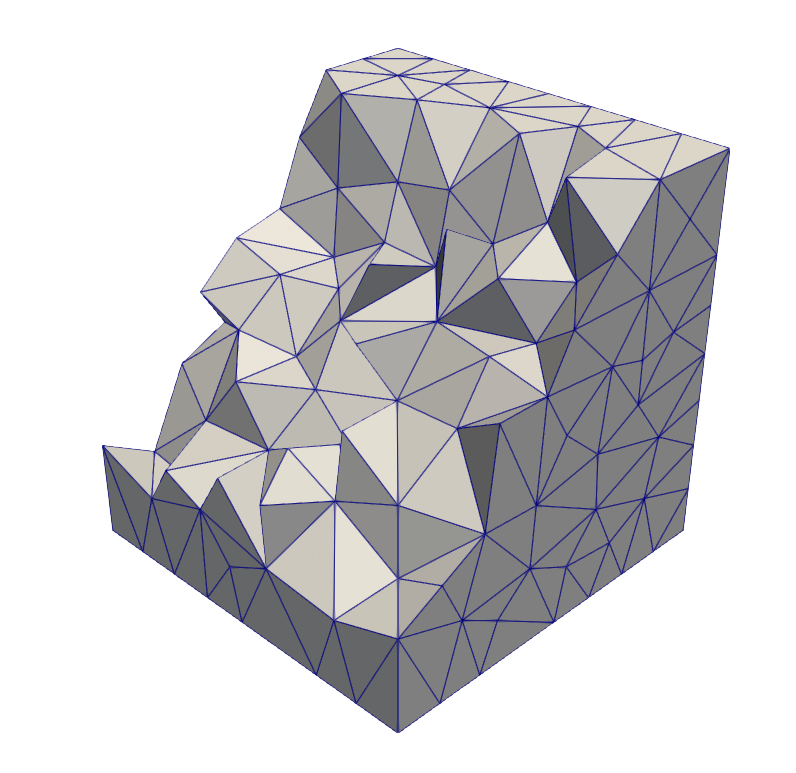}
    \subcaption{Mesh 4}
  \end{minipage}
  \hspace{0.25cm}
  \begin{minipage}{0.275\textwidth}
    \includegraphics[width=0.90\textwidth]{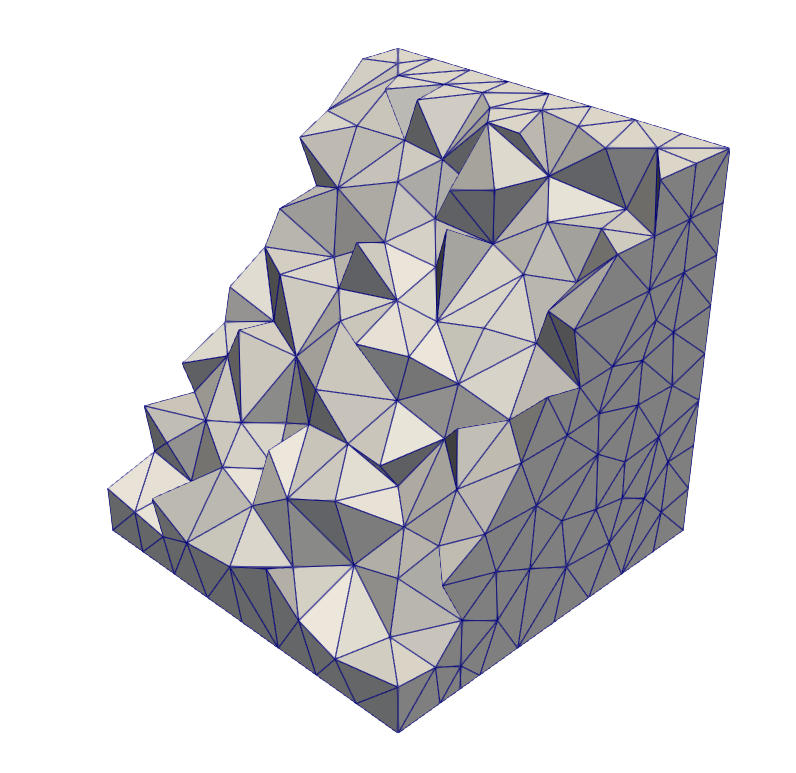}
    \subcaption{Mesh 5}
  \end{minipage}
  \caption{The ``Tetgen-cube-0'' mesh sequence used in the numerical tests}
  \label{fig:mesh.tets}
\end{figure}

\begin{figure}\centering
  \begin{minipage}{0.275\textwidth}
    \includegraphics[width=0.90\textwidth]{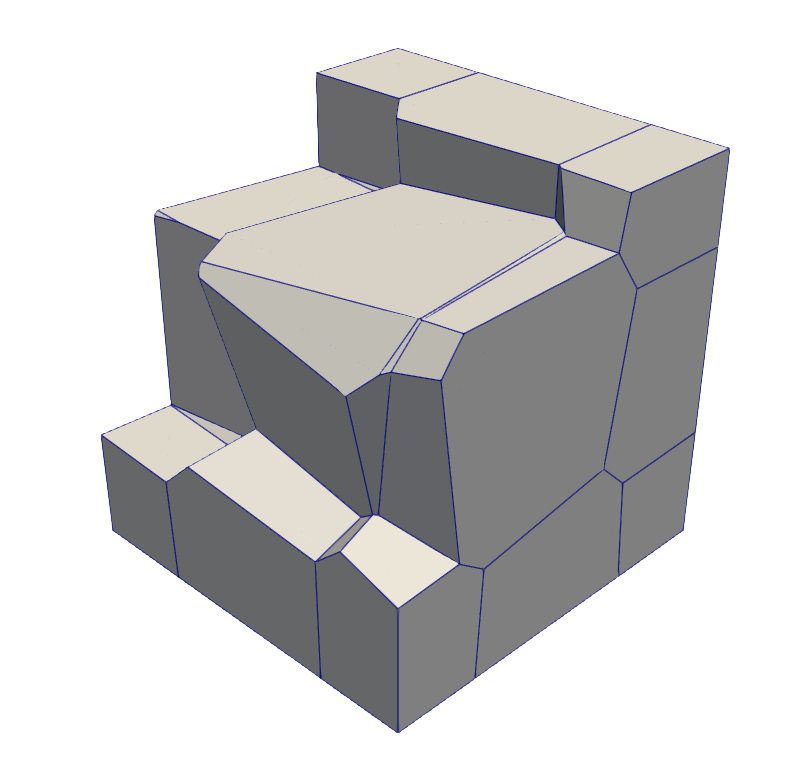}
    \subcaption{Mesh 1}
  \end{minipage}
  \hspace{0.25cm}
  \begin{minipage}{0.275\textwidth}
    \includegraphics[width=0.90\textwidth]{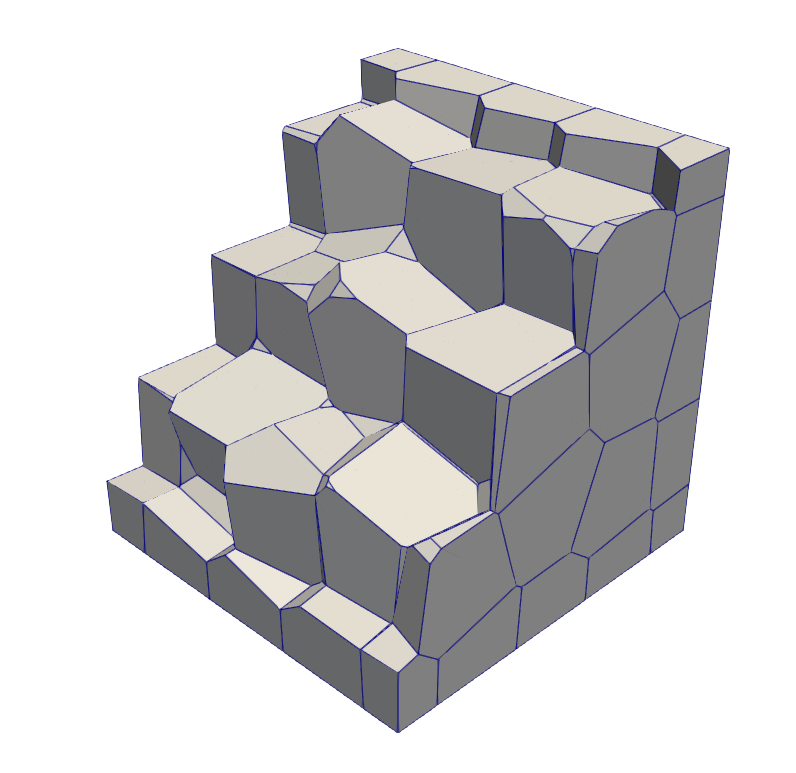}
    \subcaption{Mesh 2}
  \end{minipage}
  \hspace{0.25cm}
  \begin{minipage}{0.275\textwidth}
    \includegraphics[width=0.90\textwidth]{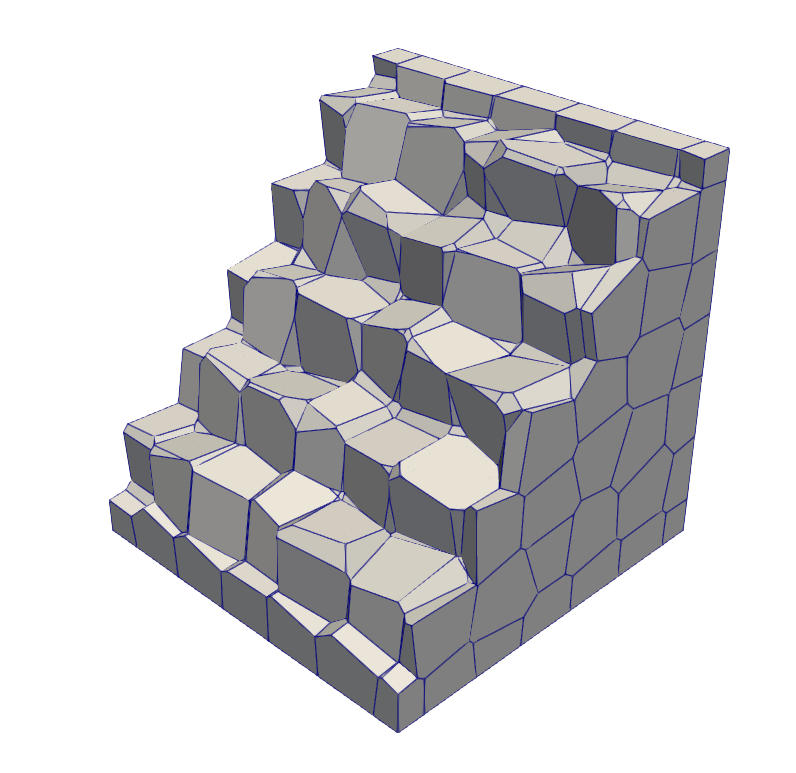}
    \subcaption{Mesh 3}
  \end{minipage} \\
  \begin{minipage}{0.275\textwidth}
    \includegraphics[width=0.90\textwidth]{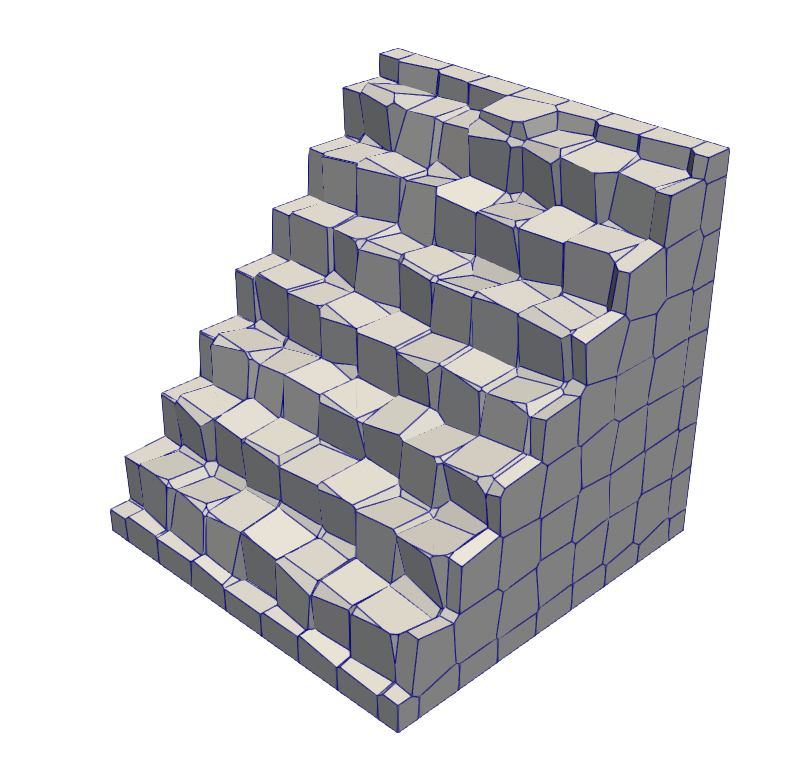}
    \subcaption{Mesh 4}
  \end{minipage}
  \hspace{0.25cm}
  \begin{minipage}{0.275\textwidth}
    \includegraphics[width=0.90\textwidth]{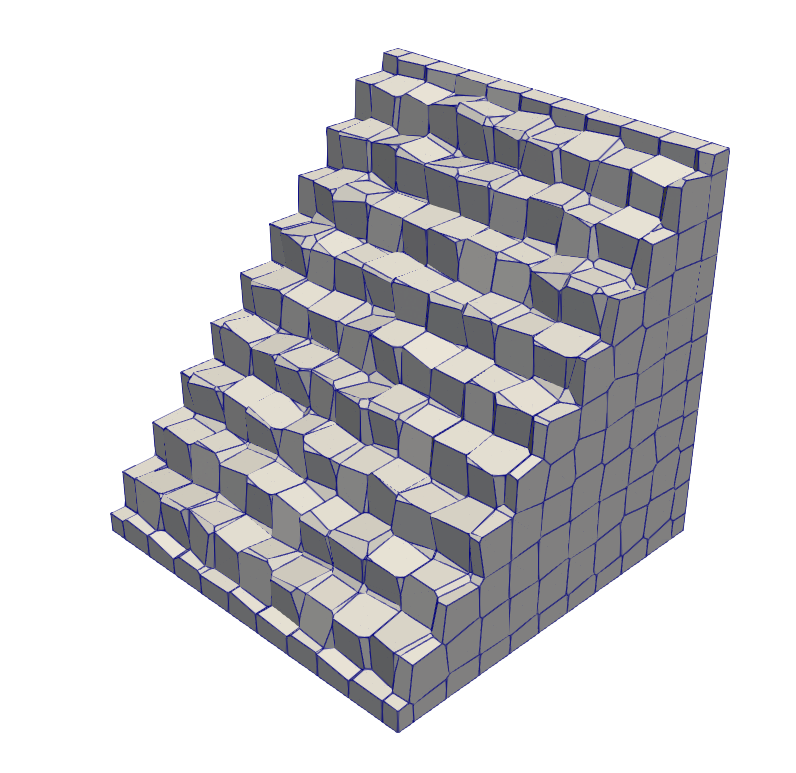}
    \subcaption{Mesh 5}
  \end{minipage}
  \caption{The ``Voro-small-0'' mesh sequence used in the numerical tests}
  \label{fig:mesh.voro}
\end{figure}

\begin{figure}\centering
\begin{tabular}{|c|cllll|}
\toprule
\multicolumn{1}{|c|}{\textbf{Tetgen-cube-0}}&Mesh size $h$& No. Vertices & No. Edges &No. Faces & No. Cells\\
\midrule

Mesh 1 & 0.559 & 75 & 354 & 496 & 216 \\
Mesh 2 & 0.500 & 124 & 628 & 913 & 408 \\
Mesh 3 & 0.392 & 229 & 1217 & 1805 & 816 \\
Mesh 4 & 0.313 & 383 & 2139 & 3261 & 1504 \\
Mesh 5 & 0.257 & 663 & 3965 & 6228 & 2925 \\
\midrule
\multicolumn{1}{|c|}{\textbf{Voro-small-0}}&Mesh size $h$& No. Vertices & No. Edges &No. Faces & No. Cells\\
\midrule
Mesh 1 & 0.827 & 138 & 272 & 162 & 27 \\
Mesh 2 & 0.454 & 678 & 1352 & 800 & 125 \\
Mesh 3 & 0.305 & 2011 & 4018 & 2351 & 343 \\
Mesh 4 & 0.221 & 4370 & 8736 & 5096 & 729 \\
Mesh 5 & 0.177 & 8179 & 16354 & 9507 & 1331 \\
\bottomrule
\end{tabular}
\caption{Mesh statistics for ``Tetgen-cube-0'' (Figure \ref{fig:mesh.tets}) and ``Voro-small-0'' (Figure \ref{fig:mesh.voro})}
\label{tab:mesh.stats}
\end{figure}

We present here some numerical results obtained by both schemes \eqref{eq:scheme1} and \eqref{eq:scheme2}. The numerical code for these schemes can be found in the \texttt{HArDCore3D} repository (see \url{https://github.com/jdroniou/HArDCore}), and were implemented in C++ utilising the exterior calculus DDR module provided by the \texttt{HArDCore3D} library. The matrix operations are facilitated with the \texttt{Eigen3} library (see \url{http:
//eigen.tuxfamily.org}), and linear systems solved using the parallel direct solver \texttt{Intel MKL PARADISO} (see \url{https://software.intel.com/en-us/mkl}).

We test two exact solutions to Einstein's equations, the first being the homogeneous Kasner solution, that describes the expansion of the universe right after the big bang, and the second the Gowdy wave solution in \cite{Alcubierre.Allen.ea:03}, modeling gravitational waves in an expanding universe. The time runs from $1$ to $1.1$ for both tests, avoiding the $t<1$ region where the Kasner solution might be too anisotropic, and the possible complications with long-term simulations since there is no apparent reference for the scale of time that we are dealing with. The size of the timestep is decided by the number of iterations $\left\lceil 3/h^{r+1}\right\rceil$ that depends on the mesh size $h$, and the polynomial degree $r$, so that an $O(h^{r+1})$ convergence on the error can be expected. The spatial domain is the unit cube $\Omega=(0,1)^3$, discretised with tetrahedral ``Tetgen-cube-0'' (Figure \ref{fig:mesh.tets}) and Voronoi ``Voro-small-0'' (Figure \ref{fig:mesh.voro}) mesh sequences  provided in the \texttt{HArDCore3D} library. Both tests are run using the two polynomial degrees $r=0$ and $r=1$ for the ECDDR discretisation, and with natural boundary conditions. Although standard tests for these solutions are run with periodic boundary conditions, this practice is less common in the usual problems tackled by the \texttt{HArDCore3D} library, and the addition of the necessary framework is still a work in progress. Errors are computed at the final time in two different norms: for a solution $Z^i$, and its approximation $\underline{Z}^i_h\in\Xec{k}{r,h}$, we calculate the discrete total relative error as
\begin{equation}\label{eq:dtre}
	{\rm E_{disc}}({Z}) \coloneq \frac{\sum_{i=1}^3||\underline{Z}^i_h-\Iec{k}{r,h}Z^i||_{k, h}}{\sum_{i=1}^3||Z^i||_{\Lpdf{2}{k}{\Omega}}},
\end{equation}
where $\Iec{k}{r,h}$ is the interpolator onto $\Xec{k}{r,h}$, and the continuous total relative error as
\begin{equation}\label{eq:ctre}
	{\rm E_{cont}}({Z}) \coloneq \frac{\sum_{i=1}^3||\Pec{k}{r,h}\underline{Z}^i_h-Z^i||_{\Lpdf{2}{k}{\Omega}}}{\sum_{i=1}^3||Z^i||_{\Lpdf{2}{k}{\Omega}}}.
\end{equation}
The discrete error measures the difference between the discrete solution and the interpolate of the exact solution, while the continuous error measures the difference between the polynomial reconstruction and the exact solution. In particular, ${\rm E_{cont}}$ does not depend on the discrete $L^2$-product, that can vary depending on the chosen stabilisation parameter, making it a better measure of convergence when we test different stabilisation parameters.

\subsection{Convergence tests}
In 3 dimensions, the Kasner line element is
\begin{equation*}
	\ed s^2 = -\ed t^2 + t^{2i_1}\ed x^2 + t^{2i_2}\ed y^2 + t^{2i_3}\ed z^2
\end{equation*}
for real indices meeting the Kasner conditions
\begin{equation*}
i_1+i_2+i_3=1, \qquad
i_1^2+i_2^2+i_3^2=1.
\end{equation*}
We choose the solution with $i_1=\frac12$, $i_2=\frac{1-\sqrt{5}}{4}$, $i_3=\frac{1+\sqrt{5}}{4}$, and run the schemes with lapse $N=1$ to get the results for Scheme \eqref{eq:scheme1} in Figure \ref{fig:kas.sch1} and for Scheme \eqref{eq:scheme2} in Figure \ref{fig:kas.sch2}, measured at the final time $t=1.1$.


\begin{figure}\centering
	\ref{kas.sch1}
  \vspace{0.50cm}\\
  \begin{minipage}{0.45\textwidth}
    \begin{tikzpicture}[scale=0.85]
      \begin{loglogaxis} [legend columns=2, legend to name=kas.sch1]
        \logLogSlopeTriangle{0.90}{0.4}{0.1}{1}{black}
        \logLogSlopeTriangle{0.90}{0.4}{0.1}{2}{black}
        \addplot [mark=star, red] table[x=MeshSize,y=relE_L2D] {results/ecddr-einstein/kasner/Tetgen-Cube-0_k0/data_rates-1.dat};
        \addlegendentry{${\rm E_{disc}}(D)$, $r=0$;}
        \addplot [mark=o, blue] table[x=MeshSize,y=relE_L2Theta] {results/ecddr-einstein/kasner/Tetgen-Cube-0_k0/data_rates-1.dat};
         \addlegendentry{${\rm E_{disc}}(\theta)$, $r=0$}
         \addplot [mark=star, mark options=solid, red, dashed] table[x=MeshSize,y=relE_L2D] {results/ecddr-einstein/kasner/Tetgen-Cube-0_k1/data_rates-1.dat};
         \addlegendentry{${\rm E_{disc}}(D)$, $r=1$;}
         \addplot [mark=o, mark options=solid, blue, dashed] table[x=MeshSize,y=relE_L2Theta] {results/ecddr-einstein/kasner/Tetgen-Cube-0_k1/data_rates-1.dat};
         \addlegendentry{${\rm E_{disc}}(\theta)$, $r=1$}
      \end{loglogaxis}            
    \end{tikzpicture}
    \subcaption{Tetrahedral meshes (Figure \ref{fig:mesh.tets})}
  \end{minipage}
  \begin{minipage}{0.45\textwidth}
    \begin{tikzpicture}[scale=0.85] 
      \begin{loglogaxis}
        \logLogSlopeTriangle{0.90}{0.4}{0.1}{1}{black}
        \logLogSlopeTriangle{0.90}{0.4}{0.1}{2}{black}
        \addplot [mark=star, red] table[x=MeshSize,y=relE_L2D] {results/ecddr-einstein/kasner/Voro-small-0_k0/data_rates-1.dat};
        \addplot [mark=o, blue] table[x=MeshSize,y=relE_L2Theta] {results/ecddr-einstein/kasner/Voro-small-0_k0/data_rates-1.dat};
        \addplot [mark=star, mark options=solid, red, dashed] table[x=MeshSize,y=relE_L2D] {results/ecddr-einstein/kasner/Voro-small-0_k1/data_rates-1.dat};
        \addplot [mark=o, mark options=solid, blue, dashed] table[x=MeshSize,y=relE_L2Theta] {results/ecddr-einstein/kasner/Voro-small-0_k1/data_rates-1.dat};
        \end{loglogaxis} 
      \end{tikzpicture}
    \subcaption{Voronoi meshes (Figure \ref{fig:mesh.voro})}
  \end{minipage}
  \captionsetup{width=0.85\textwidth}
  \caption{Kasner solution for the two-field scheme \eqref{eq:scheme1}: Plot of discrete total relative error \eqref{eq:dtre} for $D$, $\theta$, (vertical axis) against mesh size $h$ (horizontal axis) for $r=0$, $1$ polynomial degree ECDDR spaces (stabilisation parameter $\rho=1$ \eqref{def:L2p})} 
  \label{fig:kas.sch1}
\end{figure} 

\begin{figure}\centering
	\ref{kas.sch2}
  \vspace{0.50cm}\\
  \begin{minipage}{0.45\textwidth}
    \begin{tikzpicture}[scale=0.85]
      \begin{loglogaxis} [legend columns=3, legend to name=kas.sch2]
        \logLogSlopeTriangle{0.90}{0.4}{0.1}{1}{black}
        \logLogSlopeTriangle{0.90}{0.4}{0.1}{2}{black}
        \addplot [mark=star, red] table[x=MeshSize,y=relE_L2D] {results/ecddr-einstein-wp/kasner/stab_1/Tetgen-Cube-0_k0/data_rates-1.dat};
        \addlegendentry{${\rm E_{disc}}(D)$, $r=0$;}
        \addplot [mark=o, blue] table[x=MeshSize,y=relE_L2Theta] {results/ecddr-einstein-wp/kasner/stab_1/Tetgen-Cube-0_k0/data_rates-1.dat};
         \addlegendentry{${\rm E_{disc}}(\theta)$, $r=0$;}
         \addplot [mark=square, green] table[x=MeshSize,y=relE_L2B] {results/ecddr-einstein-wp/kasner/stab_1/Tetgen-Cube-0_k0/data_rates-1.dat};
         \addlegendentry{${\rm E_{disc}}(B)$, $r=0$;}
         \addplot [mark=star, mark options=solid, red, dashed] table[x=MeshSize,y=relE_L2D] {results/ecddr-einstein-wp/kasner/stab_1/Tetgen-Cube-0_k1/data_rates-1.dat};
         \addlegendentry{${\rm E_{disc}}(D)$, $r=1$;}
         \addplot [mark=o, mark options=solid, blue, dashed] table[x=MeshSize,y=relE_L2Theta] {results/ecddr-einstein-wp/kasner/stab_1/Tetgen-Cube-0_k1/data_rates-1.dat};
         \addlegendentry{${\rm E_{disc}}(\theta)$, $r=1$;}
         \addplot [mark=square, mark options=solid, green, dashed] table[x=MeshSize,y=relE_L2B] {results/ecddr-einstein-wp/kasner/stab_1/Tetgen-Cube-0_k1/data_rates-1.dat};
         \addlegendentry{${\rm E_{disc}}(B)$, $r=1$;}
      \end{loglogaxis}            
    \end{tikzpicture}
    \subcaption{Tetrahedral meshes (Figure \ref{fig:mesh.tets})}
  \end{minipage}
  \begin{minipage}{0.45\textwidth}
    \begin{tikzpicture}[scale=0.85] 
      \begin{loglogaxis}
        \logLogSlopeTriangle{0.90}{0.4}{0.1}{1}{black}
        \logLogSlopeTriangle{0.90}{0.4}{0.1}{2}{black} 
        \addplot [mark=star, red] table[x=MeshSize,y=relE_L2D] {results/ecddr-einstein-wp/kasner/stab_1/Voro-small-0_k0/data_rates-1.dat};
        \addplot [mark=o, blue] table[x=MeshSize,y=relE_L2Theta] {results/ecddr-einstein-wp/kasner/stab_1/Voro-small-0_k0/data_rates-1.dat};
        \addplot [mark=square, green] table[x=MeshSize,y=relE_L2B] {results/ecddr-einstein-wp/kasner/stab_1/Voro-small-0_k0/data_rates-1.dat};
        \addplot [mark=star, mark options=solid, red, dashed] table[x=MeshSize,y=relE_L2D] {results/ecddr-einstein-wp/kasner/stab_1/Voro-small-0_k1/data_rates-1.dat};
        \addplot [mark=o, mark options=solid, blue, dashed] table[x=MeshSize,y=relE_L2Theta] {results/ecddr-einstein-wp/kasner/stab_1/Voro-small-0_k1/data_rates-1.dat};
        \addplot [mark=square, mark options=solid, green, dashed] table[x=MeshSize,y=relE_L2B] {results/ecddr-einstein-wp/kasner/stab_1/Voro-small-0_k1/data_rates-1.dat};
        \end{loglogaxis} 
      \end{tikzpicture}
    \subcaption{Voronoi meshes (Figure \ref{fig:mesh.voro})}
  \end{minipage}
  \captionsetup{width=0.85\textwidth}
  \caption{Kasner solution for the three-field scheme \eqref{eq:scheme2}: Plot of discrete total relative error \eqref{eq:dtre} for $D$, $\theta$, $B$ (vertical axis) against mesh size $h$ (horizontal axis) for $r=0$, $1$ polynomial degree ECDDR spaces (stabilisation parameter $\rho=1$ \eqref{def:L2p})}
  \label{fig:kas.sch2}
\end{figure}

The Gowdy wave solution is
\begin{equation*}
	\ed s^2 = -t^{-\frac12}\exp^{\frac{\lambda}{2}}\ed t^2 + t\exp^{P}\ed x^2 + t\exp^{-P}\ed y^2 + t^{-\frac12}\exp^{\frac{\lambda}{2}}\ed z^2
\end{equation*}
where
\begin{align*}
P ={}&J_0(2\pi t)\cos(2\pi z), \\
\lambda ={}&-2\pi t J_0(2\pi t)J_1(2\pi t)\cos^2(2\pi z)+2\pi^2 t^2\big(J^2_0(2\pi t)+J^2_1(2\pi t)\big)\\
{}&-\frac12 \Big((2\pi)^2\big(J^2_0(2\pi) +J^2_1(2\pi)\big) - 2\pi J_0(2\pi)J_1(2\pi)\Big),
\end{align*}
and $J_0$, $J_1$ are the Bessel functions of the first kind. The simulations are run with lapse $N=t^{-\frac14}\exp^{\frac{\lambda}{4}}$ to get Figure \ref{fig:gow.sch1} (Scheme \eqref{eq:scheme1}) and Figure \ref{fig:gow.sch2} (Scheme \eqref{eq:scheme2}), measured at the final time. 

The ideal convergence of $r+1$, $r$ being the degree of the discrete spaces, is seen in Figure \ref{fig:kas.sch1}, likely due to the simplicity of the Kasner solution, but not in Figure \ref{fig:kas.sch2} for $D^i$ and $B^i$. The lower convergence of $B^i$ is not unexpected; since $\theta^i$ is theoretically approximated with a rate of $r+1$, its derivative $B^i$ should be approximated at an order $r$; since the evolution of $B^i$ \eqref{eq:form.ev2.3} is just the exterior derivative of the evolution of $\theta^i$ \eqref{eq:form.ev2.2}, the extra equation \eqref{eq:scheme2.3} in the scheme does not in fact bring any new information that is not already informed by the relation $B^i=\sed\theta^i$. The convergence of $D^i$ however is anticipated, but not seen for \eqref{eq:scheme1}. We noticed that this behaviour can be modified by changing the stabilisation parameter $\rho$ in the $L^2$-product (see \eqref{def:L2p}), that regulates the influence of the stabilisation form on the product. Increasing $\rho$ seems to have a positive effect in this case, and indeed past a certain threshold gives convergence to $D^i$ (see Figure \ref{fig:stab.kas2} and \ref{fig:stab.kas2.1} for a comparison of different stabilisations in the case $r=0$ with both the discrete and continuous norms). For the Kasner problem, this convergence can be determined from the results of the coarsest mesh, where there is a noticeable gap between the convergent and non convergent solutions, and this should be considered before running the entire sequence. In general, the ``best'' choice of stabilisation is problem-dependent; Figure \ref{fig:stab.gow2} and \ref{fig:stab.gow2.1} shows a comparison for the Gowdy wave solution in both the discrete and continuous norms, where the the choice of $\rho=1$ is quite acceptable, and larger $\rho$ leads to larger errors, while too small ($\rho=0.1$) is not ideal either. The a priori selection of good values for the stabilisations parameters in polytopal methods is still a relatively unknown area, and our findings show that there is interest in understanding more. For the Gowdy wave solution, we see convergence for both schemes that is more steady for \eqref{eq:scheme1} in Figure \ref{fig:gow.sch1}, and a rougher and lower rate for \eqref{eq:scheme2} on $D^i$ and $B^i$ in Figure \ref{fig:gow.sch2}.

Overall, even with comparatively coarse mesh sequences (Figure \ref{tab:mesh.stats}), and a low number of timesteps, we obtain decently small relative errors for the two-field scheme \eqref{eq:scheme1}. These tests indicate that \eqref{eq:scheme1} is more stable and robust than \eqref{eq:scheme2}, and possibly the addition or treatment of the equation \eqref{eq:scheme2.3} is problematic.  This might indicate that the model \eqref{eq:weak.ev2} is not suitable for direct discretisation, and that a modified version might have to be considered. On the contrary, even though it is formally identical, \eqref{eq:weak.ev1} seems to be a better choice for building a numerical scheme. The analysis of these models and a better understanding of their properties that makes them, or not, suitable for schemes is an interesting avenue for future research.

\subsection{Constraint preservation}
Theorem \ref{prop:const.pres} is tested for the tetrahedral mesh sequence for $r=0$ with initial condition and lapse given by those of the Gowdy wave solution. We impose homogeneous natural boundary conditions on $E^i$, so that following the proof, the discrete constraints $\mathfrak{C}^i_1(n,\underline{u}_h)$, $\mathfrak{C}^i_2(n,\underline{u}_h)$ stay constant in the evolution. The results, visualised in Figure \ref{fig:constraint}, confirms that both constraints are preserved up to machine precision.

We plot the dual norm of the discrete constraint $\mathfrak{C}^\alpha_3(n,\underline{p}_h)$, proposed in Remark \ref{rem:C3} for the two-field scheme \eqref{eq:scheme1}, that measures the satisfaction of the Hamiltonian constraint \eqref{eq:form.cst.1}. Figure \ref{fig:constraint.2} shows the constraint norm at the final time for the Gowdy wave tests (Figure \ref{fig:gow.sch1}). In these results, the quantity decreases as the mesh is refined, or when the degree $r$ is increased, which supports the natural idea that a more accurate approximation should also better satisfy this discrete version of the constraint. The worst performing value, even when increasing the degree, corresponds to the constraint on $D^0$; this is further confirmed by Figure \ref{fig:constraint.3} that plots the evolution of each constraint from initial to final time. This result is, however, not surprising, since $D^0$ is the only field that is not explicitly evolved. Instead, it is reconstructed from the discrete exterior derivative of $\ulh{\theta}$ according to \eqref{eq:rel.D0}. A possible option to control $\mathfrak{C}^0_3(n,\underline{p}_h)$, which is necessary for investigations into long term simulations, is to simply add $D^0$ and its evolution explicity to the scheme, to obtain a better approximation that should theoretically lead to smaller constraint violations.

\begin{figure}\centering
	\ref{D.theta.3}
  \vspace{0.50cm}\\
  \begin{minipage}{0.45\textwidth}
    \begin{tikzpicture}[scale=0.85]
      \begin{loglogaxis} [legend columns=2, legend to name=D.theta.3]
        \logLogSlopeTriangle{0.90}{0.4}{0.1}{1}{black}
        \logLogSlopeTriangle{0.90}{0.4}{0.1}{2}{black}
        \addplot [mark=star, red] table[x=MeshSize,y=relE_L2D] {results/ecddr-einstein/gowdy/Tetgen-Cube-0_k0/data_rates-1.dat};
        \addlegendentry{${\rm E_{disc}}(D)$, $r=0$;}
        \addplot [mark=o, blue] table[x=MeshSize,y=relE_L2Theta] {results/ecddr-einstein/gowdy/Tetgen-Cube-0_k0/data_rates-1.dat};
         \addlegendentry{${\rm E_{disc}}(\theta)$, $r=0$;}
        \addplot [mark=star, mark options=solid, red, dashed] table[x=MeshSize,y=relE_L2D] {results/ecddr-einstein/gowdy/Tetgen-Cube-0_k1/data_rates-1.dat};
        \addlegendentry{${\rm E_{disc}}(D)$, $r=1$;}
        \addplot [mark=o, mark options=solid, blue, dashed] table[x=MeshSize,y=relE_L2Theta] {results/ecddr-einstein/gowdy/Tetgen-Cube-0_k1/data_rates-1.dat};
        \addlegendentry{${\rm E_{disc}}(\theta)$, $r=1$;}
      \end{loglogaxis}            
    \end{tikzpicture}
    \subcaption{Tetrahedral meshes (Figure \ref{fig:mesh.tets})}
  \end{minipage}
  \begin{minipage}{0.45\textwidth}
    \begin{tikzpicture}[scale=0.85] 
      \begin{loglogaxis}
        \logLogSlopeTriangle{0.90}{0.4}{0.1}{1}{black}
        \logLogSlopeTriangle{0.90}{0.4}{0.1}{2}{black}
        \addplot [mark=star, red] table[x=MeshSize,y=relE_L2D] {results/ecddr-einstein/gowdy/Voro-small-0_k0/data_rates-1.dat};
        \addplot [mark=o, blue] table[x=MeshSize,y=relE_L2Theta] {results/ecddr-einstein/gowdy/Voro-small-0_k0/data_rates-1.dat};
        \addplot [mark=star, mark options=solid, red, dashed] table[x=MeshSize,y=relE_L2D] {results/ecddr-einstein/gowdy/Voro-small-0_k1/data_rates-1.dat};
        \addplot [mark=o, mark options=solid, blue, dashed] table[x=MeshSize,y=relE_L2Theta] {results/ecddr-einstein/gowdy/Voro-small-0_k1/data_rates-1.dat};
        \end{loglogaxis}
      \end{tikzpicture}
    \subcaption{Voronoi meshes (Figure \ref{fig:mesh.voro})}
  \end{minipage}
  \captionsetup{width=0.85\textwidth}
  \caption{Gowdy wave solution for the two-field scheme \eqref{eq:scheme1}: Plot of discrete total relative error \eqref{eq:dtre} for $D$, $\theta$, (vertical axis) against mesh size $h$ (horizontal axis) for $r=0$, $1$ polynomial degree ECDDR spaces (stabilisation parameter $\rho=1$ \eqref{def:L2p})}
  \label{fig:gow.sch1}
\end{figure}

\begin{figure}\centering
	\ref{gow.sch2}
  \vspace{0.50cm}\\
  \begin{minipage}{0.45\textwidth}
    \begin{tikzpicture}[scale=0.85]
      \begin{loglogaxis} [legend columns=3, legend to name=gow.sch2]
        \logLogSlopeTriangle{0.90}{0.4}{0.1}{1}{black}
        \logLogSlopeTriangle{0.90}{0.4}{0.1}{2}{black}
        \addplot [mark=star, red] table[x=MeshSize,y=relE_L2D] {results/ecddr-einstein-wp/gowdy/stab_1/Tetgen-Cube-0_k0/data_rates-1.dat};
        \addlegendentry{${\rm E_{disc}}(D)$, $r=0$;}
        \addplot [mark=o, blue] table[x=MeshSize,y=relE_L2Theta] {results/ecddr-einstein-wp/gowdy/stab_1/Tetgen-Cube-0_k0/data_rates-1.dat};
         \addlegendentry{${\rm E_{disc}}(\theta)$, $r=0$;}
         \addplot [mark=square, green] table[x=MeshSize,y=relE_L2B] {results/ecddr-einstein-wp/gowdy/stab_1/Tetgen-Cube-0_k0/data_rates-1.dat};
         \addlegendentry{${\rm E_{disc}}(B)$, $r=0$;}
         \addplot [mark=star, mark options=solid, red, dashed] table[x=MeshSize,y=relE_L2D] {results/ecddr-einstein-wp/gowdy/stab_1/Tetgen-Cube-0_k1/data_rates-1.dat};
         \addlegendentry{${\rm E_{disc}}(D)$, $r=1$;}
         \addplot [mark=o, mark options=solid, blue, dashed] table[x=MeshSize,y=relE_L2Theta] {results/ecddr-einstein-wp/gowdy/stab_1/Tetgen-Cube-0_k1/data_rates-1.dat};
         \addlegendentry{${\rm E_{disc}}(\theta)$, $r=1$;}
         \addplot [mark=square, mark options=solid, green, dashed] table[x=MeshSize,y=relE_L2B] {results/ecddr-einstein-wp/gowdy/stab_1/Tetgen-Cube-0_k1/data_rates-1.dat};
         \addlegendentry{${\rm E_{disc}}(B)$, $r=1$;}
      \end{loglogaxis}            
    \end{tikzpicture}
    \subcaption{Tetrahedral meshes (Figure \ref{fig:mesh.tets})}
  \end{minipage}
  \begin{minipage}{0.45\textwidth}
    \begin{tikzpicture}[scale=0.85] 
      \begin{loglogaxis}
        \logLogSlopeTriangle{0.90}{0.4}{0.1}{1}{black}
        \logLogSlopeTriangle{0.90}{0.4}{0.1}{2}{black} 
        \addplot [mark=star, red] table[x=MeshSize,y=relE_L2D] {results/ecddr-einstein-wp/gowdy/stab_1/Voro-small-0_k0/data_rates-1.dat};
        \addplot [mark=o, blue] table[x=MeshSize,y=relE_L2Theta] {results/ecddr-einstein-wp/gowdy/stab_1/Voro-small-0_k0/data_rates-1.dat};
        \addplot [mark=square, green] table[x=MeshSize,y=relE_L2B] {results/ecddr-einstein-wp/gowdy/stab_1/Voro-small-0_k0/data_rates-1.dat};
        \addplot [mark=star, mark options=solid, red, dashed] table[x=MeshSize,y=relE_L2D] {results/ecddr-einstein-wp/gowdy/stab_1/Voro-small-0_k1/data_rates-1.dat};
        \addplot [mark=o, mark options=solid, blue, dashed] table[x=MeshSize,y=relE_L2Theta] {results/ecddr-einstein-wp/gowdy/stab_1/Voro-small-0_k1/data_rates-1.dat};
        \addplot [mark=square, mark options=solid, green, dashed] table[x=MeshSize,y=relE_L2B] {results/ecddr-einstein-wp/gowdy/stab_1/Voro-small-0_k1/data_rates-1.dat};
        \end{loglogaxis} 
      \end{tikzpicture}
    \subcaption{Voronoi meshes (Figure \ref{fig:mesh.voro})}
  \end{minipage}
  \captionsetup{width=0.85\textwidth}
  \caption{Gowdy wave solution for the three-field scheme \eqref{eq:scheme2}: Plot of discrete total relative error \eqref{eq:dtre} for $D$, $\theta$, $B$, (vertical axis) against mesh size $h$ (horizontal axis) for $r=0$, $1$ polynomial degree ECDDR spaces (stabilisation parameter $\rho=1$ \eqref{def:L2p})}
  \label{fig:gow.sch2}
\end{figure}

\begin{figure}\centering
	\ref{stab_par1}
  \vspace{0.50cm}\\
  \begin{minipage}{0.45\textwidth}
    \begin{tikzpicture}[scale=0.85]
      \begin{loglogaxis} [legend columns=4, legend to name=stab_par1]
        \logLogSlopeTriangle{0.90}{0.4}{0.1}{1}{black}
        \addplot [mark=star, red] table[x=MeshSize,y=relE_L2D] {results/ecddr-einstein-wp/kasner/stab_1/Tetgen-Cube-0_k0/data_rates-1.dat};
        \addlegendentry{$\rho=1$;}
        \addplot [mark=o, blue] table[x=MeshSize,y=relE_L2D] {results/ecddr-einstein-wp/kasner/stab_10/Tetgen-Cube-0_k0/data_rates-1.dat};
        \addlegendentry{$\rho=10$;}
        \addplot [mark=square, green] table[x=MeshSize,y=relE_L2D] {results/ecddr-einstein-wp/kasner/stab_20/Tetgen-Cube-0_k0/data_rates-1.dat};
        \addlegendentry{$\rho=20$;}
        \addplot [mark=triangle, orange] table[x=MeshSize,y=relE_L2D] {results/ecddr-einstein-wp/kasner/stab_30/Tetgen-Cube-0_k0/data_rates-1.dat};
        \addlegendentry{$\rho=30$;}
      \end{loglogaxis}
    \end{tikzpicture}
    \subcaption{Discrete error on $D$, tetrahedral meshes}
  \end{minipage}
  \begin{minipage}{0.45\textwidth}
    \begin{tikzpicture}[scale=0.85] 
      \begin{loglogaxis}
        \logLogSlopeTriangle{0.90}{0.4}{0.1}{1}{black} 
				\addplot [mark=star, red] table[x=MeshSize,y=relE_L2D] {results/ecddr-einstein-wp/kasner/stab_1/Voro-small-0_k0/data_rates-1.dat};
        \addplot [mark=o, blue] table[x=MeshSize,y=relE_L2D] {results/ecddr-einstein-wp/kasner/stab_10/Voro-small-0_k0/data_rates-1.dat};
        \addplot [mark=square, green] table[x=MeshSize,y=relE_L2D] {results/ecddr-einstein-wp/kasner/stab_20/Voro-small-0_k0/data_rates-1.dat};
        \addplot [mark=triangle, orange] table[x=MeshSize,y=relE_L2D] {results/ecddr-einstein-wp/kasner/stab_30/Voro-small-0_k0/data_rates-1.dat};
        \end{loglogaxis} 
      \end{tikzpicture}
    \subcaption{Discrete error on $D$, Voronoi meshes}
  \end{minipage}\\[0.5em]
    \begin{minipage}{0.45\textwidth}
    \begin{tikzpicture}[scale=0.85]
      \begin{loglogaxis}
        \logLogSlopeTriangle{0.90}{0.4}{0.1}{1}{black}
        \addplot [mark=star, red] table[x=MeshSize,y=relE_L2Theta] {results/ecddr-einstein-wp/kasner/stab_1/Tetgen-Cube-0_k0/data_rates-1.dat};
        \addplot [mark=o, blue] table[x=MeshSize,y=relE_L2Theta] {results/ecddr-einstein-wp/kasner/stab_10/Tetgen-Cube-0_k0/data_rates-1.dat};
        \addplot [mark=square, green] table[x=MeshSize,y=relE_L2Theta] {results/ecddr-einstein-wp/kasner/stab_20/Tetgen-Cube-0_k0/data_rates-1.dat};
        \addplot [mark=triangle, orange] table[x=MeshSize,y=relE_L2Theta] {results/ecddr-einstein-wp/kasner/stab_30/Tetgen-Cube-0_k0/data_rates-1.dat};
      \end{loglogaxis}
    \end{tikzpicture}
    \subcaption{Discrete error on $\theta$, tetrahedral meshes}
  \end{minipage}
  \begin{minipage}{0.45\textwidth}
    \begin{tikzpicture}[scale=0.85] 
      \begin{loglogaxis}
        \logLogSlopeTriangle{0.90}{0.4}{0.1}{1}{black} 
				\addplot [mark=star, red] table[x=MeshSize,y=relE_L2Theta] {results/ecddr-einstein-wp/kasner/stab_1/Voro-small-0_k0/data_rates-1.dat};
        \addplot [mark=o, blue] table[x=MeshSize,y=relE_L2Theta] {results/ecddr-einstein-wp/kasner/stab_10/Voro-small-0_k0/data_rates-1.dat};
        \addplot [mark=square, green] table[x=MeshSize,y=relE_L2Theta] {results/ecddr-einstein-wp/kasner/stab_20/Voro-small-0_k0/data_rates-1.dat};
        \addplot [mark=triangle, orange] table[x=MeshSize,y=relE_L2Theta] {results/ecddr-einstein-wp/kasner/stab_30/Voro-small-0_k0/data_rates-1.dat};
        \end{loglogaxis} 
      \end{tikzpicture}
    \subcaption{Discrete error on $\theta$, Voronoi meshes}
  \end{minipage}\\[0.5em]
    \begin{minipage}{0.45\textwidth}
    \begin{tikzpicture}[scale=0.85]
      \begin{loglogaxis}
        \addplot [mark=star, red] table[x=MeshSize,y=relE_L2B] {results/ecddr-einstein-wp/kasner/stab_1/Tetgen-Cube-0_k0/data_rates-1.dat};
        \addplot [mark=o, blue] table[x=MeshSize,y=relE_L2B] {results/ecddr-einstein-wp/kasner/stab_10/Tetgen-Cube-0_k0/data_rates-1.dat};
        \addplot [mark=square, green] table[x=MeshSize,y=relE_L2B] {results/ecddr-einstein-wp/kasner/stab_20/Tetgen-Cube-0_k0/data_rates-1.dat};
        \addplot [mark=triangle, orange] table[x=MeshSize,y=relE_L2B] {results/ecddr-einstein-wp/kasner/stab_30/Tetgen-Cube-0_k0/data_rates-1.dat};
      \end{loglogaxis}
    \end{tikzpicture}
    \subcaption{Discrete error on $B$, tetrahedral meshes}
  \end{minipage}
  \begin{minipage}{0.45\textwidth}
    \begin{tikzpicture}[scale=0.85] 
      \begin{loglogaxis}
				\addplot [mark=star, red] table[x=MeshSize,y=relE_L2B] {results/ecddr-einstein-wp/kasner/stab_1/Voro-small-0_k0/data_rates-1.dat};
        \addplot [mark=o, blue] table[x=MeshSize,y=relE_L2B] {results/ecddr-einstein-wp/kasner/stab_10/Voro-small-0_k0/data_rates-1.dat};
        \addplot [mark=square, green] table[x=MeshSize,y=relE_L2B] {results/ecddr-einstein-wp/kasner/stab_20/Voro-small-0_k0/data_rates-1.dat};
        \addplot [mark=triangle, orange] table[x=MeshSize,y=relE_L2B] {results/ecddr-einstein-wp/kasner/stab_30/Voro-small-0_k0/data_rates-1.dat};
        \end{loglogaxis} 
      \end{tikzpicture}
    \subcaption{Discrete error on $B$, Voronoi meshes}
  \end{minipage}
  \captionsetup{width=0.85\textwidth}
  \caption{Kasner solution for the three-field scheme \eqref{eq:scheme2}: Comparison of the discrete total relative error \eqref{eq:dtre} for $D$, $\theta$, $B$, (vertical axis) against mesh size $h$ (horizontal axis) for different values of the stabilisation parameter $\rho$ \eqref{def:L2p} (degree $r=0$ ECDDR spaces)}
  \label{fig:stab.kas2}
\end{figure}

\begin{figure}\centering
	\ref{stab_par1.1}
  \vspace{0.50cm}\\
  \begin{minipage}{0.45\textwidth}
    \begin{tikzpicture}[scale=0.85]
      \begin{loglogaxis} [legend columns=4, legend to name=stab_par1.1]
        \logLogSlopeTriangle{0.90}{0.4}{0.1}{1}{black}
        \addplot [mark=star, red] table[x=MeshSize,y=relContE_L2D] {results/ecddr-einstein-wp/kasner/stab_1/Tetgen-Cube-0_k0/data_rates-1.dat};
        \addlegendentry{$\rho=1$;}
        \addplot [mark=o, blue] table[x=MeshSize,y=relContE_L2D] {results/ecddr-einstein-wp/kasner/stab_10/Tetgen-Cube-0_k0/data_rates-1.dat};
        \addlegendentry{$\rho=10$;}
        \addplot [mark=square, green] table[x=MeshSize,y=relContE_L2D] {results/ecddr-einstein-wp/kasner/stab_20/Tetgen-Cube-0_k0/data_rates-1.dat};
        \addlegendentry{$\rho=20$;}
        \addplot [mark=triangle, orange] table[x=MeshSize,y=relContE_L2D] {results/ecddr-einstein-wp/kasner/stab_30/Tetgen-Cube-0_k0/data_rates-1.dat};
        \addlegendentry{$\rho=30$;}
      \end{loglogaxis}
    \end{tikzpicture}
    \subcaption{Continuous error on $D$, tetrahedral meshes}
  \end{minipage}
  \begin{minipage}{0.45\textwidth}
    \begin{tikzpicture}[scale=0.85] 
      \begin{loglogaxis}
        \logLogSlopeTriangle{0.90}{0.4}{0.1}{1}{black} 
				\addplot [mark=star, red] table[x=MeshSize,y=relContE_L2D] {results/ecddr-einstein-wp/kasner/stab_1/Voro-small-0_k0/data_rates-1.dat};
        \addplot [mark=o, blue] table[x=MeshSize,y=relContE_L2D] {results/ecddr-einstein-wp/kasner/stab_10/Voro-small-0_k0/data_rates-1.dat};
        \addplot [mark=square, green] table[x=MeshSize,y=relContE_L2D] {results/ecddr-einstein-wp/kasner/stab_20/Voro-small-0_k0/data_rates-1.dat};
        \addplot [mark=triangle, orange] table[x=MeshSize,y=relContE_L2D] {results/ecddr-einstein-wp/kasner/stab_30/Voro-small-0_k0/data_rates-1.dat};
        \end{loglogaxis} 
      \end{tikzpicture}
    \subcaption{Continuous error on $D$, Voronoi meshes}
  \end{minipage}\\[0.5em]
    \begin{minipage}{0.45\textwidth}
    \begin{tikzpicture}[scale=0.85]
      \begin{loglogaxis}
        \logLogSlopeTriangle{0.90}{0.4}{0.1}{1}{black}
        \addplot [mark=star, red] table[x=MeshSize,y=relContE_L2Theta] {results/ecddr-einstein-wp/kasner/stab_1/Tetgen-Cube-0_k0/data_rates-1.dat};
        \addplot [mark=o, blue] table[x=MeshSize,y=relContE_L2Theta] {results/ecddr-einstein-wp/kasner/stab_10/Tetgen-Cube-0_k0/data_rates-1.dat};
        \addplot [mark=square, green] table[x=MeshSize,y=relContE_L2Theta] {results/ecddr-einstein-wp/kasner/stab_20/Tetgen-Cube-0_k0/data_rates-1.dat};
        \addplot [mark=triangle, orange] table[x=MeshSize,y=relContE_L2Theta] {results/ecddr-einstein-wp/kasner/stab_30/Tetgen-Cube-0_k0/data_rates-1.dat};
      \end{loglogaxis}
    \end{tikzpicture}
    \subcaption{Continuous error on $\theta$, tetrahedral meshes}
  \end{minipage}
  \begin{minipage}{0.45\textwidth}
    \begin{tikzpicture}[scale=0.85] 
      \begin{loglogaxis}
        \logLogSlopeTriangle{0.90}{0.4}{0.1}{1}{black} 
				\addplot [mark=star, red] table[x=MeshSize,y=relContE_L2Theta] {results/ecddr-einstein-wp/kasner/stab_1/Voro-small-0_k0/data_rates-1.dat};
        \addplot [mark=o, blue] table[x=MeshSize,y=relContE_L2Theta] {results/ecddr-einstein-wp/kasner/stab_10/Voro-small-0_k0/data_rates-1.dat};
        \addplot [mark=square, green] table[x=MeshSize,y=relContE_L2Theta] {results/ecddr-einstein-wp/kasner/stab_20/Voro-small-0_k0/data_rates-1.dat};
        \addplot [mark=triangle, orange] table[x=MeshSize,y=relContE_L2Theta] {results/ecddr-einstein-wp/kasner/stab_30/Voro-small-0_k0/data_rates-1.dat};
        \end{loglogaxis} 
      \end{tikzpicture}
    \subcaption{Continuous error on $\theta$, Voronoi meshes}
  \end{minipage}\\[0.5em]
    \begin{minipage}{0.45\textwidth}
    \begin{tikzpicture}[scale=0.85]
      \begin{loglogaxis}
        \addplot [mark=star, red] table[x=MeshSize,y=relContE_L2B] {results/ecddr-einstein-wp/kasner/stab_1/Tetgen-Cube-0_k0/data_rates-1.dat};
        \addplot [mark=o, blue] table[x=MeshSize,y=relContE_L2B] {results/ecddr-einstein-wp/kasner/stab_10/Tetgen-Cube-0_k0/data_rates-1.dat};
        \addplot [mark=square, green] table[x=MeshSize,y=relContE_L2B] {results/ecddr-einstein-wp/kasner/stab_20/Tetgen-Cube-0_k0/data_rates-1.dat};
        \addplot [mark=triangle, orange] table[x=MeshSize,y=relContE_L2B] {results/ecddr-einstein-wp/kasner/stab_30/Tetgen-Cube-0_k0/data_rates-1.dat};
      \end{loglogaxis}
    \end{tikzpicture}
    \subcaption{Continuous error on $B$, tetrahedral meshes}
  \end{minipage}
  \begin{minipage}{0.45\textwidth}
    \begin{tikzpicture}[scale=0.85] 
      \begin{loglogaxis}
        \addplot [mark=star, red] table[x=MeshSize,y=relContE_L2B] {results/ecddr-einstein-wp/kasner/stab_1/Voro-small-0_k0/data_rates-1.dat};
        \addplot [mark=o, blue] table[x=MeshSize,y=relContE_L2B] {results/ecddr-einstein-wp/kasner/stab_10/Voro-small-0_k0/data_rates-1.dat};
        \addplot [mark=square, green] table[x=MeshSize,y=relContE_L2B] {results/ecddr-einstein-wp/kasner/stab_20/Voro-small-0_k0/data_rates-1.dat};
        \addplot [mark=triangle, orange] table[x=MeshSize,y=relContE_L2B] {results/ecddr-einstein-wp/kasner/stab_30/Voro-small-0_k0/data_rates-1.dat};
        \end{loglogaxis} 
      \end{tikzpicture}
    \subcaption{Continuous error on $B$, Voronoi meshes}
  \end{minipage}
  \captionsetup{width=0.85\textwidth}
  \caption{Kasner solution for the three-field scheme \eqref{eq:scheme2}: Comparison of the continuous total relative error \eqref{eq:ctre} for $D$, $\theta$, $B$, (vertical axis) against mesh size $h$ (horizontal axis) for different values of the stabilisation parameter $\rho$ \eqref{def:L2p} (degree $r=0$ ECDDR spaces)}
  \label{fig:stab.kas2.1}
\end{figure}

\begin{figure}\centering
	\ref{stab_par}
  \vspace{0.50cm}\\
  \begin{minipage}{0.45\textwidth}
    \begin{tikzpicture}[scale=0.85]
      \begin{loglogaxis} [legend columns=3, legend to name=stab_par]
        \logLogSlopeTriangle{0.90}{0.4}{0.1}{1}{black}
        \addplot [mark=star, red] table[x=MeshSize,y=relE_L2D] {results/ecddr-einstein-wp/gowdy/stab_0.1/Tetgen-Cube-0_k0/data_rates-1.dat};
        \addlegendentry{$\rho=0.1$;}
        \addplot [mark=o, blue] table[x=MeshSize,y=relE_L2D] {results/ecddr-einstein-wp/gowdy/stab_0.5/Tetgen-Cube-0_k0/data_rates-1.dat};
        \addlegendentry{$\rho=0.5$;}
        \addplot [mark=square, green] table[x=MeshSize,y=relE_L2D] {results/ecddr-einstein-wp/gowdy/stab_1/Tetgen-Cube-0_k0/data_rates-1.dat};
        \addlegendentry{$\rho=1$;}
        \addlegendimage{mark=halfsquare*, magenta}
        \addlegendentry{$\rho=5$}
        \addplot [mark=triangle, orange] table[x=MeshSize,y=relE_L2D] {results/ecddr-einstein-wp/gowdy/stab_10/Tetgen-Cube-0_k0/data_rates-1.dat};
        \addlegendentry{$\rho=10$;}
        \addplot [mark=|, violet] table[x=MeshSize,y=relE_L2D] {results/ecddr-einstein-wp/gowdy/stab_30/Tetgen-Cube-0_k0/data_rates-1.dat};
      	\addlegendentry{$\rho=30$;}
      \end{loglogaxis}
    \end{tikzpicture}
\subcaption{Discrete error on $D$, tetrahedral meshes}
  \end{minipage}
  \begin{minipage}{0.45\textwidth}
    \begin{tikzpicture}[scale=0.85] 
      \begin{loglogaxis}
        \logLogSlopeTriangle{0.90}{0.4}{0.1}{1}{black} 
        \addplot [mark=halfsquare*, magenta] table[x=MeshSize,y=relE_L2D] {results/ecddr-einstein-wp/gowdy/stab_5/Voro-small-0_k0/data_rates-1.dat};
        \addplot [mark=star, red] table[x=MeshSize,y=relE_L2D] {results/ecddr-einstein-wp/gowdy/stab_0.1/Voro-small-0_k0/data_rates-1.dat};
        \addplot [mark=o, blue] table[x=MeshSize,y=relE_L2D] {results/ecddr-einstein-wp/gowdy/stab_0.5/Voro-small-0_k0/data_rates-1.dat};
        \addplot [mark=square, green] table[x=MeshSize,y=relE_L2D] {results/ecddr-einstein-wp/gowdy/stab_1/Voro-small-0_k0/data_rates-1.dat};
        \end{loglogaxis} 
      \end{tikzpicture}
    \subcaption{Discrete error on $D$, Voronoi meshes}
  \end{minipage}\\[0.5em]
    \begin{minipage}{0.45\textwidth}
    \begin{tikzpicture}[scale=0.85]
      \begin{loglogaxis} [legend columns=3]
        \logLogSlopeTriangle{0.90}{0.4}{0.1}{1}{black}
        \addplot [mark=star, red] table[x=MeshSize,y=relE_L2Theta] {results/ecddr-einstein-wp/gowdy/stab_0.1/Tetgen-Cube-0_k0/data_rates-1.dat};
        \addplot [mark=o, blue] table[x=MeshSize,y=relE_L2Theta] {results/ecddr-einstein-wp/gowdy/stab_0.5/Tetgen-Cube-0_k0/data_rates-1.dat};
        \addplot [mark=square, green] table[x=MeshSize,y=relE_L2Theta] {results/ecddr-einstein-wp/gowdy/stab_1/Tetgen-Cube-0_k0/data_rates-1.dat};
        \addplot [mark=triangle, orange] table[x=MeshSize,y=relE_L2Theta] {results/ecddr-einstein-wp/gowdy/stab_10/Tetgen-Cube-0_k0/data_rates-1.dat};
        \addplot [mark=|, violet] table[x=MeshSize,y=relE_L2Theta] {results/ecddr-einstein-wp/gowdy/stab_30/Tetgen-Cube-0_k0/data_rates-1.dat};
      \end{loglogaxis}
    \end{tikzpicture}
    \subcaption{Discrete error on $\theta$, tetrahedral meshes}
  \end{minipage}
  \begin{minipage}{0.45\textwidth}
    \begin{tikzpicture}[scale=0.85] 
      \begin{loglogaxis}
        \logLogSlopeTriangle{0.90}{0.4}{0.1}{1}{black}
        \addplot [mark=halfsquare*, magenta] table[x=MeshSize,y=relE_L2Theta] {results/ecddr-einstein-wp/gowdy/stab_5/Voro-small-0_k0/data_rates-1.dat};
        \addplot [mark=star, red] table[x=MeshSize,y=relE_L2Theta] {results/ecddr-einstein-wp/gowdy/stab_0.1/Voro-small-0_k0/data_rates-1.dat};
        \addplot [mark=o, blue] table[x=MeshSize,y=relE_L2Theta] {results/ecddr-einstein-wp/gowdy/stab_0.5/Voro-small-0_k0/data_rates-1.dat};
        \addplot [mark=square, green] table[x=MeshSize,y=relE_L2Theta] {results/ecddr-einstein-wp/gowdy/stab_1/Voro-small-0_k0/data_rates-1.dat};
        \end{loglogaxis} 
      \end{tikzpicture}
    \subcaption{Discrete error on $\theta$, Voronoi meshes}
  \end{minipage}\\[0.5em]
    \begin{minipage}{0.45\textwidth}
    \begin{tikzpicture}[scale=0.85]
      \begin{loglogaxis} [legend columns=3]
        \logLogSlopeTriangle{0.90}{0.4}{0.1}{1}{black}
        \addplot [mark=star, red] table[x=MeshSize,y=relE_L2B] {results/ecddr-einstein-wp/gowdy/stab_0.1/Tetgen-Cube-0_k0/data_rates-1.dat};
        \addplot [mark=o, blue] table[x=MeshSize,y=relE_L2B] {results/ecddr-einstein-wp/gowdy/stab_0.5/Tetgen-Cube-0_k0/data_rates-1.dat};
        \addplot [mark=square, green] table[x=MeshSize,y=relE_L2B] {results/ecddr-einstein-wp/gowdy/stab_1/Tetgen-Cube-0_k0/data_rates-1.dat};
        \addplot [mark=triangle, orange] table[x=MeshSize,y=relE_L2B] {results/ecddr-einstein-wp/gowdy/stab_10/Tetgen-Cube-0_k0/data_rates-1.dat};
        \addplot [mark=|, violet] table[x=MeshSize,y=relE_L2B] {results/ecddr-einstein-wp/gowdy/stab_30/Tetgen-Cube-0_k0/data_rates-1.dat};
      \end{loglogaxis}
    \end{tikzpicture}
    \subcaption{Discrete error on $B$, tetrahedral meshes}
  \end{minipage}
  \begin{minipage}{0.45\textwidth}
    \begin{tikzpicture}[scale=0.85] 
      \begin{loglogaxis}
        \logLogSlopeTriangle{0.90}{0.4}{0.1}{1}{black}
        \addplot [mark=halfsquare*, magenta] table[x=MeshSize,y=relE_L2B] {results/ecddr-einstein-wp/gowdy/stab_5/Voro-small-0_k0/data_rates-1.dat};
        \addplot [mark=star, red] table[x=MeshSize,y=relE_L2B] {results/ecddr-einstein-wp/gowdy/stab_0.1/Voro-small-0_k0/data_rates-1.dat};
        \addplot [mark=o, blue] table[x=MeshSize,y=relE_L2B] {results/ecddr-einstein-wp/gowdy/stab_0.5/Voro-small-0_k0/data_rates-1.dat};
        \addplot [mark=square, green] table[x=MeshSize,y=relE_L2B] {results/ecddr-einstein-wp/gowdy/stab_1/Voro-small-0_k0/data_rates-1.dat};
        \end{loglogaxis} 
      \end{tikzpicture}
    \subcaption{Discrete error on $B$, tetrahedral meshes}
  \end{minipage}
  \captionsetup{width=0.85\textwidth}
  \caption{Gowdy wave solution for the three-field scheme \eqref{eq:scheme2}: Comparison of the discrete total relative error \eqref{eq:dtre} for $D$, $\theta$, $B$, (vertical axis) against mesh size $h$ (horizontal axis) for different values of the stabilisation parameter $\rho$ \eqref{def:L2p} (degree $r=0$ ECDDR spaces)}
  \label{fig:stab.gow2}
\end{figure}

\begin{figure}\centering
	\ref{stab_par1.2}
  \vspace{0.50cm}\\
  \begin{minipage}{0.45\textwidth}
    \begin{tikzpicture}[scale=0.85]
      \begin{loglogaxis} [legend columns=3, legend to name=stab_par1.2]
        \logLogSlopeTriangle{0.90}{0.4}{0.1}{1}{black}
				\addplot [mark=star, red] table[x=MeshSize,y=relContE_L2D] {results/ecddr-einstein-wp/gowdy/stab_0.1/Tetgen-Cube-0_k0/data_rates-1.dat};
        \addlegendentry{$\rho=0.1$;}
				\addplot [mark=o, blue] table[x=MeshSize,y=relContE_L2D] {results/ecddr-einstein-wp/gowdy/stab_0.5/Tetgen-Cube-0_k0/data_rates-1.dat};
        \addlegendentry{$\rho=0.5$;}
				\addplot [mark=square, green] table[x=MeshSize,y=relContE_L2D] {results/ecddr-einstein-wp/gowdy/stab_1/Tetgen-Cube-0_k0/data_rates-1.dat};
        \addlegendentry{$\rho=1$;}
        \addlegendimage{mark=halfsquare*, magenta}
        \addlegendentry{$\rho=5$}
				\addplot [mark=triangle, orange] table[x=MeshSize,y=relContE_L2D] {results/ecddr-einstein-wp/gowdy/stab_10/Tetgen-Cube-0_k0/data_rates-1.dat};
        \addlegendentry{$\rho=10$;}
				\addplot [mark=|, violet] table[x=MeshSize,y=relContE_L2D] {results/ecddr-einstein-wp/gowdy/stab_30/Tetgen-Cube-0_k0/data_rates-1.dat};
      	\addlegendentry{$\rho=30$;}
      \end{loglogaxis}
    \end{tikzpicture}
    \subcaption{Continuous error on $D$, tetrahedral meshes}
  \end{minipage}
  \begin{minipage}{0.45\textwidth}
    \begin{tikzpicture}[scale=0.85] 
      \begin{loglogaxis}
        \logLogSlopeTriangle{0.90}{0.4}{0.1}{1}{black} 
        \addplot [mark=halfsquare*, magenta] table[x=MeshSize,y=relContE_L2D] {results/ecddr-einstein-wp/gowdy/stab_5/Voro-small-0_k0/data_rates-1.dat};
        \addplot [mark=star, red] table[x=MeshSize,y=relContE_L2D] {results/ecddr-einstein-wp/gowdy/stab_0.1/Voro-small-0_k0/data_rates-1.dat};
        \addplot [mark=o, blue] table[x=MeshSize,y=relContE_L2D] {results/ecddr-einstein-wp/gowdy/stab_0.5/Voro-small-0_k0/data_rates-1.dat};
        \addplot [mark=square, green] table[x=MeshSize,y=relContE_L2D] {results/ecddr-einstein-wp/gowdy/stab_1/Voro-small-0_k0/data_rates-1.dat};
        \end{loglogaxis} 
      \end{tikzpicture}
    \subcaption{Continuous error on $D$, Voronoi meshes}
  \end{minipage}\\[0.5em]
    \begin{minipage}{0.45\textwidth}
    \begin{tikzpicture}[scale=0.85]
      \begin{loglogaxis} [legend columns=3]
        \logLogSlopeTriangle{0.90}{0.4}{0.1}{1}{black}
        \addplot [mark=star, red] table[x=MeshSize,y=relContE_L2Theta] {results/ecddr-einstein-wp/gowdy/stab_0.1/Tetgen-Cube-0_k0/data_rates-1.dat};
        \addplot [mark=o, blue] table[x=MeshSize,y=relContE_L2Theta] {results/ecddr-einstein-wp/gowdy/stab_0.5/Tetgen-Cube-0_k0/data_rates-1.dat};
        \addplot [mark=square, green] table[x=MeshSize,y=relContE_L2Theta] {results/ecddr-einstein-wp/gowdy/stab_1/Tetgen-Cube-0_k0/data_rates-1.dat};
        \addplot [mark=triangle, orange] table[x=MeshSize,y=relContE_L2Theta] {results/ecddr-einstein-wp/gowdy/stab_10/Tetgen-Cube-0_k0/data_rates-1.dat};
        \addplot [mark=|, violet] table[x=MeshSize,y=relContE_L2Theta] {results/ecddr-einstein-wp/gowdy/stab_30/Tetgen-Cube-0_k0/data_rates-1.dat};
      \end{loglogaxis}
    \end{tikzpicture}
    \subcaption{Continuous error on $\theta$, tetrahedral meshes}
  \end{minipage}
  \begin{minipage}{0.45\textwidth}
    \begin{tikzpicture}[scale=0.85] 
      \begin{loglogaxis}
        \logLogSlopeTriangle{0.90}{0.4}{0.1}{1}{black}
        \addplot [mark=halfsquare*, magenta] table[x=MeshSize,y=relContE_L2Theta] {results/ecddr-einstein-wp/gowdy/stab_5/Voro-small-0_k0/data_rates-1.dat};
        \addplot [mark=star, red] table[x=MeshSize,y=relContE_L2Theta] {results/ecddr-einstein-wp/gowdy/stab_0.1/Voro-small-0_k0/data_rates-1.dat};
        \addplot [mark=o, blue] table[x=MeshSize,y=relContE_L2Theta] {results/ecddr-einstein-wp/gowdy/stab_0.5/Voro-small-0_k0/data_rates-1.dat};
        \addplot [mark=square, green] table[x=MeshSize,y=relContE_L2Theta] {results/ecddr-einstein-wp/gowdy/stab_1/Voro-small-0_k0/data_rates-1.dat};
        \end{loglogaxis} 
      \end{tikzpicture}
    \subcaption{Continuous error on $\theta$, Voronoi meshes}
  \end{minipage}\\[0.5em]
    \begin{minipage}{0.45\textwidth}
    \begin{tikzpicture}[scale=0.85]
      \begin{loglogaxis} [legend columns=3]
        \logLogSlopeTriangle{0.90}{0.4}{0.1}{1}{black}
        \addplot [mark=star, red] table[x=MeshSize,y=relContE_L2B] {results/ecddr-einstein-wp/gowdy/stab_0.1/Tetgen-Cube-0_k0/data_rates-1.dat};
        \addplot [mark=o, blue] table[x=MeshSize,y=relContE_L2B] {results/ecddr-einstein-wp/gowdy/stab_0.5/Tetgen-Cube-0_k0/data_rates-1.dat};
        \addplot [mark=square, green] table[x=MeshSize,y=relContE_L2B] {results/ecddr-einstein-wp/gowdy/stab_1/Tetgen-Cube-0_k0/data_rates-1.dat};
        \addplot [mark=triangle, orange] table[x=MeshSize,y=relContE_L2B] {results/ecddr-einstein-wp/gowdy/stab_10/Tetgen-Cube-0_k0/data_rates-1.dat};
        \addplot [mark=|, violet] table[x=MeshSize,y=relContE_L2B] {results/ecddr-einstein-wp/gowdy/stab_30/Tetgen-Cube-0_k0/data_rates-1.dat};
      \end{loglogaxis}
    \end{tikzpicture}
    \subcaption{Continuous error on $B$, tetrahedral meshes}
  \end{minipage}
  \begin{minipage}{0.45\textwidth}
    \begin{tikzpicture}[scale=0.85] 
      \begin{loglogaxis}
        \logLogSlopeTriangle{0.90}{0.4}{0.1}{1}{black}
        \addplot [mark=halfsquare*, magenta] table[x=MeshSize,y=relContE_L2B] {results/ecddr-einstein-wp/gowdy/stab_5/Voro-small-0_k0/data_rates-1.dat};
        \addplot [mark=star, red] table[x=MeshSize,y=relContE_L2B] {results/ecddr-einstein-wp/gowdy/stab_0.1/Voro-small-0_k0/data_rates-1.dat};
        \addplot [mark=o, blue] table[x=MeshSize,y=relContE_L2B] {results/ecddr-einstein-wp/gowdy/stab_0.5/Voro-small-0_k0/data_rates-1.dat};
        \addplot [mark=square, green] table[x=MeshSize,y=relContE_L2B] {results/ecddr-einstein-wp/gowdy/stab_1/Voro-small-0_k0/data_rates-1.dat};
        \end{loglogaxis} 
      \end{tikzpicture}
    \subcaption{Continuous error on $B$, Voronoi meshes}
  \end{minipage}
  \captionsetup{width=0.85\textwidth}
  \caption{Gowdy wave solution for the three-field scheme \eqref{eq:scheme2}: Comparison of the continuous total relative error \eqref{eq:ctre} for $D$, $\theta$, $B$, (vertical axis) against mesh size $h$ (horizontal axis) for different values of the stabilisation parameter $\rho$ \eqref{def:L2p} (degree $r=0$ ECDDR spaces)}
  \label{fig:stab.gow2.1}
\end{figure}

\begin{figure}
\centering
  \ref{constr}
  \vspace{0.50cm}\\
  \begin{minipage}{0.45\textwidth}
    \begin{tikzpicture}[scale=0.85]
      \begin{loglogaxis} [ymin=1e-17, ymax=1e-12, legend columns=3, legend to name=constr]
        \addplot [mark=star, red] table[x=MeshSize,y=cTheta] {results/ecddr-einstein-wp/constraint/Tetgen-Cube-0_k0/data_rates-1.dat};
        \addlegendentry{$i=1$;}
        \addplot [mark=o, blue] table[x=MeshSize,y=cTheta] {results/ecddr-einstein-wp/constraint/Tetgen-Cube-0_k0/data_rates-2.dat};
        \addlegendentry{$i=2$;}
         \addplot [mark=triangle, green] table[x=MeshSize,y=cTheta] {results/ecddr-einstein-wp/constraint/Tetgen-Cube-0_k0/data_rates-3.dat};
         \addlegendentry{$i=3$;}
      \end{loglogaxis}            
    \end{tikzpicture}
    \subcaption{$|\mathfrak{C}^i_1(n,\underline{u}_h)-\mathfrak{C}^i_1(0,\underline{u}_h)|$}
  \end{minipage}
  \begin{minipage}{0.45\textwidth}
    \begin{tikzpicture}[scale=0.85] 
      \begin{loglogaxis} [ymin=1e-17, ymax=1e-12]
        \addplot [mark=star, red] table[x=MeshSize,y=cB] {results/ecddr-einstein-wp/constraint/Tetgen-Cube-0_k0/data_rates-1.dat};
        \addplot [mark=o, blue] table[x=MeshSize,y=cB] {results/ecddr-einstein-wp/constraint/Tetgen-Cube-0_k0/data_rates-2.dat};
        \addplot [mark=triangle, green] table[x=MeshSize,y=cB] {results/ecddr-einstein-wp/constraint/Tetgen-Cube-0_k0/data_rates-3.dat};
        \end{loglogaxis} 
      \end{tikzpicture}
    \subcaption{$|\mathfrak{C}^i_2(n,\underline{u}_h)-\mathfrak{C}^i_2(0,\underline{u}_h)|$}
  \end{minipage}
  \captionsetup{width=0.85\textwidth}
  \caption{Three-field scheme \eqref{eq:scheme2} run with homogeneous natural boundary conditions to verify Proposition \ref{prop:const.pres}: Difference between the discrete constraint at initial and final time (vertical axis) against mesh size $h$ (horizontal axis) on the tetrahedral mesh sequence with $r=0$}
  \label{fig:constraint}
\end{figure}

\begin{figure}
\centering
  \ref{constr.C3}
  \vspace{0.50cm}\\
  \begin{minipage}{0.45\textwidth}
    \begin{tikzpicture}[scale=0.85]
      \begin{loglogaxis} [legend columns=4, legend to name=constr.C3]
        \addplot [mark=star, red] table[x=MeshSize,y=C3_H1norm0] {results/ecddr-einstein/gowdy/Tetgen-Cube-0_k0/constraint.dat};
        \addlegendentry{$\alpha=0$, $r=0$;}
        \addplot [mark=triangle, blue] table[x=MeshSize,y=C3_H1norm1] {results/ecddr-einstein/gowdy/Tetgen-Cube-0_k0/constraint.dat};
        \addlegendentry{$\alpha=1$, $r=0$;}
        \addplot [mark=o, orange] table[x=MeshSize,y=C3_H1norm2] {results/ecddr-einstein/gowdy/Tetgen-Cube-0_k0/constraint.dat};
         \addlegendentry{$\alpha=2$, $r=0$;}
        \addplot [mark=square, green] table[x=MeshSize,y=C3_H1norm3] {results/ecddr-einstein/gowdy/Tetgen-Cube-0_k0/constraint.dat};
        \addlegendentry{$\alpha=3$, $r=0$;}
        \addplot [mark=star, mark options=solid, red, dashed] table[x=MeshSize,y=C3_H1norm0] {results/ecddr-einstein/gowdy/Tetgen-Cube-0_k1/constraint.dat};
        \addlegendentry{$\alpha=0$, $r=1$;}
        \addplot [mark=triangle, mark options=solid, blue, dashed] table[x=MeshSize,y=C3_H1norm1] {results/ecddr-einstein/gowdy/Tetgen-Cube-0_k1/constraint.dat};
        \addlegendentry{$\alpha=1$, $r=1$;}
        \addplot [mark=o, mark options=solid, orange, dashed] table[x=MeshSize,y=C3_H1norm2] {results/ecddr-einstein/gowdy/Tetgen-Cube-0_k1/constraint.dat};
         \addlegendentry{$\alpha=2$, $r=1$;}
        \addplot [mark=square,  mark options=solid, green, dashed] table[x=MeshSize,y=C3_H1norm3] {results/ecddr-einstein/gowdy/Tetgen-Cube-0_k1/constraint.dat};
        \addlegendentry{$\alpha=3$, $r=1$;}
      \end{loglogaxis}            
    \end{tikzpicture}
    \subcaption{Tetrahedral meshes (Figure \ref{fig:mesh.tets})}
  \end{minipage}
  \begin{minipage}{0.45\textwidth}
    \begin{tikzpicture}[scale=0.85]
      \begin{loglogaxis}
        \addplot [mark=star, red] table[x=MeshSize,y=C3_H1norm0] {results/ecddr-einstein/gowdy/Voro-small-0_k0/constraint.dat};
        \addplot [mark=triangle, blue] table[x=MeshSize,y=C3_H1norm1] {results/ecddr-einstein/gowdy/Voro-small-0_k0/constraint.dat};
        \addplot [mark=o, orange] table[x=MeshSize,y=C3_H1norm2] {results/ecddr-einstein/gowdy/Voro-small-0_k0/constraint.dat};
        \addplot [mark=square, green] table[x=MeshSize,y=C3_H1norm3] {results/ecddr-einstein/gowdy/Voro-small-0_k0/constraint.dat};
        \addplot [mark=star, mark options=solid, red, dashed] table[x=MeshSize,y=C3_H1norm0] {results/ecddr-einstein/gowdy/Voro-small-0_k1/constraint.dat};
        \addplot [mark=triangle, mark options=solid, blue, dashed] table[x=MeshSize,y=C3_H1norm1] {results/ecddr-einstein/gowdy/Voro-small-0_k1/constraint.dat};
        \addplot [mark=o, mark options=solid, orange, dashed] table[x=MeshSize,y=C3_H1norm2] {results/ecddr-einstein/gowdy/Voro-small-0_k1/constraint.dat};
        \addplot [mark=square,  mark options=solid, green, dashed] table[x=MeshSize,y=C3_H1norm3] {results/ecddr-einstein/gowdy/Voro-small-0_k1/constraint.dat};
      \end{loglogaxis}            
    \end{tikzpicture}
    \subcaption{Voronoi meshes (Figure \ref{fig:mesh.voro})}
  \end{minipage}
  \captionsetup{width=0.85\textwidth}
  \caption{The Gowdy wave solution for the two-field scheme \eqref{eq:scheme1}: Plot of the discrete constraint $\mathfrak{C}^\alpha_3(n,\cdot)$ (see Remark \ref{rem:C3}) measured in the dual norm (vertical axis) against the mesh size $h$ (horizontal axis) at final time}
  \label{fig:constraint.2}
\end{figure}

\begin{figure}
\centering
  \ref{constr.C3.ev}
  \vspace{0.50cm}\\
  \begin{minipage}{0.45\textwidth}
    \begin{tikzpicture}[scale=0.85]
      \begin{axis} [ymode=log, ymin=1e-3, ymax=1e-1, legend columns=4, legend to name=constr.C3.ev]
        \addplot [mark=star, red] table[x=Timestep,y=C3_H1norm0] {results/ecddr-einstein/gowdy/Tetgen-Cube-0_k0/constraints-5.txt};
        \addlegendentry{$\alpha=0$;}
        \addplot [mark=triangle, blue] table[x=Timestep,y=C3_H1norm1] {results/ecddr-einstein/gowdy/Tetgen-Cube-0_k0/constraints-5.txt};
        \addlegendentry{$\alpha=1$;}
        \addplot [mark=o, orange] table[x=Timestep,y=C3_H1norm2] {results/ecddr-einstein/gowdy/Tetgen-Cube-0_k0/constraints-5.txt};
         \addlegendentry{$\alpha=2$;}
        \addplot [mark=square, green] table[x=Timestep,y=C3_H1norm3] {results/ecddr-einstein/gowdy/Tetgen-Cube-0_k0/constraints-5.txt};
        \addlegendentry{$\alpha=3$;}
      \end{axis}      
    \end{tikzpicture}
    \subcaption{Degree $r=0$ ECDDR scheme}
  \end{minipage}
  \begin{minipage}{0.45\textwidth}
    \begin{tikzpicture}[scale=0.85]
      \begin{axis} [ymode=log, ymin=1e-3, ymax=1e-1, mark repeat = 2]
        \addplot [mark=star, red] table[x=Timestep,y=C3_H1norm0] {results/ecddr-einstein/gowdy/Tetgen-Cube-0_k1/constraints-5.txt};
        \addplot [mark=triangle, blue] table[x=Timestep,y=C3_H1norm1] {results/ecddr-einstein/gowdy/Tetgen-Cube-0_k1/constraints-5.txt};
        \addplot [mark=o, orange] table[x=Timestep,y=C3_H1norm2] {results/ecddr-einstein/gowdy/Tetgen-Cube-0_k1/constraints-5.txt};
        \addplot [mark=square, green] table[x=Timestep,y=C3_H1norm3] {results/ecddr-einstein/gowdy/Tetgen-Cube-0_k1/constraints-5.txt};
      \end{axis}            
    \end{tikzpicture}
    \subcaption{Degree $r=1$ ECDDR scheme}
  \end{minipage}
  \captionsetup{width=0.85\textwidth}
  \caption{The Gowdy wave solution for the two-field scheme \eqref{eq:scheme1}: Evolution of the dual norm of the discrete constraint $\mathfrak{C}^\alpha_3(n,\cdot)$ (vertical axis) against the timestep (horizontal axis) on mesh 5 of the tetrahedral sequence (Figure \ref{fig:mesh.tets})}
  \label{fig:constraint.3}
\end{figure}

\section{Conclusion}

We introduce a exterior calculus approach to view the Einstein equations, building on the works of \cite{Fecko:97, Olivares.Peshkov.ea:22}, and write the decomposed $3+1$ equations in a way that preserves the exterior calculus operators. Then we designed and implemented numerical schemes for two resulting formulations using the exterior calculus discrete de Rham method. The novelty of ECDDR in the context of numerical relativity is that, to our knowledge, it is one of the first polytopal methods to be applied to Einstein's equations, meaning that the resulting discretisation works on meshes made up of general polytopes, opening up possibilities such as interesting mesh refinement techniques to deal with singularities. We prove that the two formulations are indeed equivalent to Einstein's equations, and the constraint propogation in the continuous case, as well as the exact propogation of a discrete version of an auxiliary constraint thanks to the complex properties of the discrete ECDDR complex. Results are provided for a $3$D spatial domain, showing convergence in two standard test cases, and the numerical preservation of the proved constraint quantity.

As a first exploration of polytopal methods for Einstein's equations, the results suggest that there is further work to be done to improve our understanding of numerical schemes based on these equations. Investigations include the implementation of periodic boundary conditions, the design of a more general scheme for non-zero shift, a more expansive panel of numerical tests, analysing the well-posedness of the formulations derived from this method, and the convergence and stability analysis of these schemes. These are the topics of an upcoming work.

\section*{Acknowledgements}
We would like to thank J\'{e}r\^{o}me Droniou for helpful discussions, as well as his advice and suggestions regarding the numerical aspects of the article.

Funded by the European Union (ERC Synergy, NEMESIS, project number 101115663).
Views and opinions expressed are however those of the authors only and do not necessarily reflect those of the European Union or the European Research Council Executive Agency. Neither the European Union nor the granting authority can be held responsible for them.

\appendix
\section{Formulas}
This is a collection of formulas and conventions used throughout the paper with proof or references where necessary. In the following, let $(M,g)$ be a an oriented $n$-manifold equipped with a metric of signature $(n-s,s)$, where $s$ is the number of $-1$'s in the diagonalisation. We denote by $\varepsilon$ the volume form induced by the metric and the orientation, and by $(e_i)_{i=1}^n$ a right-handed orthonormal basis of $TM$ (or $TU$ for some $U\subset M$) and $(\theta_i)_{i=1}^n$ its dual basis. 

\subsection{Orthonormal frame and coordinates}

Let $(e_\alpha)_{i\in[0,n]}$ be a right-handed $g$-orthonormal basis of $TM$ such that $e_0=\bvec{n}$, and $(\theta^\alpha)_{\alpha\in[0,n]}$ the dual basis, which satisfy the duality conditon
\begin{equation}
  \theta^\alpha(e_\beta)=\delta^\alpha_\beta.
\end{equation}
In this basis, the frame components of the metric and its inverse are given by
\begin{equation*}
g_{\alpha\beta}:=g(e_\alpha,e_\beta)= -\delta_{\alpha}^0\delta_{\beta}^0 +\delta_\alpha^i \delta_\beta^j \delta_{ij}
\quad \text{and} \quad 
g^{\alpha\beta}:=g(\theta^\alpha,\theta^\beta)= -\delta^{\alpha}_0\delta^{\beta}_0 +\delta^\alpha_i \delta^\beta_j \delta^{ij},
\end{equation*}
respectively, and the metric can be expresses as
\begin{equation*}
g=g_{\alpha\beta}\theta^\alpha \otimes \theta^\beta.
\end{equation*}
Unless stated otherwise, all formulas and equations are written in this basis.

\subsection{Change of basis}
Given two bases and dual $1$-forms $(e_\alpha)_{i\in[0,n]}$, $(\theta^\alpha)_{i\in[0,n]}$ and $(\tilde{e}_\alpha)_{i\in[0,n]}$, $(\tilde{\theta}^\alpha)_{i\in[0,n]}$, we can express the vectors of one basis in the other by
\begin{equation*}
  e_\alpha = (e_\alpha)^\beta\tilde{e}_\beta,\qquad \tilde{e}_\alpha = (\tilde{e}_\alpha)^\beta e_\beta,
\end{equation*}
where clearly the components $\left[(e_\alpha)^\beta\right]_{\alpha\beta}$ = $\left[(\tilde{e}_\alpha)^\beta\right]^{-1}_{\alpha\beta}$, and for the $1$-forms
\begin{equation*}
  \theta^\alpha = (\theta^\alpha)_\beta\tilde{\theta}^\beta,\qquad \tilde{\theta}^\alpha = (\tilde{\theta}^\alpha)_\beta \theta^\beta.
\end{equation*}
Note that
\begin{equation*}
  (\theta^\alpha)_\beta = \theta^\alpha(\tilde{e}_\beta)=\theta^\alpha((\tilde{e}_\beta)^\gamma e_\gamma)=(\tilde{e}_\beta)^\gamma\delta^\alpha_\gamma=(\tilde{e}_\beta)^\alpha
\end{equation*}
so $\left[(\theta^\alpha)_\beta\right]_{\alpha\beta}=\left[(\tilde{e}_\alpha)^\beta\right]_{\alpha\beta}^T$, where $T$ is the transpose, and vice versa.
Given a $(p,q)$-tensor $\omega$, we can extend the above calculation for the general formula
\begin{align*}
  \omega\indices{^{i_1\cdots i_p}_{j_1\cdots j_q}}
  &=\omega(\theta^{i_1},\cdots, \theta^{i_p} e_{j_1},\cdots,e_{j_q}) \\
  &=(\theta^{i_1})_{k_1}\cdots(\theta^{i_p})_{k_p}(e_{j_1})^{l_1}\cdots(e_{j_q})^{l_q}\tilde{\omega}\indices{^{k_1\cdots k_p}_{l_1\cdots l_q}}.
\end{align*}

\subsection{Differential forms}
Define the exterior (or wedge) product as the map $\wedge:\kform(M)\times\kform[l](M)\to\kform[k+l](M)$, where for all $\omega\in\kform(M),\mu\in\kform[l](M)$,
\begin{equation*}
  (\omega\wedge\mu)(v_1,\cdots,v_{k+l})=\frac{1}{k!l!}\sum_{\sigma\in S_{k+l}}\sign(\sigma)(\omega\otimes\mu)(v_{\sigma(1)},\cdots,v_{\sigma(k+l)}).
\end{equation*}
Let $\omega\in\kform(M)$ be a $k$-form. The components $\omega_{i_1\cdots i_k}$ are those such that
\begin{equation*}
  \omega = \frac{1}{k!}\omega_{i_1\cdots i_k}\theta^{i^1}\wedge \cdots\wedge \theta^{i^k},
\end{equation*}
which coincide with the tensor components of $\omega$, i.e.
\begin{equation*}
  \omega = \omega_{i_1\cdots i_k}\theta^{i^1}\otimes \cdots\otimes \theta^{i^k}.
\end{equation*}

\subsection{Anti-symmetrisation brackets, Levi-Civita symbol, Kronecker delta}
The anti-symmetrisation brackets are square brackets at the index level which anti-symmetrise the indices contained inside. For example,
\begin{equation*}
  \omega_{[i_1\cdots i_k]}=\frac{1}{k!}\sum_{\sigma\in S_k}\sign(\sigma)\omega_{\sigma(i_1)\cdots\sigma(i_k)}.
\end{equation*}
Note that if the indices are already antisymmetric, the brackets have no effect, and can be directly removed.

Let $\epsilon_{i_1i_2\cdots i_k}$ be the Levi-Civita symbol, where
\begin{equation*}
  \epsilon_{12\cdots k}\coloneq 1 \quad\mbox{ and }\quad 
  \epsilon_{i_1i_2\cdots i_k}=
  \begin{cases}
    +1 & \mbox{ if $i_1i_2\cdots i_k$ is an even permutation of $12\cdots k$}, \\
    -1 & \mbox{ if $i_1i_2\cdots i_k$ is an odd permutation of $12\cdots k$}, \\
    0 & \mbox{ otherwise}.
  \end{cases}
\end{equation*}
Raising the indices with the metric, we get the identity
\begin{equation}\label{eq:raised.eps}
  \epsilon^{12\cdots n}=\underbrace{g^{1i_1}g^{2i_2}\cdots g^{ni_n}\epsilon_{i_1i_2\cdots i_n}}_{\det(g^{ab})}=(-1)^s.
\end{equation}
Let $\delta^{i_1\cdots i_k}_{j_1\cdots j_k}$ be the Kronecker delta, defined as
\begin{equation*}
  \delta^{i_1\cdots i_k}_{j_1\cdots j_k}=
  \begin{cases}
    +1 &\mbox{if $j_1\cdots j_k$ distinct is an even permutation of $i_1\cdots i_k$,}\\
    -1 &\mbox{if $j_1\cdots j_k$ distinct is an odd permutation of $i_1\cdots i_k$},\\
    0 &\mbox{otherwise.}
  \end{cases}
\end{equation*}
Then we have the identities
\begin{align}\label{eq:kron.delta}
    \epsilon^{i_1\cdots i_kj_{k+1}\cdots j_n}\epsilon_{l_1\cdots l_kj_{k+1}\cdots j_n}=(-1)^s(n-k)!\delta^{i_1\cdots i_k}_{l_1\cdots l_k}, \\
    \delta^{i_1\cdots i_k}_{j_1\cdots j_k} =k!\delta^{i_1}_{[j_1}\cdots\delta^{i_k}_{j_k]}=k!\delta^{[i_1}_{j_1}\cdots\delta^{i_k]}_{j_k}.
\end{align}
In the right-handed orthonormal basis, the components of the volume form are exactly the Levi-Civita symbols:
\begin{equation}
  \varepsilon = \theta^1\wedge\cdots\wedge \theta^n = \frac{1}{n!}\epsilon_{i_1\cdots i_n}\theta^{i_1} \wedge\cdots\wedge\theta^{i_n},
\end{equation}
thus the above formulas hold when replacing $\epsilon$ by $\varepsilon$.

\subsection{Hodge star operator}\label{def:hodge.star}
Define the Hodge star operator as the unique linear operator $\star:\kform(M)\to\kform[n-k](M)$, such that
\begin{equation*}
  \omega\wedge\star\mu = (\omega|\mu) \varepsilon \qquad \forall\omega,\mu\in\kform(M),
\end{equation*}
where at each $p\in M$, $(\cdot|\cdot)_p$ is the inner product on $\kform(T_pM)$ given by:
\begin{equation*}
  (\omega|\mu)_g = \omega_{i_1\cdots i_k}\mu^{i_k\cdots i_k}\qquad \forall\omega,\mu\in\kform(M).
\end{equation*}
By this definition, we see that (note that $(i_x)_{x=1}^k$ below are fixed indices, so no Einstein summation is performed over them)
\begin{align*}
  \theta^{i_1}\wedge\cdots\theta^{i_k}\wedge\star(\theta^{i_1}\wedge\cdots\theta^{i_k})
  &=g^{i_1i_1}\cdots g^{i_ki_k}\frac{1}{n!}\varepsilon_{j_1\cdots j_n}\theta^{j_1} \wedge\cdots\wedge\theta^{j_n} \\
  &=\theta^{i_1} \wedge\cdots\wedge\theta^{i_k}\wedge(g^{i_1i_1}\cdots g^{i_ki_k}\frac{1}{(n-k)!}\varepsilon_{i_1\cdots i_k j_{k+1}\cdots j_n}\theta^{j_{k+1}} \wedge\cdots\wedge\theta^{j_n})\\
  &=\theta^{i_1} \wedge\cdots\wedge\theta^{i_k}\wedge(\frac{1}{(n-k)!}\varepsilon\indices{^{i_1\cdots i_k}_{j_{k+1}\cdots j_n}}\theta^{j_{k+1}} \wedge\cdots\wedge\theta^{j_n}),
\end{align*}
where the last line uses the diagonality of $g^{ab}$ to conclude 
\begin{equation*}
  g^{i_1i_1}\cdots g^{i_ki_k}\varepsilon_{i_1\cdots i_k j_{k+1}\cdots j_n}=g^{i_1l_1}\cdots g^{i_kl_k}\varepsilon_{l_1\cdots l_k i_2\cdots i_k j_{k+1}\cdots j_n}=\varepsilon\indices{^{i_1\cdots i_k}_{j_{k+1}\cdots j_n}}.
\end{equation*}
Hence
\begin{equation}
  \star(\theta^{i_1}\wedge\cdots\wedge\theta^{i_k})=\frac{1}{(n-k)!}\varepsilon\indices{^{i_1\cdots i_k}_{j_{k+1}\cdots j_n}}\theta^{j_{k+1}} \wedge\cdots\wedge\theta^{j_n},
\end{equation}
and the formula for a general $k$-form $\omega$ is
\begin{equation}
  \star\omega = \frac{1}{k!(n-k)!}\varepsilon\indices{^{i_1\cdots i_k}_{j_{k+1}\cdots j_n}} \omega_{i_1\cdots i_k}\theta^{j_{k+1}} \wedge\cdots\wedge\theta^{j_n},
\end{equation}
or equivalently in components
\begin{equation}
  (\star\omega)_{j_{k+1}\cdots j_n} = \frac{1}{k!}\varepsilon\indices{^{i_1\cdots i_k}_{j_{k+1}\cdots j_n}} \omega_{i_1\cdots i_k}.
\end{equation}
Applying the Hodge star twice leads to
\begin{align*}
  (\star\star\omega)_{l_1\cdots l_k}&=\frac{1}{(n-k)!}\varepsilon\indices{^{j_{k+1}\cdots j_{n}}_{l_{1}\cdots l_k}}(\frac{1}{k!}\varepsilon\indices{^{i_1\cdots i_k}_{j_{k+1}\cdots j_n}} \omega_{i_1\cdots i_k})\\
  &=\frac{(-1)^{k(n-k)}}{(n-k)!k!}\varepsilon\indices{_{l_{1}\cdots l_k j_{k+1}\cdots j_{n}}}\varepsilon\indices{^{i_1\cdots i_k j_{k+1}\cdots j_n}} \omega_{i_1\cdots i_k} \\
  &=(-1)^{k(n-k)+s}\omega_{l_1\cdots l_k}
\end{align*}
where we shuffle indices to get the second line, and use \eqref{eq:kron.delta} for the last line. In other words
\begin{equation}\label{eq:double.hodge}
  \star\star\omega=(-1)^{k(n-k)+s}\omega.
\end{equation}  

\subsection{Components of the Lie derivative}

Let $X$ be a vector field and $T$ a $(p,q)$-tensor. By definition, $\ld{X}e_i = [X,e_i] =X^j[e_j,e_i]-e_i(X^j)e_j=X^jC\indices{^k_j_i}e_k-e_i(X^j)e_j$, where $C\indices{^k_i_j}$ are the commutation coefficients, defined so that $C\indices{^k_i_j}e_k\coloneq [e_i, e_j]$. For general $T$, the Lie derivative satisfies
\begin{align*}
	\ld{X}(T(\theta^{i_1},\cdots,\theta^{i_p},e_{j_1},\cdots,e_{j_q}))
	={}& \ld{X}T(\theta^{i_1},\cdots,\theta^{i_p},e_{j_1},\cdots,e_{j_q}) +T(\ld{X}\theta^{i_1},\cdots,\theta^{i_p},e_{j_1},\cdots,e_{j_q})\\
	& +\cdots +T(\theta^{i_1},\cdots,\theta^{i_p},e_{j_1},\cdots,\ld{X}e_{j_q}).
\end{align*}
Applied to $\theta^i$, we see
\begin{equation*}
	\ld{X}(\theta^i(e_j))
	= \ld{X}\theta^i(e_j)+\theta^i(X^lC\indices{^k_l_j}e_k-e_j(X^l)e_l)=\ld{X}\theta^i(e_j)+X^lC\indices{^i_l_j}-e_j(X^i),
\end{equation*}
or noticing the RHS is 0, 
\begin{equation*}
	\ld{X}\theta^i(e_j)=-X^lC\indices{^i_l_j}+e_j(X^i).
\end{equation*}
The general formula for $T$ is then
\begin{align*}
	\ld{X}T\indices{^{i_1\cdots i_p}_{j_1\cdots j_q}}
	={}& X\big(T\indices{^{i_1\cdots i_p}_{j_1\cdots j_q}}\big)+T\indices{^{ki_2\cdots i_p}_{j_1\cdots j_q}}(X^lC\indices{^{i_1}_l_k}-e_k(X^{i_1}))\\
	& +\cdots +T\indices{^{i_1\cdots i_p}_{j_1\cdots j_{q-1}k}}(-X^l C\indices{^k_l_{j_q}}+e_{j_q}(X^k)).
\end{align*}

\subsection{Symmetries of the Riemann curvature tensor}
\begin{subequations}
The Riemann curvature tensor is antisymmetric in the first and last pairs of indices:
\begin{equation}\label{eq:reim.sym.1}
  R_{\alpha\beta\mu\nu}=-R_{\beta\alpha\mu\nu}=-R_{\alpha\beta\nu\mu}
\end{equation}
and also symmetric after swapping the first and last pairs of indices
\begin{equation}\label{eq:reim.sym.2}
  R_{\alpha\beta\mu\nu}=R_{\mu\nu\alpha\beta}.
\end{equation}
\end{subequations}

\subsection{Ricci tensor and scalar curvature}
The Ricci tensor is defined from the contraction of the first and third index of the Riemann curvature tensor:
\begin{equation}\label{def:ricci}
  R_{\alpha\beta}\coloneq R\indices{^\mu_{\alpha\mu\beta}}.
\end{equation}
It is symmetric due to \eqref{eq:reim.sym.2}. Contracting the two indices of the Ricci tensor with the metric gives the scalar curvature
\begin{equation}\label{def:sc}
  R\coloneq R\indices{^\alpha_\alpha}.
\end{equation}

\section{Calculations}\label{app:calc}
A derivation of the non-differential form relations and the nonlinear relations in $3+1$ Einstein is recorded here.
\subsection{Non-differential form relations}\label{sec:nondiff.relations}

From the definition of $\mathcal{L}_\alpha$ \eqref{def:L}, \eqref{eq:wedge.hs} and \eqref{eq:kron.delta}, we get
\begin{align*}
  \mathcal{L}_\alpha&=-\frac{1}{2}\varepsilon_{\alpha\beta\mu\nu}\omega\indices{^{\beta\mu}_\gamma}\theta^\gamma\wedge\theta^\nu \\
  &=\frac{1}{2}\varepsilon_{\alpha\beta\mu\nu}\omega\indices{^{\beta\mu}_\gamma}(\frac{1}{2}\varepsilon^{\rho\tau\gamma\nu}\Sigma_{\rho\tau}) \\
  &=-\frac{1}{4}\delta^{\rho\tau\gamma}_{\alpha\beta\mu}\omega\indices{^{\beta\mu}_\gamma}\Sigma_{\rho\tau} \\
  &=-\frac{1}{2}(\delta^\rho_\alpha\omega\indices{^{\tau\gamma}_\gamma}-\delta^\tau_\alpha\omega\indices{^{\rho\gamma}_\gamma}+\omega\indices{^{\rho\tau}_\alpha})\Sigma_{\rho\tau},
\end{align*}
where we use the antisymmetry of $\omega$ in the first two indices \eqref{eq:metric.comp} for the last line. Using definition \eqref{def:hsBasis}, 
\begin{equation*}
\mathcal{L}_\alpha = -\frac{1}{2}(g_{\rho\alpha}\omega\indices{_\tau^\gamma_\gamma}-g_{\tau\alpha}\omega\indices{_\rho^\gamma_\gamma}+\omega\indices{_{\rho\tau\alpha}})\star(\theta^\rho\wedge \theta^\tau)
\end{equation*} 
whereby applying the Hodge star operator to this 2-form and recalling that $\star \star = -1$ on spacetime 2-forms \eqref{eq:double.hodge}, we see immediately that 
\begin{equation*}
\star\mathcal{L}_\alpha = \frac{1}{2}(g_{\rho\alpha}\omega\indices{_\tau^\gamma_\gamma}-g_{\tau\alpha}\omega\indices{_\rho^\gamma_\gamma}+\omega\indices{_{\rho\tau\alpha}})\theta^\rho\wedge \theta^\tau,
\end{equation*}
 or equivalently in component notation
 \begin{equation}\label{eq:starL}
\star\mathcal{L}\indices{_\alpha^\mu^\nu} = \delta^\mu_\alpha\omega\indices{^{\nu\gamma}_\gamma}-\delta^\nu_\alpha\omega\indices{^{\mu\gamma}_\gamma}+\omega\indices{^{\mu\nu}_\alpha}.
 \end{equation}
From \eqref{eq:star.decomp}, with $\mathcal{L}_\alpha$ a $2$-form, we get $\star \mathcal{L}_\alpha=-\bvec{n}^\flat\wedge\shs D_\alpha-\shs H_\alpha$; in components,
\begin{subequations}\label{eq:starL.decomp}
\begin{align}\label{eq:starL.decomp.1}
  \star\mathcal{L}\indices{_\alpha^\mu^\nu}
  &=-n^\mu\shs D\indices{_\alpha^\nu}+n^\nu\shs D\indices{_\alpha^\mu}-\shs H\indices{_\alpha^\mu^\nu}.
\end{align}
\end{subequations}

By \eqref{eq:omega.C}, $\omega\indices{^{\mu\nu}_\alpha}$ relates to $E_\alpha$ and $B_\alpha$ by 
\begin{align}\nonumber
  \omega\indices{^{\mu\nu}_\alpha}
  ={}&\frac{1}{2}
    (-\mathcal{C}\indices{^{\nu\mu}_{\alpha}}
     +\mathcal{C}\indices{^{\mu\nu}_{\alpha}}
     -\mathcal{C}\indices{_{\alpha}^{\mu\nu}})\\\label{eq:omegaEB}
  ={}&\frac{1}{2}\big[
    n^\mu E\indices{_\nu_\alpha}-n_\alpha E\indices{^\nu^\mu}- B\indices{^\nu^\mu_\alpha}\\\nonumber
    &-n^\nu E\indices{^\mu_\alpha}+n_\alpha E\indices{^\mu^\nu}+ B\indices{^\mu^\nu_\alpha}\\\nonumber
    &+n^\mu E\indices{_\alpha^\nu}-n^\nu E\indices{_\alpha^\mu}- B\indices{_\alpha^\mu^\nu}\big],
\end{align}
and the contraction $\omega\indices{^{\mu\alpha}_\alpha}$
\begin{equation}\label{eq:omegaEBcont}
  \omega\indices{^{\mu\alpha}_\alpha}
  =n^\mu E\indices{^\alpha_\alpha}+ E\indices{^0^\mu}- B\indices{^\alpha^\mu_\alpha}.
\end{equation}
Put \eqref{eq:omegaEB} and \eqref{eq:omegaEBcont} in \eqref{eq:starL} for
 \begin{align*}
\star\mathcal{L}\indices{_\alpha^\mu^\nu} 
  ={}&\delta^\mu_\alpha\omega\indices{^{\nu\gamma}_\gamma}-\delta^\nu_\alpha\omega\indices{^{\mu\gamma}_\gamma}+\omega\indices{^{\mu\nu}_\alpha} \\
  ={}&\delta^\mu_\alpha(n^\nu E\indices{^\gamma_\gamma}+ E\indices{^0^\nu}- B\indices{^\gamma^\nu_\gamma}) \\
  &-\delta^\nu_\alpha(n^\mu E\indices{^\gamma_\gamma}+ E\indices{^0^\mu}- B\indices{^\gamma^\mu_\gamma})\\ 
  &+\frac{1}{2}\big(
    n^\mu E\indices{_\nu_\alpha}-n_\alpha E\indices{^\nu^\mu}- B\indices{^\nu^\mu_\alpha}\\
    &-n^\nu E\indices{^\mu_\alpha}+n_\alpha E\indices{^\mu^\nu}+ B\indices{^\mu^\nu_\alpha}\\
    &+n^\mu E\indices{_\alpha^\nu}-n^\nu E\indices{_\alpha^\mu}- B\indices{_\alpha^\mu^\nu}\big).
 \end{align*}
We now take normal and tangential projections to get the relations between $\shs D$, $\shs H$ and $E$, $B$.
\begin{itemize}
  \item[Case 1]Contract with $n^\alpha$, $n_\mu$ and $h\indices{_\nu^j}$ (set $\nu$ spatial index), this extracts $\shs D_0$ from $\star\mathcal{L}_\alpha$.\\
    \begin{align*}
       \shs D\indices{_0^j}={}&-  E\indices{^0^j}+ B\indices{^\gamma^j_\gamma}+\frac{1}{2}\big(E\indices{^0^j}- E\indices{_0^j}\big) \\
      ={}& B\indices{^\gamma^j_\gamma}
    \end{align*}
  \item[Case 2]Contract with $h\indices{^\alpha_k}$, $n_\mu$ and $h\indices{_\nu^j}$ (set $\alpha$, $\nu$ spatial indices), this extracts $\shs D_k$ from $\hs\mathcal{L}_\alpha$:\\
  \begin{align*}
    \shs D\indices{_k^j} 
    ={}&\delta^j_k E\indices{^\gamma_\gamma}+\frac{1}{2}\big(-E\indices{^j_k}- B\indices{^0^j_k}- E\indices{_k^j}\big) \\
    ={}&\delta^j_k E\indices{^i_i}-\frac{1}{2}\big(E\indices{^j_k}+ E\indices{_k^j}\big)-\frac{1}{2} B\indices{^0^j_k}.
   \end{align*}
 
  \item[Case 3]Contract with $n^\alpha$, $h\indices{_\mu^i}$, $h\indices{_\nu^j}$ (set $\mu$, $\nu$ spatial indices), this extracts $-{\hs}H_0$ from $\star\mathcal{L}_\alpha$:\\
  \begin{align*}
  -{ \hs}H\indices{_0^i^j} 
    ={}&\frac{1}{2}\big(E^{ji}- E^{ij}- B\indices{_0^{ij}}\big).
  \end{align*}
        
  \item[Case 4]Contract with $h\indices{^\alpha_k}$, $h\indices{_\mu^i}$, $h\indices{_\nu^j}$ (set $\alpha$, $\mu$, $\nu$ spatial indices), this extracts $-\star H_k$ from $\star\mathcal{L}_\alpha$:\\
  \begin{align*}
  -(\star H_k)^{ij} 
    ={}&\delta^i_k(E\indices{^0^j}- B\indices{^\gamma^j_\gamma})-\delta^j_k( E\indices{^0^i}- B\indices{^\gamma^i_\gamma})+\frac{1}{2}\big(
    - B\indices{^j^i_k}+ B\indices{^i^j_k}- B\indices{_k^{ij}}\big) \\
    ={}&(\delta^i_k  E\indices{^0^j}-\delta^j_k E\indices{^0^i})-(\delta^i_k B\indices{^s^j_s}-\delta^j_k B\indices{^s^i_s})+\frac{1}{2}\big(
    - B\indices{^j^i_k}+ B\indices{^i^j_k}- B\indices{_k^{ij}}\big).
  \end{align*}
\end{itemize}

Now to invert the relations. For the inverse relations we use 
\begin{equation}\label{eq:C.omega}
  \mathcal{C}\indices{^\alpha_{\mu\nu}}=\omega\indices{^\alpha_{\mu\nu}}-\omega\indices{^\alpha_{\nu\mu}}
\end{equation}
from \eqref{eq:comm.coeff}. Contracting $\alpha$ and $\nu$ in \eqref{eq:starL} shows
\begin{equation}\label{eq:L.contract}
  \star\mathcal{L}\indices{_\alpha^\mu^\alpha}=(\omega\indices{^{\mu\gamma}_\gamma}-4\omega\indices{^{\mu\gamma}_\gamma}+\omega\indices{^{\mu\alpha}_\alpha})=-2\omega\indices{^{\mu\alpha}_\alpha},
\end{equation}
then this lets us invert \eqref{eq:starL} for an expression for $\omega\indices{^{\mu\nu}_\alpha}$ 
\begin{equation*}
  \omega\indices{^{\mu\nu}_\alpha}=\star\mathcal{L}\indices{_\alpha^\mu^\nu}+\frac{1}{2}\delta^\mu_\alpha{\hs}\mathcal{L}\indices{_\gamma^\nu^\gamma}-\frac{1}{2}\delta^\nu_\alpha{\hs}\mathcal{L}\indices{_\gamma^\mu^\gamma},
\end{equation*}
where the contraction
\begin{equation*}\label{eq:starL.decomp.2}
    \star\mathcal{L}\indices{_\gamma^\mu^\gamma}=-n^\mu \shs D\indices{_\gamma^\gamma}+ \shs D\indices{_0^\mu}- \shs H\indices{_\gamma^\mu^\gamma}.
\end{equation*}
Let us write
\begin{align}
  \omega\indices{^{\mu\nu}_\alpha}
  ={}&-n^\mu\shs D\indices{_\alpha^\nu}+n^\nu\shs D\indices{_\alpha^\mu}-\shs H\indices{_\alpha^\mu^\nu} \nonumber\\
  &+\frac{1}{2}\delta^\mu_\alpha(-n^\nu \shs D\indices{_\gamma^\gamma}+ \shs D\indices{_0^\nu}- \shs H\indices{_\gamma^\nu^\gamma}) \nonumber\\
  &-\frac{1}{2}\delta^\nu_\alpha(-n^\mu \shs D\indices{_\gamma^\gamma}+ \shs D\indices{_0^\mu}- \shs H\indices{_\gamma^\mu^\gamma}). \label{eq:omegaDH}
\end{align}
Thus \eqref{eq:C.omega} is
\begin{align*}
  \mathcal{C}\indices{^\alpha_{\mu\nu}}
  ={}&\omega\indices{^\alpha_{\mu\nu}}-\omega\indices{^\alpha_{\nu\mu}} \\
  ={}&-n^\alpha\shs D\indices{_\nu_\mu}+n_\mu\shs D\indices{_\nu^\alpha}-\shs H\indices{_\nu^\alpha_\mu}+\frac{1}{2}\delta^\alpha_\nu(-n_\mu \shs D\indices{_\gamma^\gamma}+\shs D\indices{_0_\mu}-\shs H\indices{_\gamma_\mu^\gamma})\\
  &\cancel{-\frac{1}{2}g_{\mu\nu}(-n^\alpha \shs D\indices{_\gamma^\gamma}+\shs D\indices{_0^\alpha}-\shs H\indices{_\gamma^\alpha^\gamma})} \\
  &-\big[-n^\alpha\shs D\indices{_\mu_\nu}+n_\nu\shs D\indices{_\mu^\alpha}-\shs H\indices{_\mu^\alpha_\nu}+\frac{1}{2}\delta^\alpha_\mu(-n_\nu \shs D\indices{_\gamma^\gamma}+\shs D\indices{_0_\nu}-\shs H\indices{_\gamma_\nu^\gamma})\\
  &\cancel{-\frac{1}{2}g_{\nu\mu}(-n^\alpha \shs D\indices{_\gamma^\gamma}+\shs D\indices{_0^\alpha}-\shs H\indices{_\gamma^\alpha^\gamma})}\big] \\
  ={}&-n^\alpha\shs D\indices{_\nu_\mu}+n_\mu\shs D\indices{_\nu^\alpha}-\shs H\indices{_\nu^\alpha_\mu}+\frac{1}{2}\delta^\alpha_\nu(-n_\mu \shs D\indices{_\gamma^\gamma}+\shs D\indices{_0_\mu}-\shs H\indices{_\gamma_\mu^\gamma})\\
  &n^\alpha\shs D\indices{_\mu_\nu}-n_\nu\shs D\indices{_\mu^\alpha}+\shs H\indices{_\mu^\alpha_\nu}-\frac{1}{2}\delta^\alpha_\mu(-n_\nu \shs D\indices{_\gamma^\gamma}+\shs D\indices{_0_\nu}-\shs H\indices{_\gamma_\nu^\gamma}).
\end{align*}
Now taking the same contractions as above to get $E_\alpha$, $B_\alpha$ in terms of $\shs D_\alpha$ and $\shs H_\alpha$, we obtain the following.
\begin{itemize}
  \item[Case 1] Contract with $n_\alpha$, $n^\mu$ and $h\indices{^\nu_j}$ (set $\nu$ spatial index) to extract $E_0$:\\
        \begin{align*}
          E\indices{_0_j}=-\frac{1}{2}\shs D\indices{_0_j}-\frac{1}{2}\shs H\indices{_\gamma_j^\gamma}.
        \end{align*}
        
  \item[Case 2] Contract with $h\indices{_\alpha^k}$, $n^\mu$ and $h\indices{^\nu_j}$ (set $\alpha$, $\nu$ spatial index) to extract $E^k$:\\
\begin{align*}
           E\indices{^k_j}={}&-\shs D\indices{_j^k}+\frac{1}{2}\delta^k_j \shs D\indices{_\gamma^\gamma}+\shs H\indices{_0^k_j}.
        \end{align*}
        
  \item[Case 3] Contract with $n_\alpha$, $h\indices{^\mu_i}$ and $h\indices{^\nu_j}$ (set $\mu$, $\nu$ spatial index) to extract $B_0$:\\
\begin{align*}
           B\indices{_0_i_j}={}& \shs D_{ji}- \shs D_{ij}.
        \end{align*}
        
  \item[Case 4] Contract with $h\indices{_\alpha^k}$, $h\indices{^\mu_i}$ and $h\indices{^\nu_j}$ (set $\alpha$, $\mu$, $\nu$ spatial index) to extract $B^k$:\\
\begin{align*}
           B\indices{^k_i_j}={}&-\shs H\indices{_j^k_i}+\frac{1}{2}\delta^k_j(\shs D_{0i}-\shs H\indices{_\gamma_i^\gamma})+\shs H\indices{_i^k_j}-\frac{1}{2}\delta^k_i(\shs D\indices{_0_j}-\shs H\indices{_\gamma_j^\gamma}) \\
          =&{}\shs H\indices{_i^k_j}-\shs H\indices{_j^k_i}+\frac{1}{2}(\delta^k_j\shs D_{0i}-\delta^k_i\shs D\indices{_0_j})-\frac{1}{2}(\delta^k_j\shs H\indices{_s_i^s}-\delta^k_i\shs H\indices{_s_j^s}).
        \end{align*}
\end{itemize}
\subsection{Nonlinear relations} \label{sec:nonlinear.relations}
By definition \eqref{eq:sparling}, $\mathcal{S}_\alpha$ can be written (we shuffle a bit the indices in the second term compared to the definition)
\begin{align*}
  \mathcal{S}_\alpha
  ={}&\frac{1}{2}\omega\indices{^\beta_\gamma}\wedge\omega^{\gamma\mu}\wedge\Sigma_{\alpha\beta\mu}
    -\frac{1}{2}\varepsilon_{\alpha\beta\mu\gamma}\omega\indices{^\beta_\rho}\wedge\omega^{\gamma\mu}\wedge\theta^\rho \\
  ={}&\frac{1}{2}(\varepsilon_{\alpha\beta\mu\rho}\omega\indices{^\beta_{\gamma\zeta}}\omega\indices{^{\gamma\mu}_\xi}
    -\varepsilon_{\alpha\beta\mu\gamma}\omega\indices{^\beta_{\rho\zeta}}\omega\indices{^{\gamma\mu}_\xi})\theta^\zeta\wedge\theta^\xi\wedge\theta^\rho \\
  ={}&-\frac{1}{2}(\varepsilon_{\alpha\beta\mu\rho}\omega\indices{^\beta_{\gamma\zeta}}\omega\indices{^{\gamma\mu}_\xi}
    -\varepsilon_{\alpha\beta\mu\gamma}\omega\indices{^\beta_{\rho\zeta}}\omega\indices{^{\gamma\mu}_\xi})\varepsilon^{\pi\zeta\xi\rho}\Sigma_\pi \\
  ={}&\frac{1}{2}(\delta_{\alpha\beta\mu}^{\pi\zeta\xi}\omega\indices{^\beta_{\gamma\zeta}}\omega\indices{^{\gamma\mu}_\xi}-\delta^{\pi\zeta\xi\rho}_{\alpha\beta\mu\gamma}\omega\indices{^\beta_{\rho\zeta}}\omega\indices{^{\gamma\mu}_\xi})\Sigma_\pi.
\end{align*}
Notice that in the second term, we can rename indices and expand $\delta$  to cancel the first term above with the first term below, and the third term is $0$ because $\omega$ is antisymmetric in the first two indices,
\begin{equation*}
  \delta^{\pi\zeta\xi\rho}_{\alpha\beta\mu\gamma}\omega\indices{^\beta_{\rho\zeta}}\omega\indices{^{\gamma\mu}_\xi}
  =\delta^{\pi\zeta\xi}_{\alpha\beta\mu}\omega\indices{^\beta_{\gamma\zeta}}\omega\indices{^{\gamma\mu}_\xi}
  -\delta^{\pi\zeta\xi}_{\alpha\beta\gamma}\omega\indices{^\beta_{\mu\zeta}}\omega\indices{^{\gamma\mu}_\xi}
  +\cancel{\delta^{\pi\zeta\xi}_{\alpha\mu\gamma}\omega\indices{^\beta_{\beta\zeta}}\omega\indices{^{\gamma\mu}_\xi}}
  -\delta^{\pi\zeta\xi}_{\beta\mu\gamma}\omega\indices{^\beta_{\alpha\zeta}}\omega\indices{^{\gamma\mu}_\xi},
\end{equation*}
so
\begin{align*}
  \mathcal{S}_\alpha
  =\frac12(\delta^{\pi\zeta\xi}_{\alpha\beta\gamma}\omega\indices{^\beta_{\mu\zeta}}\omega\indices{^{\gamma\mu}_\xi}
  +\delta^{\pi\zeta\xi}_{\beta\mu\gamma}\omega\indices{^\beta_{\alpha\zeta}}\omega\indices{^{\gamma\mu}_\xi})\Sigma_\pi.
\end{align*}
Expand the Kronecker deltas, and collect the terms antisymmetric in the last two indices to get 
\begin{align*}
  \delta^{\pi}_{\alpha}\omega\indices{^\beta_{\mu[\beta}}\omega\indices{^{\gamma\mu}_{\gamma]}}
  {}&-\omega\indices{^\pi_{\mu[\alpha}}\omega\indices{^{\gamma\mu}_{\gamma]}}+\omega\indices{^\beta_{\mu[\alpha}}\omega\indices{^{\pi\mu}_{\beta]}}
  +\omega\indices{^\pi_{\alpha\mu}}\omega\indices{^{\gamma\mu}_\gamma}
  +\omega\indices{^\beta_{\alpha\beta}}\omega\indices{^{\pi\gamma}_\gamma}
  +\omega\indices{^\beta_{\alpha\gamma}}\omega\indices{^{\gamma\pi}_\beta} \\
  ={}& -\delta^{\pi}_{\alpha}\omega\indices{^\beta_{\mu[\beta}}\omega\indices{^{\mu\gamma}_{\gamma]}}
  +\omega\indices{^\beta_{\mu\alpha}}\omega\indices{^{\pi\mu}_\beta}
  -\omega\indices{^\pi_{\mu\alpha}}\omega\indices{^{\gamma\mu}_\gamma}
  +\omega\indices{^\pi_{\alpha\mu}}\omega\indices{^{\gamma\mu}_\gamma}
  +\omega\indices{^\beta_{\alpha\beta}}\omega\indices{^{\pi\gamma}_\gamma}
  -\omega\indices{^\beta_{\alpha\mu}}\omega\indices{^{\pi\mu}_\beta}\\
  ={}& -\delta^{\pi}_{\alpha}\omega\indices{^\beta_{\mu[\beta}}\omega\indices{^{\mu\gamma}_{\gamma]}}
  +2\omega\indices{^\beta_{[\mu\alpha]}}\omega\indices{^{\pi\mu}_\beta}
  +2\omega\indices{^\pi_{[\mu\alpha]}}\omega\indices{^{\mu\gamma}_\gamma}
  +\omega\indices{^\beta_{\alpha\beta}}\omega\indices{^{\pi\gamma}_\gamma}  
\end{align*}
Using the relations of $\star\mathcal{L}_\alpha$ \eqref{eq:starL} and $\mathcal{C}^\alpha$ \eqref{eq:C.omega} with the connection $1$-forms, we see
\begin{align*}
  \mathcal{C}\indices{^\alpha_{\mu\nu}}{\hs}\mathcal{L}\indices{_\alpha^\mu^\nu}   
  ={}&2\omega\indices{^\alpha_{\mu\nu}}(\delta^\mu_\alpha\omega\indices{^{\nu\gamma}_\gamma}-\delta^\nu_\alpha\omega\indices{^{\mu\gamma}_\gamma}+\omega\indices{^{\mu\nu}_\alpha})\\
  ={}&2(-\omega\indices{^\nu_{\mu\nu}}\omega\indices{^{\mu\gamma}_\gamma}+\omega\indices{^\alpha_{\mu\nu}}\omega\indices{^{\mu\nu}_\alpha}) \\
  ={}&-4\omega\indices{^\nu_{\mu[\nu}}\omega\indices{^{\mu\gamma}_{\gamma]}}, \\
  \mathcal{C}\indices{^\alpha_{\mu\nu}}{\hs}\mathcal{L}\indices{_\alpha^\mu^\beta}
  ={}&2\omega\indices{^\alpha_{[\mu\nu]}}(\delta^\mu_\alpha\omega\indices{^{\beta\gamma}_\gamma}-\delta^\beta_\alpha\omega\indices{^{\mu\gamma}_\gamma}+\omega\indices{^{\mu\beta}_\alpha}) \\
  ={}&-\omega\indices{^\mu_{\nu\mu}}\omega\indices{^{\beta\gamma}_\gamma}
  		-2\omega\indices{^\beta_{[\mu\nu]}}\omega\indices{^{\mu\gamma}_\gamma}
  		+2\omega\indices{^\alpha_{[\mu\nu]}}\omega\indices{^{\mu\beta}_\alpha},
\end{align*}
and putting this into the expansion of $\mathcal{S}_\alpha$ gives
\begin{equation}\label{eq:sparl.comp}
  \mathcal{S}_\alpha=\underbrace{(\frac14\delta^\pi_\alpha\mathcal{C}\indices{^{\beta}_{\mu\nu}}{\hs}\mathcal{L}\indices{_\beta^\mu^\nu}-\mathcal{C}\indices{^\beta_{\mu\alpha}}{\hs}\mathcal{L}\indices{_\beta^\mu^\pi})}_{{\hs}\mathcal{S}\indices{_\alpha^\pi}}\frac{1}{3!}\varepsilon_{\pi\gamma\rho\tau}\theta^{\gamma}\wedge\theta^{\rho}\wedge\theta^{\tau}.
\end{equation}
By the normal/tangential decompositions of $\mathcal{C}^\beta$ \eqref{eq:C}, $\star\mathcal{L}_{\beta}$ \eqref{eq:starL.decomp.1}, we can work out the normal/tangential components of the formula \eqref{eq:star.decomp}
\begin{align*}
  \star\mathcal{S}_\alpha&=\bvec{n}^\flat\wedge\shs  V_\alpha - \shs  U_\alpha, \\
  {\hs}\mathcal{S}\indices{_\alpha^\pi}&=n^\pi\shs  V_\alpha-{\hs}U\indices{_\alpha^\pi},
\end{align*}
and expanding \eqref{eq:sparl.comp}
\begin{align*}
  {\hs}\mathcal{S}\indices{_\alpha^\pi}
  ={}&\frac14\delta^\pi_\alpha(-n_\mu E\indices{^\beta_\nu}+n_\nu E\indices{^\beta_\mu}+ B\indices{^\beta_\mu_\nu})(-n^\mu\shs D\indices{_\beta^\nu}+n^\nu\shs D\indices{_\beta^\mu}-\shs H\indices{_\beta^\mu^\nu})\\
  &-(-n_\mu E\indices{^\beta_\alpha}+n_\alpha E\indices{^\beta_\mu}+ B\indices{^\beta_\mu_\alpha})(-n^\mu\shs D\indices{_\beta^\pi}+n^\pi\shs D\indices{_\beta^\mu}-\shs H\indices{_\beta^\mu^\pi}) \\
  ={}&\frac14\delta^\pi_\alpha(-2 E\indices{^\beta_\nu}\shs D\indices{_\beta^\nu}-2\shs B\indices{^\beta_\mu} H\indices{_\beta^\mu})\\
  &+ E\indices{^\beta_\alpha}\shs D\indices{_\beta^\pi}-n_\alpha n^\pi E\indices{^\beta_\mu}\shs D\indices{_\beta^\mu}+n_\alpha E\indices{^\beta_\mu}\shs H\indices{_\beta^\mu^\pi}-n^\pi B\indices{^\beta_\mu_\alpha}\shs D\indices{_\beta^\mu}\\
  &+ B\indices{^\beta_\mu_\alpha}\shs H\indices{_\beta^\mu^\pi}.
\end{align*}
Taking projections, we obtain the following.
\begin{itemize}
  \item[Case 1] Contract ${\hs}\mathcal{S}\indices{_\alpha^\pi}$ with $n^\alpha$, $n_\pi$ to get $-\shs  V_0$:
    \begin{align*}
    -\shs  V_0
    ={}&-\frac14(-2 E\indices{^\beta_\nu}\shs D\indices{_\beta^\nu}-2\shs B\indices{^\beta_\mu} H\indices{_\beta^\mu})- E\indices{^\beta_\mu}\shs D\indices{_\beta^\mu} \\
    ={}&-\frac12 E\indices{^\beta_\mu}\shs D\indices{_\beta^\mu}+\frac12 H\indices{_\beta^\mu}\shs B\indices{^\beta_\mu}.
    \end{align*}
    
  \item[Case 2] Contract ${\hs}\mathcal{S}\indices{_\alpha^\pi}$ with $h\indices{^\alpha_i}$, $n_\pi$ to get $-\shs  V_i$ ($i$ spatial index):
  \begin{align*}
    -\shs  V_i
    &= B\indices{^\beta_\mu_i}\shs D\indices{_\beta^\mu}.
   \end{align*}
   
  \item[Case 3] Contract ${\hs}\mathcal{S}\indices{_\alpha^\pi}$ with $n^\alpha$, $h\indices{^j_\pi}$ to get $-\shs U\indices{_0^j}$:
  \begin{align*}
    -\shs U\indices{_0^j}
    &=- E\indices{^\beta_\mu}\shs H\indices{_\beta^\mu^i}.
  \end{align*}
  
  \item[Case 4] Contract ${\hs}\mathcal{S}\indices{_\alpha^\pi}$ with $h\indices{^\alpha_i}$, $h\indices{^j_\pi}$ to get $-(\shs  U_i)^j$ ($i,j$ spatial indices):
  \begin{align*}
    -(\shs  U_i)^j
    &=\delta^j_i(-\frac12 E\indices{^\beta_\nu}\shs D\indices{_\beta^\nu}-\frac12\shs B\indices{^\beta_\mu} H\indices{_\beta^\mu})+ E\indices{^\beta_i}\shs D\indices{_\beta^j}+ B\indices{^\beta_\mu_i}\shs H\indices{_\beta^\mu^j} \\
    &=\delta^j_i(-\frac12 E\indices{^\beta_k}\shs D\indices{_\beta^k}+\frac12 H\indices{^\beta^l}\shs B\indices{_\beta_l})+ E\indices{^\beta_i}\shs D\indices{_\beta^j}- H\indices{^\beta_i}\shs B\indices{_\beta^j}.
  \end{align*}
\end{itemize}

\section{Discrete nonlinear formulas}\label{sec:disc.formulas}

In ECDDR, we work on forms that have polynomial components with respect to the canonical basis $(\ed x^I)_{I=1}^3$. When applying the formulas in Section \ref{sec:non.diff.rel}, that only hold on components in the $(\theta^i)_{i=1}^3$ basis, an added change of basis must be performed. We list here the process and full formulas used.

First we use the approximation
\begin{equation*}
	\theta^i \approx [\Pec{1}{r,h}\underline{\theta}_h^i]_J\ed x^J\,,\qquad \ed x^I = [\ed x^I]_j\theta^j
\end{equation*}
where $[\ed x^I]_j\approx[(\Pec{1}{r,h}\underline{\theta}_h)^{-1}]\indices{^I_j}$, which is justified since $\delta^I_K = [\ed x^I]_j[\theta^j]_K\approx [\ed x^I]_j[\Pec{1}{r,h}\underline{\theta}_h^j]_K$. Using these in the change of basis, we calculate
\begin{align*}
	B\indices{^i}&\approx[\ued{1}{r,h}\underline{\theta}_h^i]_{IJ}\ed x^I\wedge\ed x^J \approx\underbrace{[\ued{1}{r,h}\underline{\theta}_h^i]_{IJ}[\Pec{1}{r,h}\underline{\theta}_h^{-1}]\indices{^I_j}[\Pec{1}{r,h}\underline{\theta}_h^{-1}]\indices{^J_k}}_{\approx B\indices{^i_{jk}}} \theta^j\wedge\theta^k,  \\
	D\indices{^i}&\approx[\chs \Pec{1}{r,h}\underline{\chs D}_h^i]_{IJ}\ed x^I\wedge\ed x^J\approx\underbrace{[\chs \Pec{1}{r,h}\underline{\chs D}_h^i]_{IJ}[\Pec{1}{r,h}\underline{\theta}_h^{-1}]\indices{^I_j}[\Pec{1}{r,h}\underline{\theta}_h^{-1}]\indices{^J_k}}_{\approx D\indices{^i_{jk}}} \theta^j\wedge\theta^k.
\end{align*}
Taking then the $\shs$, which is just a rearrangement of the coefficients $[(\shs\omega)_{1}, (\shs\omega)_{2}, (\shs\omega)_{3}]=[\omega_{23}, -\omega_{13}, \omega_{12}]$, we get the approximations of $\shs B^i$, $\shs D^i$, that are plugged into the formulas
\begin{equation*}
	H_{kl} =\shs B_{lk}-\frac{1}{2}\delta_{kl}\shs B\indices{_i^i}, \qquad
	E_{kl} =-\big(\shs D_{lk}-\frac{1}{2}\delta_{kl}\shs D\indices{_i^i}\big).
\end{equation*}
The final step is to return to the $\ed x^I$ basis by
\begin{align*}
	H^i(\ued{1}{r,h}\underline{\theta}_h, \underline{\theta}_h)\approx H\indices{^i_j}\theta^j\approx \underbrace{H\indices{^i_j}[\Pec{1}{r,h}\underline{\theta}_h^j]_I}_{H\indices{^i_I}} \ed x^I, \\
	E^i(\Pec{1}{r,h}\underline{\chs D}_h(n), \underline{\theta}_h(n))\approx \underbrace{E\indices{^i_j}\theta^j\approx E\indices{^i_j}[\Pec{1}{r,h}\underline{\theta}_h^j]_I}_{E\indices{^i_I}} \ed x^I.
\end{align*}


\printbibliography

\end{document}